\documentclass[11pt]{article}
\usepackage{amssymb, amsthm}
\usepackage{algorithm}
\usepackage{algpseudocode}
\usepackage{amsmath, amssymb, amsthm}
\usepackage{float,multirow,subcaption, setspace}
\usepackage{graphicx}
\usepackage{comment}
\usepackage{enumitem}
\usepackage{tikz}
\usepackage{bm, bbm}
\usetikzlibrary{shapes.geometric, positioning}
\usetikzlibrary{quotes, angles}
\usepackage{rotating}
\usepackage{hyperref}
\usepackage{natbib}
\usepackage{multicol}
\usepackage{mathtools}
\usepackage{nicefrac}
\usepackage{mathrsfs}

\definecolor{SkyBlue}{RGB}{14, 118, 188}
\definecolor{BrightRed}{RGB}{223, 82, 78}
\definecolor{Green638}{RGB}{165,255,118} 
\definecolor{BurntOrange}{HTML}{BF5900}






\newcommand{\R}{\mathbb{R}} 
\newcommand{\E}{\mathbb{E}} 
\def\P{\mathbb{P}} 

\newcommand{\calP}{\mathcal{P}} 

\newcommand{\calC}{\mathcal{C}}
\newcommand{\calA}{\mathcal{A}}

\newcommand{\calE}{\mathcal{E}}
\newcommand{\calL}{\mathcal{L}}
\newcommand{\calM}{\mathcal{M}}
\newcommand{\calT}{\mathcal{T}}

\newcommand{\calI}{\mathcal{I}}

\newcommand{\calH}{\mathcal{H}}


\newcommand{\hmean}[1]{\textrm{hm}\left(#1\right)} 

\newcommand{\normaldist}[2]{\mathcal{N}\left(#1,#2\right)} 
\newcommand{\gammadist}[2]{\textrm{Gamma}\left(#1,#2\right)} 
\newcommand{\poisdist}[1]{\textrm{Poisson}\left(#1\right)} 
\newcommand{\unifdist}[2]{\textrm{Uniform}\left(#1,#2\right)} 

\newcommand{\by}{\bm{y}}
\newcommand{\bx}{\bm{x}}
\newcommand{\bz}{\bm{z}}

\newcommand{\br}{\bm{r}}




\newcommand{\bbeta}{\boldsymbol{\beta}}
\newcommand{\bmu}{\boldsymbol{\mu}}
\newcommand{\bomega}{\boldsymbol{\omega}}

\newcommand{\balpha}{\boldsymbol{\alpha}}

\theoremstyle{plain}
\newtheorem{theorem}{Theorem}

\newtheorem{lemma}{Lemma}

\theoremstyle{definition}
\newtheorem{definition}{Definition}

\newtheorem{ex}{Example}

\theoremstyle{plain}
\newtheorem{remark}[theorem]{Remark}

\usepackage{fullpage, parskip} 
\usepackage{cleveref}

\hypersetup{pdfborder = {0 0 0.5 [3 3]}, colorlinks = true, linkcolor = BrightRed, citecolor = SkyBlue}

\title{Scalable piecewise smoothing with BART}
\author{Ryan Yee\thanks{Department of Statistics, University of Wisconsin--Madison. \url{ryee2@wisc.edu}} \and Soham Ghosh\thanks{Department of Statistics, University of Wisconsin--Madison. \url{sghosh39@wisc.edu}} \and Sameer K.\ Deshpande\thanks{Department of Statistics, University of Wisconsin--Madison. \url{sameer.deshpande@wisc.edu}} }

\newcommand{\blind}{0}
\def\codelinkblind{\texttt{link to GitHub repository blinded for review}}
\def\codelinkunblind{\url{https://github.com/ryanyee3/ridgeBART}}
\def\codelink{
  \if\blind1
  \codelinkblind
  \else
  \codelinkunblind
  \fi
}

\newcommand{\jrssb}{0}

\newcommand{\switchref}[2]{%
  \if\jrssb0%
    #1%
  \else%
    #2%
\fi
}

\newcommand{\includesupp}{1}
\newcommand{\suppref}[2]{%
  \if\includesupp1%
    #1%
  \else%
    #2%
\fi
}

\begin{document}
\maketitle

\begin{abstract}
Although it is an extremely effective, easy-to-use, and increasingly popular tool for nonparametric regression, the Bayesian Additive Regression Trees (BART) model is limited by the fact that it can only produce discontinuous output.
Initial attempts to overcome this limitation were based on regression trees that output Gaussian Processes instead of constants. 
Unfortunately, implementations of these extensions cannot scale to large datasets. 
We propose ridgeBART, an extension of BART built with trees that output linear combinations of ridge functions (i.e., a composition of an affine transformation of the inputs and non-linearity); that is, we build a Bayesian ensemble of localized neural networks with a single hidden layer.
We develop a new MCMC sampler that updates trees in linear time and establish posterior contraction rates for estimating piecewise anisotropic H\"{o}lder functions and nearly minimax-optimal rates for estimating isotropic H\"{o}lder functions.
We demonstrate ridgeBART's effectiveness on synthetic data and use it to estimate the probability that a professional basketball player makes a shot from any location on the court in a spatially smooth fashion.

\end{abstract}

\section{Introduction}
\label{sec:intro}
\subsection{Motivation: Smooth NBA shot charts}
\label{sec:motivation}
In basketball, high-resolution player tracking data facilitates the creation of \emph{shot charts}, which visualize the probability with which a player makes a shot (hereafter field goal percentage or FG\%) from every point on the court.
Shot charts provide easy-to-interpret summaries of players' abilities and tendencies that can be used by coaches to design offensive and defensive strategies \citep{Hu2023}. 

\switchref{\Cref{fig:curry_shot_chart,fig:simmons_shot_chart}}{\Cref{fig:intro_shots}(A) and \Cref{fig:intro_shots}(B)} show the locations of made and missed shots for two players, Stephen Curry and Ben Simmons, from the 2023-24 National Basketball Association season.
One way to estimate shot charts is to divide the court into several small regions and to compute the proportion of shots each player makes in each region.
Unfortunately, such an approach can produce sharp discontinuities.
Consider region A in \switchref{\Cref{fig:curry_shot_chart}}{\Cref{fig:intro_shots}(A)}: Curry made three of his six attempts in this region but made all five of his attempts in the areas immediately to the right of region A.
The resulting sharp transition from 50\% to 100\% is highly undesirable as players' shooting abilities vary in a spatially smooth fashion \citep{Yin2022_shotchart,Yin2023_matrix}.
A further difficulty lies in predicting FG\% at locations where a player has not attempted a shot.
As an extreme example, Ben Simmons did not attempt a single shot within or near the region B in \switchref{\Cref{fig:simmons_shot_chart}}{\Cref{fig:intro_shots}(B)}.
Simple binning and averaging precludes principled prediction of Simmons' FG\% at these locations.

Bayesian hierarchical models are a conceptually simple way to overcome these difficulties.
For example, one can express each player's probit-transformed FG\% as a linear combination of fixed functions of locations (e.g., a tensor product of splines) and specify an exchangeable prior for the player-specific coefficients.
Such a model ``borrows strength'' across players so that every player's estimated FG\% at a particular location is informed not only by their own data but also by the estimated FG\%'s of all other players at that location.

Despite its merits, an exchangeable model can potentially smooth over important differences between players.
Specifically, players of different positions (i.e., guard, forward, center) often have vastly different tendencies and skillsets.
To account for these differences, it is tempting to first cluster players by position and fit separate exchangeable models within these clusters. 
This way, data about a tall center who only attempts shots near the basket is not used to make predictions about a smaller guard who predominantly shoots three-point shots.
Such an approach, which accounts for interactions between position and location, is unfortunately complicated by the extreme variation observed within positional groups. 
For instance, both Curry and Simmons are listed as point guards but are very different physically and in their style of play.
Simmons is listed at 6' 10", 240 lbs., which is above the league-average; uses his size to drive to the basket; and is considered a very poor shooter.
Curry, meanwhile, is listed at 6' 2", 185 lbs., which is below the league-average; and is considered one of the best shooters ever.

Ultimately, FG\% depends on complex interactions between location, player, and player characteristics like position, height, and weight.
The cases of Curry and Simmons underscore the difficulties faced in trying to specify these interactions correctly in a parametric model.
On this view, Bayesian Additive Regression Trees \citep[BART;][]{Chipman2010}, which approximates regression functions using an ensemble of binary regression trees, offers a compelling alternative.
With BART, users often obtain excellent predictive results and reasonably calibrated uncertainties ``off-the-shelf'' without needing to pre-specify the functional form of the regression function and with very little hyperparameter tuning.
{
\if\jrssb1%

\begin{figure}[ht]
\centering
\includegraphics[width = \textwidth]{figures/fig1}
\caption{Locations of shot attempts for Stephen Curry (A) and Ben Simmons (B). BART-estimated FG\% as a function for location for Curry (C) and Simmons (D).}
\label{fig:intro_shots}
\end{figure} %

\else%
\begin{figure}[ht]
\centering
\begin{subfigure}[b]{0.23\textwidth}
\centering
\includegraphics[width = \textwidth]{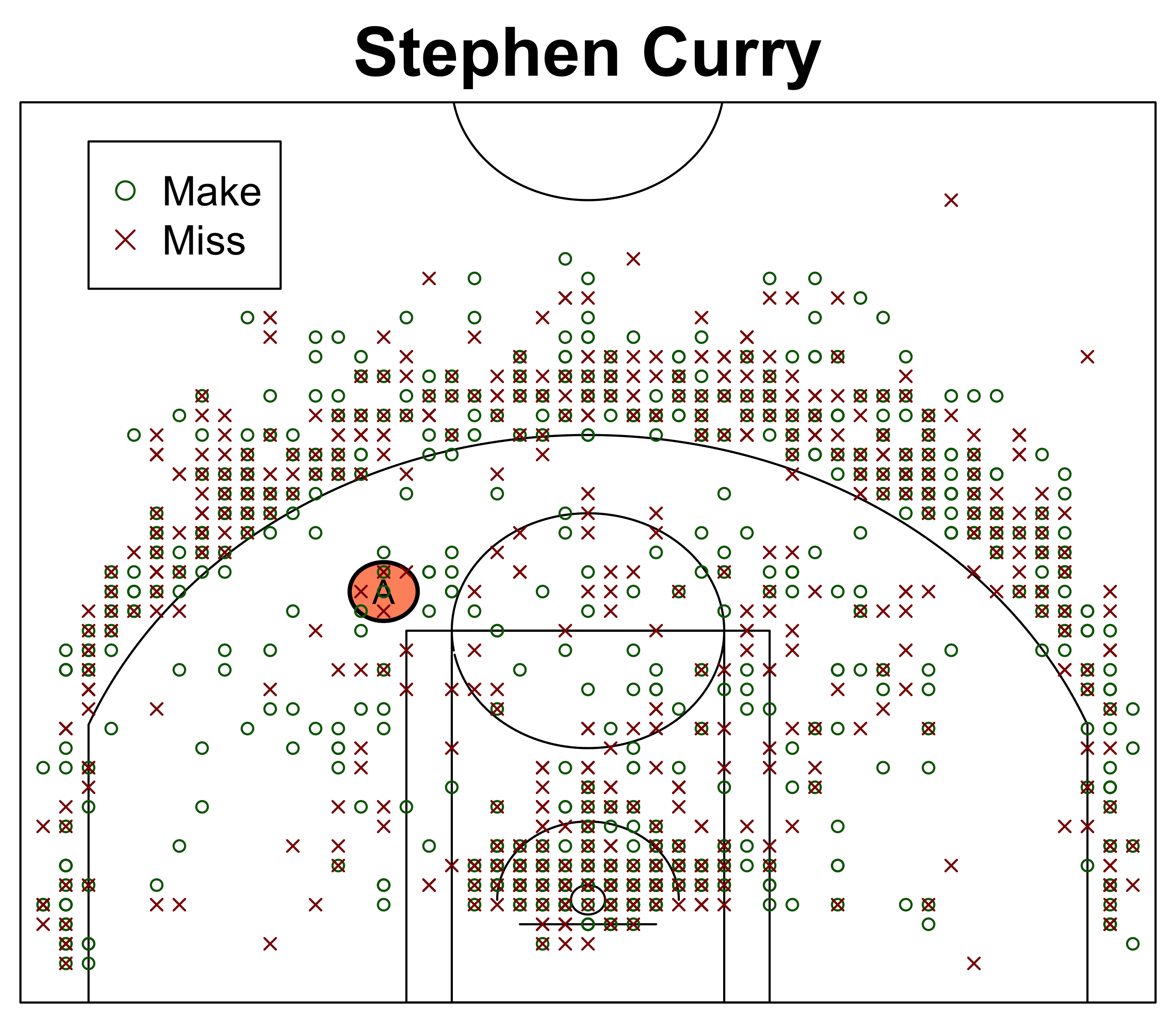}
\caption{}
\label{fig:curry_shot_chart}
\end{subfigure}
\begin{subfigure}[b]{0.23\textwidth}
\centering
\includegraphics[width = \textwidth]{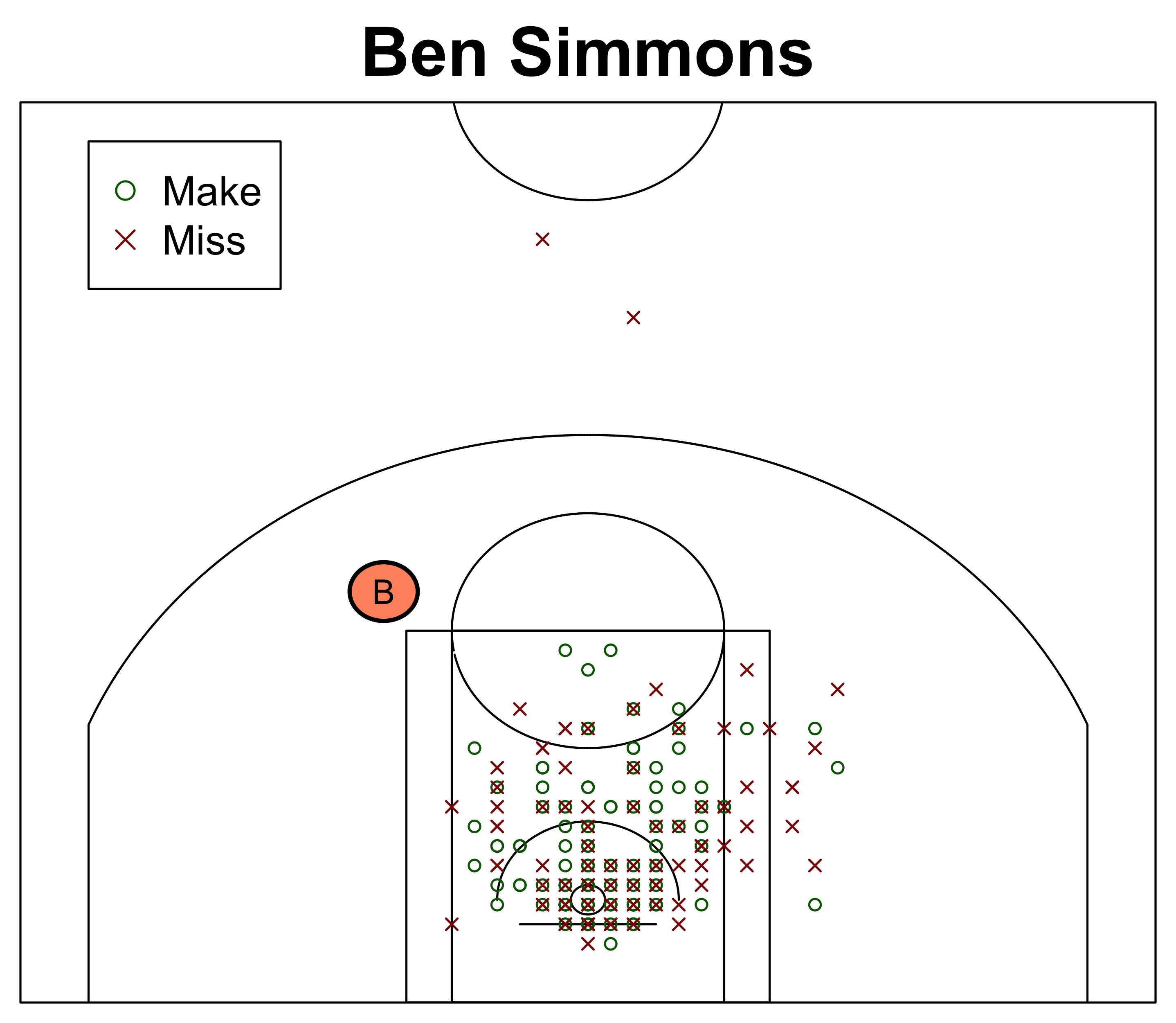}
\caption{}
\label{fig:simmons_shot_chart}
\end{subfigure}
\begin{subfigure}[b]{0.23\textwidth}
\centering
\includegraphics[width = \textwidth]{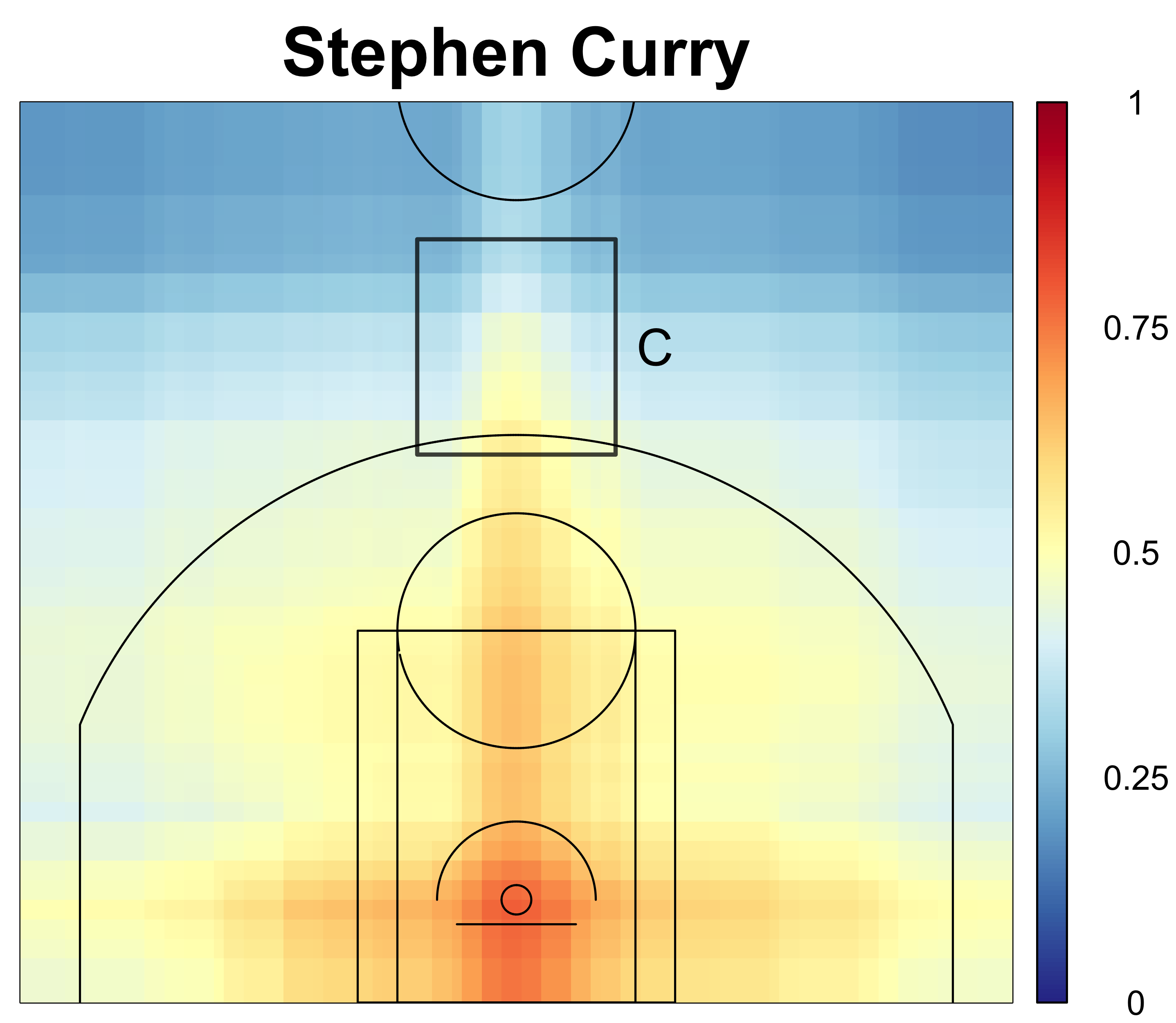}
\caption{}
\label{fig:curry_heatmap}
\end{subfigure}
\begin{subfigure}[b]{0.23\textwidth}
\centering
\includegraphics[width = \textwidth]{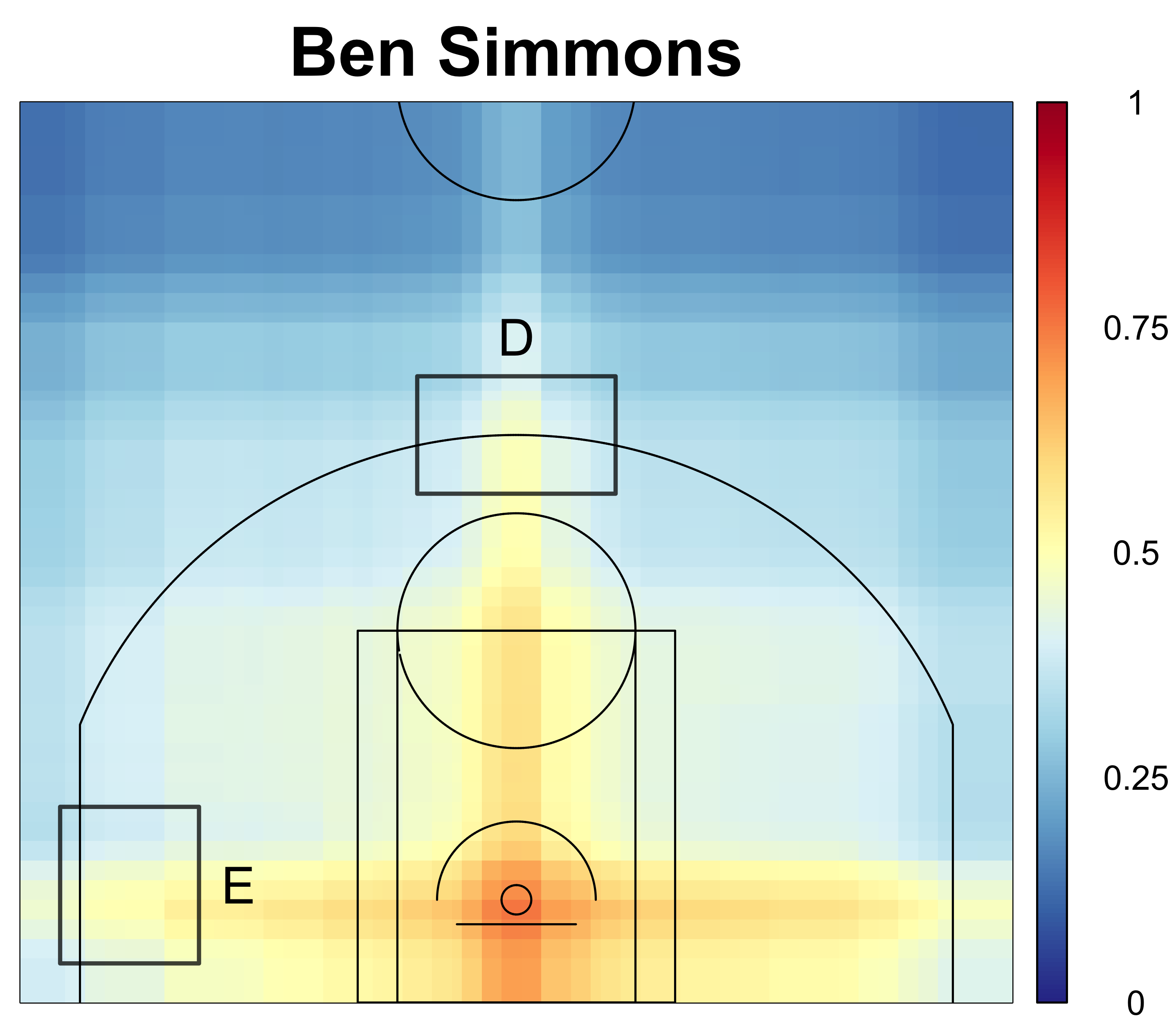}
\caption{}
\label{fig:simmons_heatmap}
\end{subfigure}
\caption{Locations of made and missed shots for Stephen Curry (a) and Ben Simmons (b). BART-estimated FG\% as a function for location for Curry (c) and Simmons (d).}
\label{fig:shot_chart}
\end{figure}%
\fi %
}

\switchref{\Cref{fig:curry_heatmap,fig:simmons_heatmap}}{\Cref{fig:intro_shots}(C) and \Cref{fig:intro_shots}(D)~} show the estimated shot charts for Curry and Simmons obtained from a BART model whose trees split on continuous features like shot location, player height, and player weight and categorical features like player name and player position. 
Since these trees are nothing more than piecewise constant step functions, these shot display contain sharp discontinuities in location. 
For instance, Curry's and Simmons' estimated FG\% jump from below 50\% (blue shading) to above 50\% (orange/red shading) and then back to below 50\% within regions C, D, and E in the figures.

Several authors have proposed BART extensions based on regression trees that output smooth functions including linear models \citep{Prado_2021}, splines \citep{Low-Kam2015, Cao2025_fbart}, and Gaussian Processes \citep[GPs;][]{Starling2020tsbart, maia-gp-bart}.
Given the ubiquity of splines and GPs in spatial statistics, the latter two types of extensions are attractive options for estimating shot charts.
Unfortunately, existing implementations either (i) only smooth in one dimension (\citet{Starling2020tsbart}'s tsBART and \citet{Low-Kam2015}'s and \citet{Cao2025_fbart}'s treed-spline models) or (ii) scale poorly and cannot be fit to a single's season of data consisting around 190,000 shots (tsBART and \citet{maia-gp-bart}'s GP-BART).

\subsection{Our Contribution}

We develop a novel extension of BART, which we call ridgeBART, that uses trees that output linear combinations of \emph{ridge functions}, which compose affine transformations of the inputs with a non-linearity.
Whereas the original BART model approximates functions with piecewise constant step functions, ridgeBART approximates functions with piecewise continuous functions. 
We introduce a new Gibbs sampler whose complexity scales linearly with the size of the data, offering substantial runtime improvements over GP-based BART extensions.
We further derive posterior contraction rates for the ridgeBART posterior in the setting where the true regression function lies in a piecewise anisotropic H\"{o}lder class.
Briefly, this class contains functions that display different degrees of smoothness with respect to each covariate in different rectangular sub-regions of $[0,1]^{p}.$
As a special case of our main result, we show that the ridgeBART posterior contracts at almost the minimax optimal rate when the true regression function belongs to the isotropic H\"{o}lder class $\calC^{\alpha}([0,1]^{p})$ for \emph{arbitrary} $\alpha > 0,$ losing only a $\log{(n)}$-factor.
An \textsf{R} implementation of ridgeBART is available at \codelink.

In \Cref{sec:review}, we review the basics of the original BART model and its GP-based extensions. 
Then, we formally introduce ridgeBART and derive our Gibbs sampling algorithm in \Cref{sec:methods} before studying its theoretical properties in \Cref{sec:theory}.
In \Cref{sec:experiments}, we assess ridgeBART's empirical performance using several synthetic datasets and use it to produce NBA shot charts.
We conclude in \Cref{sec:discussion} by discussing potential methodological and theoretical extensions.

\section{A review of BART and its GP-based extensions}
\label{sec:review}
Given $n$ observations from the model $y \sim \normaldist{f(\bx)}{\sigma^{2}}$ with $\bx \in [0,1]^{p},$ BART expresses $f$ as a sum of a large number of binary regression trees.

A \emph{regression tree} is a pair $(\calT, \calM)$ of (i) a finite, rooted binary decision tree $\calT$ consisting of $\calL(\calT)$ terminal or \emph{leaf} nodes and several non-terminal or \emph{decision} nodes; and (ii) a collection of \emph{jumps} $\calM = \{\mu_{1}, \ldots, \mu_{\calL(\calT)}\}$, one for each leaf node in $\calT.$
The decision tree partitions $[0,1]^{p}$ into several axis-parallel boxes, one for each leaf, and we can associate each $\bx \in [0,1]^{p}$ with a single leaf denoted $\ell(\bx; \calT).$
By further associating each leaf $\ell$ with a scalar jump $\mu_{\ell},$ the regression tree $(\calT, \calM)$ represents a piecewise constant step function.
We let $g(\bx; \calT, \calM) = \mu_{\ell(\bx; \calT)}$ denote the \emph{evaluation function}, which returns the jump associated to $\bx$'s leaf.

\subsection{The original BART model}
\label{sec:original_bart}

BART introduces an \emph{ensemble} $\calE = \{(\calT_{1},\calM_{1}), \ldots, (\calT_{M}, \calM_{M})\}$ of $M$ regression trees so that $f(\bx) = \sum_{m = 1}^{M}{g(\bx; \calT_{m}, \calM_{m})}.$
Fitting a BART model involves specifying a prior over $(\calE, \sigma)$ and computing summaries of the posterior distribution of $(\calE, \sigma) \vert \by$ using Markov chain Monte Carlo (MCMC). 

\textbf{The BART prior}. \citet{Chipman2010} specified a conjugate inverse gamma prior for $\sigma^{2}$ and independent and identical priors on the regression trees.
We describe that prior implicitly by explaining how to simulate a random prior sample.
First, we use a branching process to generate the graphical structure of $\calT$ (i.e., the arrangement of decision and leaf nodes).
In the branching process, the probability that a node at depth $d$ is non-terminal is $0.95(1 + d)^{-2},$ which ensures that $\calT$ is almost surely finite. 
Then, we draw decision rules for each non-terminal node conditionally given the rules at all of its ancestors in the tree, which ensures each leaf of $\calT$ corresponds to a non-empty subset of $[0,1]^{p}$ with prior probability one. 
Finally, conditionally on $\calT,$ we independently draw jumps $\mu_{\ell} \sim \normaldist{0}{\tau^{2}}.$
\citet{Chipman2010} suggested default values for $\tau$ and an inverse Gamma prior for $\sigma^{2}$ that  have proven highly effective across many applications.

\textbf{Sampling from the BART posterior}. \citet{Chipman2010} deployed a Gibbs sampler that iterated between conditionally sampling $\calE$ and $\sigma^{2}.$
Sampling $\sigma^{2} \vert \by, \calE$ involves a normal-inverse Gamma conjugate update.
For the more involved conditional update of $\calE,$ they used a Metropolis-within-Gibbs strategy that sequentially samples one regression tree at a time, keeping all others fixed, by first drawing a new decision tree marginally before conditionally sampling new jumps. 

For the decision tree update, most BART implementations use a Metropolis-Hastings step that involves randomly pruning or growing an existing decision tree.
Growing $\calT$ involves (i) selecting an existing leaf uniformly at random and connecting it to two new leaf nodes; (ii) drawing a new decision rule for the selected node; and (iii) leaving the rest of the tree unchanged.
Pruning $\calT,$ on the other hand, involves (i) deleting two randomly selected leaf nodes that share a common parent; (ii) deleting the decision rule of that parent; and (iii) leaving the rest of the tree unchanged.

The marginal decision tree density and the conditional jump distributions are available in closed-form and can be computed in $O(n)$ time.
To describe these calculations, suppose we are updating the $m$-th regression tree $(\calT_{m}, \calM_{m})$ while leaving the other $M-1$ regression trees, which we denote by $\calE^{(-m)},$ fixed. 
Let $r_{i}$ be the $i$-th \emph{partial residual} based on the fits of the other trees in $\calE^{(-m)}$ with $r_{i} = y_{i} - \sum_{m' \neq m}{g(\bx_{i}; \calT_{m'}, \calM_{m'})}.$
Additionally, for each leaf $\ell$ of an arbitrary decision tree $\calT,$ let $\calI(\ell; \calT) = \{i : \ell(\bx_{i}; T) = \ell\}$ be the set of indices for those observations associated with leaf $\ell$ and let $n_{\ell} = \lvert \calI(\ell;\calT)\rvert$ count the number of observations associated with leaf $\ell.$ 
Finally, let $\bm{r}_{\ell} = (r_{i}: i \in \calI(\ell;\calT))$ be the vector of partial residuals corresponding to observations in leaf $\ell.$
The conditional posterior distribution of $(\calT_{m}, \calM_{m})$ factorizes over the leafs of the tree:
\begin{equation}
\label{eq:vanillaBART_T_M_posterior}
p(\calT, \calM \vert \by, \calE^{(-m)}, \sigma^{2}) \propto p(\calT) \times \prod_{\ell}{\left[\tau^{-1}\exp\left\{-\frac{1}{2}\left[\sigma^{-2}\lVert \bm{r}_{\ell} - \mu_{\ell}\bm{1}_{n_{\ell}}\rVert_{2}^{2} + \tau^{-2}\mu_{\ell}^{2}\right]\right\}\right]}
\end{equation}
Marginalizing over the jumps $\mu_{\ell},$ we compute
\begin{equation}
\label{eq:vanillaBART_T_posterior}
p(\calT \vert \by, \calE^{(-m)}, \sigma^{2}) \propto p(\calT) \times \prod_{\ell}{\left[ \tau^{-1}P_{\ell}^{-\frac{1}{2}}\exp\left\{\frac{P_{\ell}^{-1}\Theta_{\ell}^{2}}{2}\right\} \right]}
\end{equation}
where $P_{\ell} = \sigma^{-2} n_{\ell} + \tau^{-2}$ and $\Theta_{\ell} = \sigma^{-2}\bm{r}^{\top}_{\ell}\bm{1}_{n_{\ell}}.$
\Cref{eq:vanillaBART_T_M_posterior} further reveals that the jumps $\mu_{\ell} \in \calM_{m}$ are conditionally independent with $\mu_{\ell} \vert \calT_{m}, \by, \calE^{(-m)}, \sigma^{2}  \sim \normaldist{P_{\ell}^{-1}\Theta_{\ell}}{P_{\ell}^{-1}}.$
Since summing the partial residuals in each leaf $\ell$ takes $O(n_{\ell})$ time, we conclude that each individual regression tree update requires $O(n)$ time.

\subsection{Updating smoother regression trees}
\label{sec:gp_bart_computation}
Although the original BART model has demonstrated great empirical success, it can yield somewhat unsatisfying results in \emph{targeted smoothing} problems as reflected in \switchref{\Cref{fig:curry_heatmap,fig:simmons_heatmap}}{\Cref{fig:intro_shots}}.
Formally, suppose we observe triplets $(\bx_{1}, \bz_{1}, y_{1}), \ldots, (\bx_{n}, \bz_{n}, y_{n})$ of covariates $\bx \in [0,1]^{p}$, \emph{smoothing variables}\footnote{Although we notationally distinguish between covariates and smoothing variables, $\bx$ and $\bz$ need not be disjoint; see \Cref{sec:gen_smooth}} $\bz \in [0,1]^{q},$ and outcomes $y \in \R$ where $y \sim \normaldist{f(\bx,\bz)}{\sigma^{2}}.$
Suppose further that for every $\bx,$ $f(\bx, \bz)$ is a smooth function of $\bz.$
Because BART uses regression trees with scalar jumps, the posterior over $f$ concentrates completely on discontinuous piecewise constant functions.

To overcome this limitation, \citet{Starling2020tsbart} and \citet{maia-gp-bart} replaced the scalar jumps $\mu_{\ell}$ with functional jumps $\mu_{\ell}(\bz),$ for which they specified independent mean-zero GP priors $\mu_{\ell}(\bz) \sim \textrm{GP}(0,k).$
They specified further hyperpriors for the kernel hyperparameters in each leaf.
To draw posterior samples, they followed the same basic strategy of \citet{Chipman2010}: update regression trees one-at-a-time by first updating the decision tree marginally and then conditionally updating the functional jumps.
In their works, the full conditional posterior density of the $m$-th regression tree is given by
\begin{equation}
\label{eq:GPbart_T_M_posterior}
p(\calT, \calM \vert \by, \calE^{(-m)}, \sigma^{2}) \propto p(\calT) \times \prod_{\ell}\left[\lvert K_{\ell}\rvert^{-\frac{1}{2}}\exp\left\{-\frac{1}{2}\left[\sigma^{-2}\lVert \br_{\ell} - \bmu_{\ell} \rVert_{2}^{2} + \bmu_{\ell}^{\top}K_{\ell}^{-1}\bmu_{\ell} \right]\right\}\right],
\end{equation}
where $\bmu_{\ell} = (\mu(\bz_{i}): i \in \calI(\ell;\calT))$ is the vector containing evaluations of the function $\mu_{\ell}$ at observations in leaf $\ell$ and $K_{\ell} = (k(\bz_{i}, \bz_{i'}))_{i,i' \in \calI(\ell, \calT)}$ is the kernel matrix evaluated at these observations. 
From \Cref{eq:GPbart_T_M_posterior}, we compute
$$
p(\calT \vert \by, \calE^{(-m)}, \sigma^{2}) \propto p(\calT) \times \prod_{\ell}\left[\lvert K_{\ell} \rvert^{-\frac{1}{2}} \lvert P_{\ell} \rvert^{-\frac{1}{2}}\exp\left\{\frac{\Theta_{\ell}^{\top}P^{-1}_{\ell}\Theta_{\ell}}{2}\right\} \right],
$$
and conclude that the vectors $\bmu_{\ell}$ are conditionally independent with $\bmu_{\ell} \sim \normaldist{P_{\ell}^{-1}\Theta_{\ell}}{P^{-1}_{\ell}}$ where $P_{\ell} = \sigma^{-2}I_{n_{\ell}} + K^{-1}_{\ell}$ and $\Theta_{\ell} = \sigma^{-2}\bm{r}_{\ell}.$

Unlike the original BART model, each regression tree update in \citet{Starling2020tsbart}'s and \citet{maia-gp-bart}'s extensions cannot be done in linear time.
Indeed, each update involves computing the Cholesky factorizations of $P_{\ell}$ and $K_{\ell}.$ 
Consequently, each regression tree update requires $O(\sum_{\ell}{n_{\ell}^{3}})$ time, which can be computationally prohibitive when $n$ is large. 

\section{Introducing ridgeBART}
\label{sec:methods}
The original BART model admits very fast regression tree updates but is limited to representing discontinuous functions.
GP-based extensions of BART, on the other hand, represent a much richer set of functions at the expense of much slower computation.
To achieve a middle ground maintaining the representational flexibility of treed-GP ensembles while admitting linear-time tree updates, we express the functional jump $\mu_{\ell}(\bz)$ as a linear combination of $D$ ridge functions
\begin{equation}
\label{eqn:ridge}
\mu_{\ell}(\bz) = \sum_{d=1}^{D}{\beta_{\ell,d} \times \varphi(\omega_{\ell, d}^{\top}\bz + b_{\ell,d})}
\end{equation}
where $\varphi(\cdot)$ is a user-specified non-linear activation function and $(\bomega_{\ell}, \bm{b}_{\ell}, \bbeta_{\ell})$ is a triplet of parameters associated with leaf $\ell.$

Formally, we introduce a new type of functional regression trees $(\calT, \calM),$ where $\calT$ is a decision tree just like in the original BART model and $\calM$ is the collection of triplets $(\bomega, \bm{b}, \bbeta)$ associated with each leaf of $\calT.$
We additionally introduce a new evaluation function
$$
g(\bx, \bz; \calT, \calM) = \sum_{d = 1}^{D}{\beta_{\ell(\bx; \calT),d} \times \varphi(\omega_{\ell(\bx; \calT),d}^{\top}\bz + b_{\ell(\bx; \calT), d}}).
$$
Given $n$ observations from the model $y \sim \normaldist{f(\bx, \bz)}{\sigma^{2}}$ with $\bx \in [0,1]^{p}$ and $\bz \in [0,1]^{q},$ ridgeBART expresses $f(\bx,\bz)$ as a sum of $M$ of these new regression trees $f(\bx, \bz) = \sum_{m = 1}^{M}{g(\bx, \bz; \calT_{m}, \calM_{m})}.$

\textbf{Motivation for \Cref{eqn:ridge}.} Before proceeding, we pause to motivate the form of our tree outputs.
Suppose that $k: [0,1]^{q} \times [0,1]^{q} \rightarrow \R$ is a real-valued stationary kernel and that $h \sim \textrm{GP}(0,k).$
Given observations $h(\bz_{1}), \ldots, h(\bz_{n}),$ sampling from the posterior distribution of $h(\bz)$ involves computing the Cholesky factorization of the $n \times n$ kernel matrix with entries $k(\bz_{i}, \bz_{i'})$ \citep[see, e.g.,][Equation (2.19) and Appendix A2]{RasmussenWilliams_gpbook}.
Exactly computing this factorization requires $O(n^{3})$ time, which becomes computationally prohibitive when $n$ is large.
\citet{Rahimi2007} proposed the approximation $k(\bz, \bz') = 2D^{-1}\sum_{d = 1}^{D}{\cos\left(\omega_{d}^{\top}\bz + b_{d}\right)\cos\left(\omega_{d}^{\top}\bz' + b_{d}\right)},$ where $b_{d} \sim \unifdist{0}{2\pi}$ and the $\omega_{d}$'s are drawn i.i.d.~from $k$'s \emph{spectral measure} $\calP_{\omega},$ which satisfies $k(\bz, \bz') = \E_{\omega \sim \calP_{\omega}}[e^{i\omega^{\top}(\bz-\bz')}]$\footnote{The existence of $\calP_{\omega}$ is guaranteed by Bochner's theorem.}.
Essentially, their result allows us to approximate $h$ with a linear combination of $D$ \emph{randomly} constructed cosine features.
Computing the posterior of the weights, which induces an approximate posterior over $h(\bz),$ requires only $O(nD^{2})$ time, a considerable speedup over the $O(n^{3})$ time needed for exact GP computation.

\citet{Li2023} built regression trees ensembles with random Fourier feature expansions in each leaf.
While similar to that work, our proposed model, theoretical results, and software implementation are considerably more general, allowing for a wide range of popular activation functions $\varphi(\cdot)$ like the hyperbolic tangent $\tanh(\cdot),$ the sigmoid $[1 + e^{-\cdot}]^{-1},$ and the rectified linear unit (ReLU) $\max\{0, \cdot\}.$
By associating each tree leaf with an expansion like \Cref{eqn:ridge}, we are effectively building an ensemble of treed-neural networks.

\subsection{The ridgeBART Prior}
\label{sec:ridgeBART_prior}

We specify a conjugate inverse gamma prior for $\sigma^{2}$ with the same default hyperparameters as \citet{Chipman2010}.
We specify the regression tree prior compositionally, by describing how to simulate a prior draw of a single regression tree.
First, we draw the decision tree $\calT$ from the branching process prior described in \Cref{sec:review}.
Then, independently across leafs $\ell$ of $\calT,$ we draw each $\beta_{d,\ell} \sim \normaldist{0}{\tau^{2}}$ and each offset $b_{d,\ell} \sim \unifdist{0}{2\pi}$ if $\varphi$ is the cosine function and $b_{d,\ell} \sim \normaldist{0}{1}$ otherwise.
Finally, we draw a leaf-specific scale parameter $\rho_{\ell} \sim \gammadist{\nu/2}{\nu\lambda/2}$ and then independently draw each $\omega_{d,\ell} \sim \normaldist{0}{\rho_{\ell}^{-1}}.$

\textbf{Hyperparameter Defaults.}
In addition to $M, D,$ and $\varphi,$ the ridgeBART priors depend on $\tau^{2},$ the prior variance of each $\beta_{d}$; and $\nu$ and $\lambda,$ the hyperparameters for the prior on $\rho_{\ell}.$ 
We recommend setting $M = 50, D = 1$ and $\tau = (y_\text{max} - y_\text{min}) / (4 \times \sqrt{MD}).$
We further recommend fixing $\nu = 3$ and setting $\lambda \approx 0.788$ so that $\P(\rho_{\ell} < 1) \approx 50\%.$
Although these settings are somewhat \textit{ad hoc}, we have found that they work well in practice; see \suppref{\Cref{sec:hyperparameters}}{Appendix A of the Supplementary Materials} for a sensitivity analysis.

\textbf{Optional settings for spatial domains.}
In our numerical studies, we noticed that when $\bz$ contained spatial information (e.g., location on a basketball court, latitude/longitude, etc.) and when the prior covariance matrix $V$ for the $\omega_{\ell}$'s was diagonal, the ridgeBART predictions displayed unwanted axis-aligned artifacts. 
We found that drawing $\omega_{\ell}$'s from a \emph{randomly} rotated multivariate Gaussian avoided such behavior. 
Specifically, when $\bz$ contains spatial information, we recommend drawing $\omega_{\ell} \sim \normaldist{0}{QVQ^{\top}},$ where $Q$ is computed from the QR-decomposition of a randomly sampled matrix with independent standard normal entries.

\subsection{Sampling from the ridgeBART posterior}
\label{sec:ridgeBART_computation}

To sample from the ridgeBART posterior, it is tempting to follow \citet{Chipman2010}, \citet{Starling2020tsbart}, and \citet{maia-gp-bart}: update the regression trees one-at-a-time by first updating the decision tree marginally with MH and then conditionally updating the jumps.
In ridgeBART, the functional jump $\mu_{\ell}$ in leaf $\ell$ is parameterized by the scalar $\rho$ and three vectors, $\bomega_{\ell} = (\omega_{\ell,1}, \ldots, \omega_{\ell, D})^{\top},$ $\bm{b}_{\ell} = (b_{\ell,1}, \ldots, b_{\ell,D})^{\top}$ and $\bbeta_{\ell} = (\beta_{\ell,1}, \ldots, \beta_{\ell,D})^{\top}.$
The full conditional posterior density of $m$-th regression tree $(\calT_{m}, \calM_{m})$ is 
\begin{equation}
\label{eq:ridgeBART_T_M_posterior}
p(\calT, \calM \vert \ldots) = p(\calT) \times \prod_{\ell}{\left[p(\rho_\ell, \bomega_{\ell}, \bm{b}_{\ell})\tau^{-D} \exp\left\{\sigma^{-2}\lVert \br_{\ell} - \Phi_{\ell}\bbeta_{\ell} \rVert_{2}^{2} + \tau^{-2}\bbeta_{\ell}^{\top}\bbeta_{\ell}\right\}\right]},
\end{equation}
where $\Phi_{\ell}$ is a $n_{\ell} \times D$ matrix whose $d$-th column contains evaluations of the function $\varphi(\omega_{\ell,d}^{\top}\bz_{i} + b_{\ell,d})$ for the observations $i \in \calI(\ell;\calT).$

Unfortunately, it is not generally possible to obtain $p(\calT \vert \by, \calE^{(-m)}, \sigma^{2})$ in closed form, rendering the marginal decision tree and conditional jump updates impractical.
Notice, however, that (i) we can easily marginalize out the \emph{outer weights} $\bbeta_{\ell}$'s from \Cref{eq:ridgeBART_T_M_posterior} and (ii) that the $\bbeta_{\ell}$'s are conditionally independent and follow multivariate normal distributions given $\calT$ and all the \emph{inner weights} $\Xi = \{(\rho_\ell, \bomega_{\ell}, \bm{b}_{\ell})\}.$
Letting $P_{\ell} = \sigma^{-2}\Phi^{\top}_{\ell}\Phi_{\ell} + \tau^{-2}I_{n_\ell}$ and $\Theta_{\ell} = \sigma^{-2}\Phi_{\ell}^{\top}\br_{\ell},$ we compute
\begin{align}
p(\calT, \Xi \vert \by, \calE^{(-m)}, \sigma^{2}) &\propto p(\calT) \times \prod_{\ell}{\left[p(\rho_\ell, \bomega_{\ell}, \bm{b}_{\ell})\tau^{-D}\lvert P_{\ell} \rvert^{-\frac{1}{2}}\exp\left\{\frac{\Theta_{\ell}^{\top}P_{\ell}^{-1}\Theta_{\ell}}{2}\right\} \right]} \label{eq:ridgeBART_T_Xi_posterior} \\
\bbeta_{\ell} \vert \calT, \Xi, \by, \calE^{(-m)}, \sigma^{2} &\sim \normaldist{P_{\ell}^{-1}\Theta_{\ell}}{P_{\ell}^{-1}}. \label{eq:ridgeBART_beta_posterior}
\end{align}

\Cref{eq:ridgeBART_T_Xi_posterior,eq:ridgeBART_beta_posterior} immediately suggests a natural sampling strategy. 
First, we marginally update the pair $(\calT, \Xi)$ consisting of the decision tree $\calT$ and all leaf inner weights $\Xi$.
Then, conditionally on $(\calT, \Xi),$ we update the leaf outer weights. 

\textbf{Sampling $(\calT, \Xi)$}. Because exactly sampling from the distribution in \Cref{eq:ridgeBART_T_Xi_posterior} is infeasible, we update $(\calT, \Xi)$ using a Metropolis-Hastings step with a proposal mechanism that randomly grows the tree, prunes the tree, or changes the inner weights at all leafs of a tree.
A grow move in ridgeBART involves (i) selecting an existing leaf uniformly at random and connecting it to two new leaf nodes; (ii) drawing a new decision rule for the selected node; (iii) drawing inner weights for the newly created leaf nodes; and (iv) leaving the rest of tree and remaining inner weights unchanged.
Similarly, a ridgeBART prune move involves (i) deleting two randomly selected leaf nodes that share a common parent; (ii) deleting the decision rule of that parent and the inner weights associated with the now-deleted leaves; (iv) drawing inner weights for the newly created leaf node; and (v) leaving the rest of the tree and remaining inner weights unchanged.
A ridgeBART change move\footnote{Our change proposals differ from \citet{Chipman1998}'s, which modify decision rules} leaves the tree structure unchanged but draws new inner weights at each leaf.

In grow, prune, and change proposals, we draw new inner weights from the prior.
While it is tempting to propose $\Xi$ in a more data-informed fashion (e.g., by searching for an optimal set of inner weights), doing so yields proposals with exceedingly small acceptance probabilities, which in turn prevents the Markov chain from moving \citep[see][Appendix B2 for further details]{Deshpande2024_flexBART}.
The local nature of grow, prune, and change moves and the fact that we propose new inner weights from the prior facilitates considerable cancellation in the Metropolis-Hastings ratio.

\textbf{Sampling $\sigma^{2}.$} After updating $\calE,$ we draw a new $\sigma^{2}$ from a conditionally conjugate inverse gamma distribution.

\textbf{Computational complexity}. In each regression tree update, we must (i) compute and invert the matrix $P_{\ell},$ which takes $O(n_{\ell}D^{2} + D^{3})$ time; and (ii) evaluate the matrix-vector products $\Phi^{\top}\bm{r}_{\ell}$ and $P_{\ell}^{-1}\Theta_{\ell},$ which requires $O(n_{\ell}D + D^{2})$ time.
Thus, updating a single regression tree with $L$ leaves takes $O(n(D^{2}+D) + L(D^{3}+D^{2}))$ time.
In practice, we recommend setting $D = 1$ and typically observe $L \ll n.$
Consequently, the typical ridgeBART tree update's complexity is effectively $O(n),$ which is exactly the same as the original BART. 

By updating the inner weights $\Xi$ alongside the decision tree structure, we can fit much more expressive piecewise smooth models using a sampler that is nearly identical to that of the original BART model, which could only represent piecewise constant functions. 
In other words, we have vastly expanded the representational flexibility of BART without introducing substantial computational burden.

\section{Theoretical Results}
\label{sec:theory}
In addition to its excellent empirical performance, the original BART model exhibits desirable theoretical properties: under a slight modification to the branching process prior for $\calT$ and other mild assumptions, \citet{RockovaSaha2019} showed that the BART posterior contracts at a nearly-minimax optimal rate when the regression function $f(\bx)$ belongs to the H\"{o}lder class $\calC^{\alpha}([0,1]^{p}),$ for $0 < \alpha < 1$ (i.e., $f$ is rougher than Lipschitz). 
\citet{Jeong2023} extended this result to the setting where $f(\bx)$ displays different levels of smoothness with respect to each covariate across different rectangular sub-regions of $[0,1]^{p}$ (i.e., when $f$ belongs to a \textit{piecewise anisotropic} H\"{o}lder class; see \Cref{def:piecewise_anisotropic_holder}).
Their more general result is nevertheless limited by the assumption that $f$ is rougher than Lipschitz in each sub-region.
In this section, we consider the setting where $\bx = \bz$ and show that for sufficiently smooth activation functions $\varphi,$ the ridgeBART posterior concentrates when $f(\bx)$ lies in a piecewise anisotropic H\"{o}lder smoothness class of \emph{arbitrary} order.

\textbf{Setting and notation}. Before formally defining our function space, we introduce some notation. 
Suppose that $\boldsymbol{\Psi} = [0,1]^{p}$ has been partitioned in $R$ non-overlapping rectangular boxes $\boldsymbol{\Psi} = \bigcup_{r = 1}^{R}{\Psi_{r}}.$
We will allow the true regression function to display different degrees of smoothness within each region $\Psi_{r}.$
We denote the space of all continuous functions on $[0,1]^{p}$ by $C([0,1]^{p})$ and for any $h \in C([0,1]^{p}),$ $\bx \in [0,1]^{p}$ and $1 \leq i \leq p,$ let $h_{i}(\cdot \vert \bx)$ be the univariate slice function $y \mapsto h(x_{1}, \ldots, x_{i-1}, y, x_{i+1}, \ldots, x_{p}).$
Additionally, for $\bm{n}=(n_1,\dots,n_p)$, let $D^{\bm{n}} h$ denote the mixed partial derivatives of order $(n_1,\dots,n_p)$ of $h$. 
Given covariate vectors $\bx_{1}, \ldots, \bx_{n},$ we denote the empirical $L^2 $ norm by $\lVert h \rVert_n^2 = n^{-1}\sum_{i=1}^{n} h(\bm{x}_i)^2 $.
Finally, given a vector $\balpha = (\alpha_{1}, \ldots, \alpha_{p})^{\top} \in \R^{p}_{+}$ let $\hmean{\balpha} =  p\left(\sum_{j=1}^{p} \alpha_j^{-1}\right)^{-1}$ be the harmonic mean of its entries.

\begin{definition}[Anisotropic H\"older space]
\label{def:anisotropic_holder}
Let $\balpha = (\alpha_{1}, \ldots, \alpha_{p})^{\top} \in \R^{p}_{+}$ and let $\lambda > 0.$
The \emph{anisotropic H\"{o}lder space} with exponent $\balpha$ and coefficient $\lambda,$ $\calH^{\balpha,p}_{\lambda}(\Psi)$ consists of all functions $f: \Psi \to \R$ such that 
$$\max_{1 \leq i \leq p} \sup_{\bm{x} \in [0,1]^p} \sum_{j=0}^{\lfloor \alpha_i \rfloor} \| D^j f_i(\cdot | \bm{x}) \|_{\infty} \leq \lambda$$ 
and for all $y \in [0,1]$ and $h \in [0,1-y]$,
$$ \sup_{\,\bm{x} \in \Psi\,}\lVert D^{\lfloor \alpha_{j} \rfloor} f_{j}(y+h\vert \bx) - D^{\lfloor \alpha_{j} \rfloor}f_{j}(y \vert \bx) \rVert_{\infty} \leq \lambda \lvert h \rvert^{\alpha_{j} - \lfloor \alpha_{j} \rfloor}.
$$
\end{definition}

\citet{BhattaAOS} showed that posteriors based on multi-bandwidth Gaussian Processes contract at near minimax-optimal rates when the true function lies in the anisotropic H\"{o}lder-class $\calH^{\balpha,p}_{\lambda}(\Psi).$
We will instead consider the class of functions that, when restricted to each subregion $\Psi_{r},$ belong to an anisotropic H\"{o}lder class over each $\Psi_{r},$ but display different degrees of smoothness across each subregion.
\Cref{def:piecewise_anisotropic_holder} formally introduces our \emph{piecewise} anisotropic H\"{o}lder class and is a natural extension of \citet{Jeong2023}'s Definition 2.
Although we allow $f \in \calH^{\mathcal{A},p}_{\lambda}$ to exhibit different degrees of smoothness in each subregion $\Psi_{r},$ we will assume that $f$ is continuous across the boundaries of adjacent subregions. 

\begin{definition}[Piecewise anisotropic H\"older]
\label{def:piecewise_anisotropic_holder}
Let $\lambda > 0$ and let $\balpha_{1}, \ldots, \balpha_{R} \in \R^{p}_{+}$ with the $\hmean{\balpha_{1}} = \cdots \hmean{\balpha_{R}}.$
Denoting $\mathcal{A} = \{\balpha_{1}, \ldots, \balpha_{R}\},$ the piecewise anisotropic H\"{o}lder class $\calH_{\lambda}^{\mathcal{A}, p}$ consists of all continuous functions $f: \boldsymbol{\Psi} \to \R$ such that $f\vert_{\Psi_{r}} \in \calH^{\balpha_{r},p}_{\lambda}(\Psi_{r})$ for each $r = 1, \ldots, R.$
\end{definition}

The first key step in deriving ridgeBART's posterior contraction rate over $\calH^{\calA,p}_{\lambda}(\boldsymbol{\Psi})$ is to show that every function in the class is well-approximated by a linear combination of ridge functions.

\begin{lemma}
\label{lem:globalridgeapprox}
Suppose $f_{0} \in \calH^{\calA,p}_{\lambda}(\Psi).$
Then there exists a ridge function approximation of the form $\hat{f}(\bx)  = \sum_{d=1}^{D^{\star}}{\beta_{d}\varphi(\bomega_d ^{\top}\bx + b_{d})}$ such that $\lVert f_{0} - \hat{f}\rVert_{\infty} \leq C D^{\star \ -\min_{r,j}{\alpha_{r,j}}/p},$ for some constant $C.$
\end{lemma}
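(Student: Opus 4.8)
The plan is to construct $\hat f$ by patching together local ridge approximations built on each piece $\Psi_{r}$ of the partition. This requires three ingredients: (a) a base result saying that a \emph{globally} smooth function is well approximated by a linear combination of ridge functions; (b) an extension operator that promotes each restriction $f_{0}\vert_{\Psi_{r}}$ to a globally smooth function on $[0,1]^{p}$; and (c) a partition-of-unity device that glues the local approximants into a single ridge expansion while controlling the error incurred near the region boundaries, where $f_{0}$ is only continuous. Write $\alpha_{*}=\min_{r,j}\alpha_{r,j}$. When $\alpha_{*}\le 1$, $f_{0}$ is itself globally $\alpha_{*}$-H\"older (chaining across boundary crossings, with constant depending on $R$ and the partition geometry), and the bound follows from (a) alone; the difficulty is concentrated in the regime $\alpha_{*}>1$.

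For ingredient (a), I would first show that any $g\in\calH^{\balpha,p}_{\lambda}([0,1]^{p})$ is approximated uniformly to accuracy $\varepsilon$ by a linear combination of $O(\varepsilon^{-p/\hmean{\balpha}})$ ridge functions -- or, crudely, $O(\varepsilon^{-p/\min_{j}\alpha_{j}})$ of them. The route is standard: approximate $g$ by a polynomial at the anisotropic Jackson rate, spending $\asymp\varepsilon^{-1/\alpha_{j}}$ degrees of freedom in coordinate $j$, then represent that polynomial \emph{exactly} as a linear combination of ridge functions. The second step uses the classical divided-difference identity that for a $C^{\infty}$ non-polynomial activation $\varphi$ every monomial $t^{k}$ is a finite linear combination of dilates and shifts $\varphi(\omega t+b)$, so a degree-$N$ polynomial in $p$ variables is a linear combination of $O(N^{p})$ ridge functions; when $\varphi=\cos$ one may instead use the random-Fourier-feature representation of \citet{Rahimi2007} together with product-to-sum identities. (If a single-region statement of this kind has already been established, I would simply invoke it.) For ingredient (b), since each $\Psi_{r}$ is a box, a coordinatewise Whitney/Stein-type extension by reflection and averaging yields $F_{r}\colon[0,1]^{p}\to\R$ with $F_{r}=f_{0}$ on $\Psi_{r}$ and $F_{r}\in\calH^{\balpha_{r},p}_{C\lambda}([0,1]^{p})$ for a constant $C$ depending only on $p$ and the partition geometry. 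Applying (a) to each $F_{r}$ gives a ridge expansion $\hat F_{r}$ with $\lVert F_{r}-\hat F_{r}\rVert_{\infty}\le\varepsilon$ and $O(\varepsilon^{-p/\min_{j}\alpha_{r,j}})$ terms.

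For ingredient (c), I would introduce approximate indicators $\hat\chi_{r}$ of the boxes $\Psi_{r}$, each written in ridge-expansion form, nonnegative, transitioning over a boundary layer of width $\delta$, and normalized so that $\sum_{r}\hat\chi_{r}\equiv 1$ exactly; then set $\hat f:=\sum_{r}\hat\chi_{r}\hat F_{r}$. Split $[0,1]^{p}$ into cores (points of $\Psi_{r}$ at distance at least $\delta$ from $\partial\Psi_{r}$) and boundary layers. On a core, $\hat f=\hat F_{r}$, so $\lvert\hat f-f_{0}\rvert\le\varepsilon$. On the $\delta$-neighborhood of the region boundaries, continuity of $f_{0}$ across adjacent faces together with the controlled H\"older coefficient of the extensions gives $\lvert F_{r}-f_{0}\rvert\lesssim\lambda\,\delta^{\min(\alpha_{*},1)}$ there, so by convexity ($\sum_{r}\hat\chi_{r}\equiv1$) we get $\lvert\hat f-f_{0}\rvert\lesssim\varepsilon+\lambda\,\delta^{\min(\alpha_{*},1)}$. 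Choosing $\delta\asymp\varepsilon^{1/\min(\alpha_{*},1)}$ balances the two terms. Finally I would count: each product $\hat\chi_{r}\hat F_{r}$, carried through the polynomial representation of step (a), is a polynomial of degree $\asymp\varepsilon^{-1/\alpha_{*}}+\delta^{-1}$, hence a linear combination of $O((\varepsilon^{-1/\alpha_{*}}+\delta^{-1})^{p})$ ridge functions; summing over the fixed number $R$ of regions and inverting the relation between $D^{\star}$ and $\varepsilon$ yields $\lVert f_{0}-\hat f\rVert_{\infty}\le C\,(D^{\star})^{-\alpha_{*}/p}$ up to logarithmic factors absorbed into $C$.

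The main obstacle is exactly the gluing. A product of two linear combinations of ridge functions is not itself such a combination, so one cannot simply multiply an ``indicator'' ridge expansion by an ``approximation'' ridge expansion and stop; forcing the whole construction through polynomials and the divided-difference identity (or, for $\varphi=\cos$, through trigonometric product-to-sum formulas) is what makes the patched object a genuine single-hidden-layer network, and it is also the source of the gap between the optimal anisotropic exponent $\hmean{\balpha}/p$ and the advertised $\alpha_{*}/p$. The delicate case is $\alpha_{*}>1$: there $f_{0}$ may have genuine derivative jumps across the region boundaries, the boundary layers must then be essentially one-dimensional (perpendicular to each face), and representing such sharp transitions cheaply is exactly what pins down the ridge-unit budget, so considerable care is needed to keep that budget at $\varepsilon^{-p/\alpha_{*}}$ rather than something larger. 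I expect this balancing -- together with the routine but fiddly verification that the anisotropic extension keeps the H\"older coefficient bounded by a fixed multiple of $\lambda$ uniformly over the partition -- to be where the bulk of the work lies.
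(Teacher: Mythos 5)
Your blueprint — local approximation per box, partition-of-unity gluing, unit counting — is the same as the paper's, but the two constructions differ in the order of operations, and that difference is not cosmetic. You build a ridge expansion $\hat F_r$ per region \emph{first}, then need to multiply each by a ridge-form indicator $\hat\chi_r$, and correctly observe that the product of two ridge expansions is not a ridge expansion, forcing you to route everything through polynomials. The paper instead forms the single globally $C^\infty$ function $f_{\text{ideal}} = \sum_r \phi_r P_{N,r}$ (with $\phi_r$ ordinary smooth bump functions, not ridge expansions), bounds $\|f_0 - f_{\text{ideal}}\|_\infty$, and only at the very end invokes a single shallow-network approximation result (Leshno et al.; Pinkus, Thm.~6.8) to convert $f_{\text{ideal}}$ to $\hat f$ in one shot. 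That ordering dissolves the product problem you identify, because $f_{\text{ideal}}$ is $C^\infty$ hence lies in every $W^{k,\infty}$, so its ridge-approximation rate can be made as fast as needed and never dominates the Jackson error. Your explicit use of a Stein/Whitney extension of each $f_0|_{\Psi_r}$ is more careful than the paper (which constructs $P_{N,r}$ by a Jackson-kernel convolution without discussing the behavior of $P_{N,r}$ outside $\Psi_r$), and your boundary-layer decomposition into cores plus $\delta$-collars is likewise more explicit.

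There is, however, a genuine gap in your proposal for the regime $\alpha_* := \min_{r,j}\alpha_{r,j} > 1$, which you flag but do not close — and this is precisely the regime the lemma is needed for, since for $\alpha_* \le 1$ the result is essentially classical. Your own accounting forces a boundary-layer width $\delta \asymp (\varepsilon/\lambda)^{1/\min(\alpha_*,1)} = \varepsilon/\lambda$ when $\alpha_* > 1$ (because $f_0$ is only continuous, not $C^1$, across the region boundaries, so the extensions $F_r$ deviate from $f_0$ at the Lipschitz rate there). A partition of unity that transitions over a length $\delta$ must, when pushed through the polynomial/ridge representation you propose, use degree $\gtrsim \delta^{-1}$, hence $\gtrsim \delta^{-p} \asymp (\lambda/\varepsilon)^p$ ridge units — far more than the $\varepsilon^{-p/\alpha_*}$ budget consistent with the claimed bound. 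Inverting, you would only obtain $\|f_0-\hat f\|_\infty \lesssim (D^\star)^{-1/p}$, not $(D^\star)^{-\alpha_*/p}$. Writing "considerable care is needed" acknowledges the problem but does not supply the missing idea: you would need either to show that the partition-of-unity ridges can be shared across regions and amortized, or to exploit the codimension-one structure of the boundary layer (it has volume $\asymp\delta$, not $\asymp 1$, so anisotropic degree allocation perpendicular versus parallel to the face could reduce the count), neither of which is in your sketch. Until that step is filled, the proposal proves the lemma only for $\alpha_* \le 1$.
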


The proof of \Cref{lem:globalridgeapprox} is in \suppref{\Cref{sec:globalapprox}}{Section C1 of the Supplementary Materials} and follows the proof techniques in \citet{Leshno1993}, \citet{Barron93}, and \citet[Chapters 7 \& 8]{Devore-Lorentz}. 
Briefly, we begin by approximating the target function $f_0$ with polynomials in each box $\Psi_{r}.$
Then, using \suppref{\Cref{lem:localridge}}{Lemma C1 in the Supplementary Materials} and standard H\"{o}lder bounds, we show that each local polynomial approximates each $f_0\vert \Psi_{r}$ to within $D^{-\min_{j}{\alpha_{r,j}}/p}$ on each $\Psi_{r},$ where $D$ is the number of ridge units per leaf, as described in \Cref{eqn:ridge}. The total count of ridge expansions across $R$ subdomains is given by $D^{\star} \approx RD$. In simpler terms, our per-subdomain budget of ridge functions is $D$ and the total budget is $D^{\star}$.
Using ideas from \citet{Leshno1993} and \citet{Pinkus1997ApproximatingBR} we can approximate each polynomial path with a sum of ridge functions, while preserving the same approximation error order.
Finally, we can ``glue together'' the local approximations into a single approximation by using a smooth partition of unity. 

The next key result (\suppref{\Cref{lem:approximation}}{Lemma C2 in the Supplementary Materials}) shows that for each $f \in \calH^{\calA,p}_{\lambda}(\boldsymbol{\Psi}),$ there is a sufficiently deep \emph{anisotropic k-d} tree \citep[Definition 9]{Jeong2023} outputting linear combinations of ridge functions that approximates $f$ to within some error rate $\varepsilon_{n}.$
This error rate, which depends on $n, p, \lambda,$ and the harmonic mean of the vectors in $\calA,$ turns out to be the posterior concentration rate.

To prove our main result, beyond assuming that the data arose from the model $y \sim \normaldist{f_{0}(\bx)}{\sigma_{0}^{2}}$ with $f_{0} \in \calH^{\calA, p}_{\lambda}(\boldsymbol{\Psi}),$ we make the following assumptions:
\begin{itemize}
\item[(A1)]{$\|f_0\|_\infty \lesssim \sqrt{\log n}$ and $p = o(\log n).$}
\item[(A2)]{The H\"older coefficient $\lambda$ has at most polynomial growth in $n$, i.e., $\lambda \lesssim n^{a_2}$ for some $a_2 > 0$.}
\item[(A3)] The number of ridge functions in $\hat{f}$, $D^{\star} \asymp R \cdot\epsilon_n ^{-p/\bar{\alpha}}$.
\end{itemize}

Assumption (A1) mildly restricts the supremum norm of $f_0$, which is standard in high-dimensional Bayesian regression analyses, to guarantee sufficient prior concentration under the normal prior \citep[c.f.\ Assumption A3]{Jeong2023}. 
Similar boundedness assumptions appear in \citet{RockovaSaha2019} and \citet{Jeong2023}.  
Assumption (A2) with the H\"older coefficient $\lambda \lesssim n^{a_2}$ is a standard mild growth condition: the function can get slightly steeper with $n$, but not too rapidly. 
Assumption (A3) is the standard relationship balancing approximation and estimation errors in a sum‐of‐ridge or a neural network–like model.

We additionally make a few modifications to the ridgeBART prior:
\begin{itemize}
\item[(P1)]{In the branching process prior for $\calT,$ the probability that a node at depth $d$ is non-terminal is $\gamma^{d}$, with $1/n < \gamma < 1/2.$
}
\item[(P2)]{The outer weight prior is $\bbeta \sim \normaldist{0}{\Sigma_{\bbeta}}$, and there exists a constant $C_{\beta}$ that does not depend on $n$ or $p$ such that $C_{\beta}^{-1}I \prec \Sigma_{\bbeta} \prec C_{\beta}I.$}
\item[(P3)]{The variance parameter $\sigma^2$ is assigned an inverse gamma prior.}
\item[(P4)]{The activation function $\varphi \in \calC^{k}(\mathbb{R})$ for some integer $k \ge 1;$ is not a polynomial; and its Lipschitz constant on any compact set is bounded by $L_{\varphi}$.}
\end{itemize}
Assumption (P1) is from \citet{RockovaSaha2019}, which is a standard way to control the distribution of tree shapes and sizes in the BART prior.
Assumption (P2) controls the outer weights prior scale, preventing unbounded or negligible weights while Assumption (P3) is standard in Bayesian regression, ensuring good noise scale adaptivity. 
Assumption (P4) is central to the model's expressive power. 
By requiring $\varphi$ to be sufficiently smooth, we ensure the ridge function building blocks can themselves represent smooth functions. 
The requirement that it is non-polynomial is also critical (stated in \citet[Theorem 1]{Leshno1993}), as a polynomial activation function would limit the model to only generating other polynomials, thereby losing the universal approximation capability needed to represent arbitrary functions. 
This is the key mechanism that allows ridgeBART to approximate target functions of arbitrary smoothness $(\alpha>0),$ overcoming the limitations of standard BART.

Under these and five additional technical assumptions ((A4) -- (A8) in \suppref{\Cref{sec:proofanisotropic}}{Appendix C2 of the Supplementary Materials)}, the ridgeBART posterior places essentially all its mass in neighborhoods of $f_0$ and $\sigma_0 ^2$ whose empirical $L^2$ radius is $O(\epsilon_n)$. 
These additional assumptions ensure that there is regular-enough anisotropic k-d tree that can capture the behavior of $f_0$ in each sub-domain $\Psi_r$. 

\begin{theorem}
\label{thm:anisotropic_concentration}
    Suppose that $f_{0} \in \calH^{\calA,p}_{\lambda}(\Psi)$ and that $\hmean{\balpha_{1}} = \cdots = \hmean{\balpha_{R}} = \bar{\alpha}$ for some $\bar{\alpha} > 0.$ Under assumptions (A1)--(A8) and (P1)--(P4) there exists a constant $M > 0$ such that for $\epsilon_n = (\lambda p)^{\frac{p}{(2 \bar{\alpha}+p)}}(\nicefrac{(R\log n)}{n})^{\nicefrac{\bar{\alpha}}{(2 \bar{\alpha}+p)}}$:
    $$\mathbb{E}_0 \Pi \left\{(f, \sigma^2): \lVert f - f_0 \rVert_n + \lvert \sigma^2 - \sigma_0^2 \rvert  > M\epsilon_n \vert \by \right\} \longrightarrow 0.$$
\end{theorem}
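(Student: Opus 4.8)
The plan is to deduce the contraction statement from the general posterior contraction framework for independent, non-identically-distributed observations, in the fixed-design Gaussian-regression form used by \citet{RockovaSaha2019} and \citet{Jeong2023}. Concretely, I would establish: (i) a prior-mass (Kullback--Leibler) lower bound showing that ridgeBART places at least $e^{-cn\epsilon_n^2}$ prior mass on a neighborhood of $(f_0,\sigma_0^2)$; (ii) a sieve $\calF_n$ of ensembles with $\Pi(\calF_n^c)\lesssim e^{-cn\epsilon_n^2}$; and (iii) an entropy bound $\log N(\epsilon_n,\calF_n,\lVert\cdot\rVert_n)\lesssim n\epsilon_n^2$. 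Given (i)--(iii), the existence of exponentially powerful tests separating $\{\lVert f-f_0\rVert_n>M\epsilon_n\}$ from shrinking $\lVert\cdot\rVert_n$-balls is automatic by Le Cam--Birg\'e theory, and the variance component $\sigma^2$ is absorbed into the joint analysis using the inverse-gamma prior (P3) exactly as in \citet{Jeong2023}. The parameter choices are tied together by the balance $n\epsilon_n^2\asymp D^\star\log n$; combined with (A3), $D^\star\asymp R\epsilon_n^{-p/\bar\alpha}$, this gives $n\epsilon_n^{2+p/\bar\alpha}\asymp R\log n$, hence $\epsilon_n\asymp((R\log n)/n)^{\bar\alpha/(2\bar\alpha+p)}$, and carrying the $\lambda$- and $p$-dependence of the approximation constant through the same computation produces the stated prefactor $(\lambda p)^{p/(2\bar\alpha+p)}$.

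\textbf{Prior mass.} The starting point is \Cref{lem:globalridgeapprox} together with its anisotropic k-d tree refinement: there is an anisotropic k-d tree $\calT^\star$ of depth $O(\log n)$ with $O(D^\star)$ leaves whose leaf outputs are $D$-term ridge expansions built from \emph{bounded} inner weights $(\rho_\ell^\star,\bomega_\ell^\star,\bm{b}_\ell^\star)$ and \emph{bounded} outer weights $\bbeta_\ell^\star$ such that $\lVert f_{\calT^\star,\calM^\star}-f_0\rVert_\infty\lesssim\epsilon_n$ on $\Psi=[0,1]^p$ (using $\bx=\bz$, so the same coordinates drive splits and ridge features); the anisotropic refinement is what improves the effective smoothness exponent from $\min_{r,j}\alpha_{r,j}$ to the harmonic mean $\bar\alpha$, by splitting more finely along rougher coordinates. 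By (P1) the branching-process prior assigns this topology probability at least $e^{-c(\text{number of leaves})\log n}$, and it draws each decision rule into the needed range of split variables and thresholds with probability $\gtrsim n^{-c}$. Given the tree, the Gaussian small-ball probability that each $\bomega_\ell,\bm{b}_\ell$, and $\bbeta_\ell$ lands within an $O(\epsilon_n/L_\varphi)$-ball of its target, together with the Gamma mass that $\rho_\ell$ matches the region-specific scale dictated by the local smoothness of $\Psi_r$, is at least $e^{-c\log n}$ per leaf because all targets are polynomially bounded in $n$ (this is where (A1), (A2), (P2), and the boundedness in the approximation lemma are used: $\lVert f_0\rVert_\infty\lesssim\sqrt{\log n}$ and $\lambda\lesssim n^{a_2}$ keep the weights sub-polynomial, so no single small-ball probability is worse than polynomially small). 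Multiplying over the $O(D^\star)$ leaves gives prior mass $\gtrsim e^{-cD^\star\log n}\asymp e^{-c'n\epsilon_n^2}$; a constant-order factor from the inverse-gamma prior near $\sigma_0^2$ and the Lipschitz property of $\varphi$ (P4), which turns parameter-closeness into $\lVert f-f_0\rVert_n\lesssim\epsilon_n$, complete the KL-neighborhood bound.

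\textbf{Sieve and entropy.} I would take $\calF_n$ to be the set of ensembles whose trees have $O(n\epsilon_n^2/\log n)$ leaves in total, whose inner weights obey $\lVert\bomega_\ell\rVert,\lvert b_{\ell,d}\rvert\le\text{poly}(n)$ and $\rho_\ell\in[n^{-c},n^c]$, whose outer weights obey $\lVert\bbeta_\ell\rVert\le\text{poly}(n)$, and with $\sigma^2$ in a polynomially wide interval around $\sigma_0^2$. The bound $\Pi(\calF_n^c)\lesssim e^{-cn\epsilon_n^2}$ follows by combining the geometric tail of the branching process on the number of leaves (P1) with Gaussian tail bounds for $\bomega_\ell,\bbeta_\ell$, Gamma tail bounds for $\rho_\ell$, and inverse-gamma tail bounds for $\sigma^2$. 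For the entropy, condition on a tree topology with $K$ leaves (there are at most $4^K$ topologies, each carrying at most $(pn)^{O(K)}$ admissible decision-rule configurations on a discretized split grid): on this event the model is parametrized by a box in $\R^{O(KDq)}$ of polynomial side length, and since $(\bomega,\bm{b},\bbeta)\mapsto g(\cdot;\calT,\calM)$ is Lipschitz with respect to $\lVert\cdot\rVert_n$ by (P4), an $\epsilon_n$-net of that box induces an $\epsilon_n$-net of functions of log-cardinality $\lesssim KDq\log(\text{poly}(n)/\epsilon_n)\lesssim K\log n$. Summing over topologies and decision-rule configurations keeps the total log-covering number at order $(n\epsilon_n^2/\log n)\cdot\log n=n\epsilon_n^2$, as required.

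\textbf{Main obstacle.} I expect the prior-mass step to be the crux, and within it the anisotropic bookkeeping. Because $f_0$ has a different smoothness vector $\balpha_r$ on each $\Psi_r$, the approximating k-d tree must be refined to different depths in different regions, so neither the tree-prior-mass lower bound nor the choice of leaf scale $\rho_\ell$ can be made uniformly; one must verify region by region that the equal-harmonic-mean hypothesis $\hmean{\balpha_1}=\cdots=\hmean{\balpha_R}=\bar\alpha$ forces the per-region leaf budget to be $\epsilon_n^{-p/\hmean{\balpha_r}}=\epsilon_n^{-p/\bar\alpha}$ and that the Gamma prior on $\rho_\ell$ concentrates near the scale balancing the corresponding anisotropic polynomial-then-ridge approximation. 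A secondary subtlety, handled by (A1)--(A2), is keeping every approximating weight polynomially bounded in $n$ so that the product of $O(D^\star)$ Gaussian small-ball probabilities remains $e^{-O(n\epsilon_n^2)}$ rather than catastrophically smaller. The remaining technical hypotheses (A4)--(A8) enter exactly as their analogues do in \citet{Jeong2023}: they guarantee the existence, in every subregion, of a ``regular enough'' anisotropic k-d tree whose cells respect the local smoothness structure, which is precisely what both the approximation lemma and the prior-mass lower bound need.
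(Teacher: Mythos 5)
Your proposal follows essentially the same route as the paper: verifying the Kullback--Leibler prior-mass, sieve-mass, and metric-entropy conditions of \citet{Ghosal_2007}, using the anisotropic k-d tree refinement of the ridge-approximation lemma to build the prior-mass target, and balancing $n\epsilon_n^2\asymp D^\star\log n$ with (A3) to extract the stated rate. The only notable difference is in the prior-mass step: you explicitly bound small-ball probabilities for the inner weights $(\bomega_\ell,\bm{b}_\ell,\rho_\ell)$ alongside the outer weights, whereas the paper's \Cref{lem:treeconclemma} conditions on the inner parameters at their optimal values and argues only over the Gaussian outer-weight vector -- your treatment is arguably the more careful of the two, but both land in the same place.
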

The proof \Cref{thm:anisotropic_concentration} is in \suppref{\Cref{sec:proofanisotropic}}{Appendix C2 of the Supplementary Materials} and follows the strategy in \citet{Jeong2023} by verifying a Kullback-Leibler (KL), prior mass, and metric entropy conditions of \citet[Theorem 4]{Ghosal_2007}. 
We note that the contraction rate $\epsilon_{n}$ can, in principle, be derived without the assuming $\hmean{\balpha_1} = \cdots = \hmean{\balpha_R}.$
In that more general scenario, the global posterior contraction rate is dictated by the function restricted to the sub-region over which it is the least smooth.

When $R=1$ and all $\alpha_{1} = \cdots = \alpha_{p} = \alpha > 0$, the space $\mathcal{H}^{\calA, p}_{\lambda}(\boldsymbol{\Psi})$ reduces to the usual isotropic H\"older class $\calC^{\alpha}(\boldsymbol{\Psi}).$  
Initial theoretical analyses of the BART posterior \citep[e.g.,][]{RockovaSaha2019, Rockova2019} showed that Bayesian ensembles of regression trees with constant leaf outputs could recover functions in $\calC^{\alpha}(\boldsymbol{\Psi})$ at a near-optimal rate only for $0 < \alpha \leq 1.$
They further noted that the use of piecewise constant weak learners in the ensemble precludes near-optimal recovery of smoother functions.
By using ensembles of soft decision trees, which output smooth combinations of all leaf outputs, \citet{LineroYang2018}'s SoftBART can recover $\calC^{\alpha}(\boldsymbol{\Psi})$ functions for arbitrary $\alpha \geq 0$ 
Our result, when specialized to the isotropic setting, matches their result: by replacing constant outputs of hard decision trees with linear combinations of sufficiently many sufficiently smooth ridge functions, ridgeBART can recover $\calC^{\alpha}(\boldsymbol{\Psi})$ functions at a near-optimal rate for any $\alpha \geq 0.$

\section{Experiments}
\label{sec:experiments}
We performed two synthetic data simulations to assess ridgeBART's ability to perform targeted smoothing (\Cref{sec:ts}) and to recover more general functions (\Cref{sec:gen_smooth}). 
We then applied ridgeBART to the NBA shot chart data (\Cref{sec:shot_charts}). 
We ran ridgeBART with $M = 50$ trees and $D = 1$ ridge functions in each leaf and report the results from using cosine, ReLU, and $\tanh$ activation functions.
We compared ridgeBART to the original BART model, Soft BART, GP-BART, and tsBART. 
We specifically used the implementations available in the \textbf{BART} \citep{BARTpkg}, \textbf{SoftBart} \citep{SoftBartpkg}, \textbf{gpbart} \citep{gpbartpkg}, and \textbf{tsbart} \citep{tsbartpkg} \textsf{R} packages.
For each method, we drew 10,000 posterior samples by sequentially running 10 Markov chains for 2,000 iterations discarding the first 1000 iterations of each chain as burn-in.
All experiments were run on a shared high-throughput cluster using \textsf{R} version 4.4.1.

\subsection{Targeted smoothing}
\label{sec:ts}

We assessed ridgeBART's ability to perform targeted smoothing using a data generating process that mimics patient-specific recovery curves.
Briefly, following a major injury or surgery, patients experience a sharp initial drop from a pre-event baseline value followed by a smooth increase to an asymptotic value below the baseline.
\citet{Wang2019} parametrized recovery curves as $f(\bx, z) = (1 - A(\bx))(1 - B(\bx) e^{-z \times C(\bx)}),$
where $z > 0$ is the time since the event (in months) and the functions $A(\bx), B(\bx),$ and $C(\bx)$ respectively capture the asymptote relative to the baseline, the initial drop relative to the final asymptote, and the recovery rate as a function of patient covariates.
In our first experiment, we set $A(\bx) = 0.5\lvert (1 + e^{-2 x_1})^{-1} - 0.5 \rvert - 0.5;$ $B(\bx) = 1+\min\{0, 0.15\cos{(5x_{2})}\},$ and $C(\bx) = 5e^{x_{3}}.$

For each patient $i = 1, \ldots, n,$ we drew covariates $\bx_{i} \in [0,1]^{6}$ uniformly; sampled $n_{i} \sim 1 + \poisdist{3}$ observation times $z_{it} \in [0, 24];$ and generated observations $y_{it} \sim \normaldist{f(\bx_{i}, z_{it})}{0.05^{2}}.$
To mimic real clinical data, we clustered the observation times around follow-up times of 1, 2, 4, 6, 8, 12, 18, and 24 months post-event; see \suppref{\Cref{sec:recovery_curve_details}}{Section B1 of the Supplementary Materials} for specifics.

\Cref{fig:rc_examples} shows the recovery curves for three different patients obtained after fitting ridgeBART with different choices of $\varphi,$ BART, and SoftBART to a simulated dataset containing 2419 observations from $n = 600$ patients.
Averaging over 10 data replications, it took ridgeBART 342 seconds (cosine activation); 306 seconds (tanh activation); and 93 seconds (ReLU activation) to simulate a single Markov chain for 2000 iterations.
By comparison, it took SoftBART 1087 seconds and BART 376 seconds on average.
We were unable to fit tsBART and GP-BART models to these data using the package defaults within our compute cluster's 72-hour limit.

Each panel of \Cref{fig:rc_examples} shows the patient's observed data (points), true curve $f(\bx, z)$ (solid black line), the estimated posterior mean (red line) and pointwise 95\% credible intervals (shaded) of $f(\bx, z).$
Every method produced excellent estimates of each individual curve: the posterior means closely follow the true curves' shapes and the pointwise 95\% credible intervals display higher-than-nominal coverage.
But on further inspection, we found that BART and SoftBART returned somewhat jagged posterior mean estimates and credible intervals whose widths oscillated from very narrow to very wide.
In sharp contrast, ridgeBART returned much smoother posterior means and uncertainty intervals with more consistent width.
Visually, the three ridgeBART estimates also appeared much more accurate than the BART and SoftBART estimates.

\begin{figure}[ht]
\centering
\includegraphics[width = .95\textwidth]{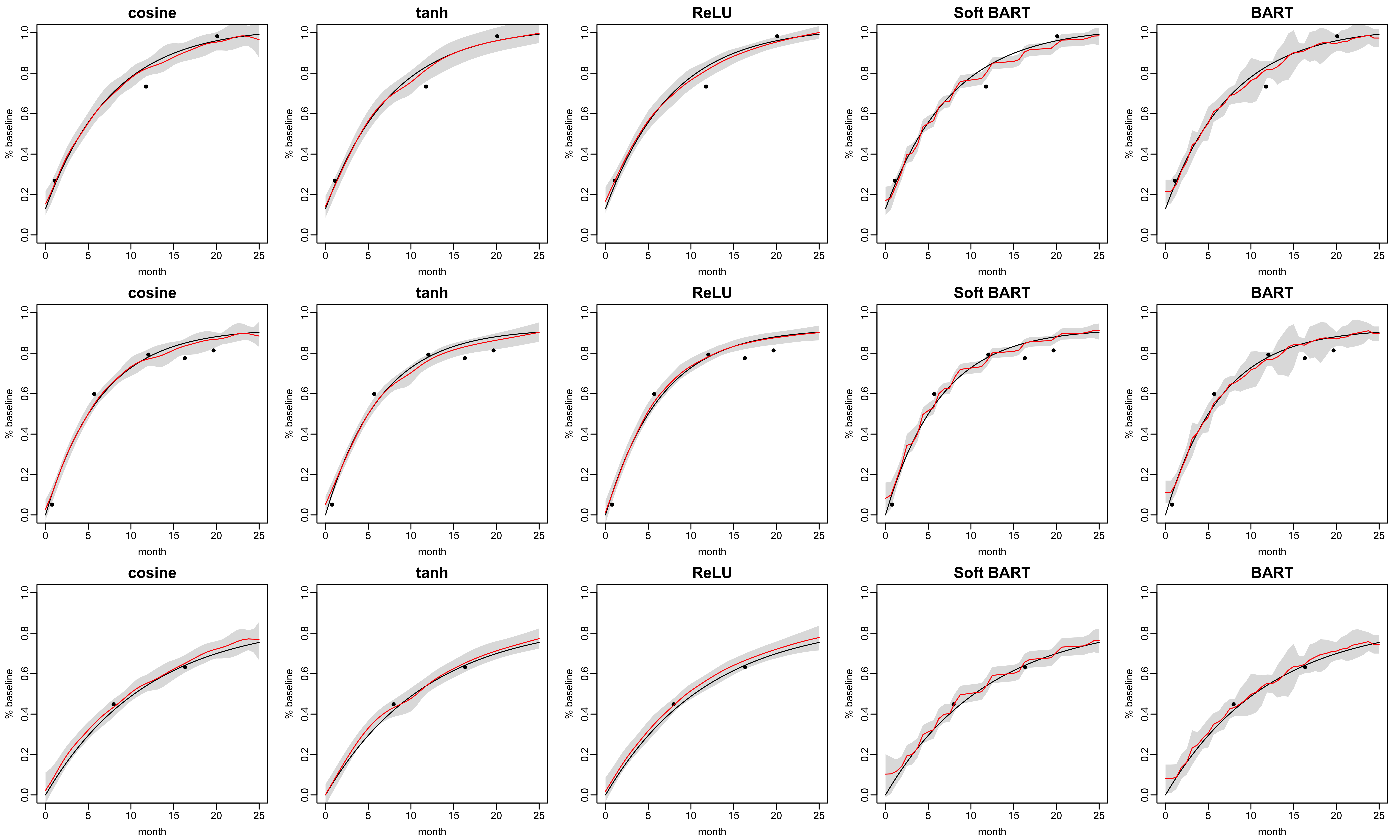}
\caption{Examples of three patient function recoveries for each model when $n = 600$.}
    \label{fig:rc_examples}
\end{figure}

To quantify the improvement offered by ridgeBART, we generated 10 synthetic datasets for each value of $n \in \{200, 400, 600, 800, 1000\}$ and computed two average out-of-sample root mean squared errors for each method.
The first error measures each method's ability to evaluate the recovery curve at a new observation time for individuals in the training data.
The second error measures each method's ability to evaluate the recovery curve at a new observation for individuals not in the training data.
To compute the first error, we used each method to predict the value of $f(\bx, z)$ along a uniformly-spaced grid of time points $z \in [0,24].$
The second error was computed by generating 1000 new covariate vectors that were not seen during training and evaluating their recovery function at a single randomly chosen point in $[0, 24]$.
Both errors are informative: the first tracks our ability to predict how well an already-observed patient will continue to recover while the second quantifies how well we can predict an entirely new patient's recovery.
{
\if\jrssb1%
\begin{figure}[ht]
\centering
\includegraphics[width = \textwidth]{figures/fig3}
\caption{Prediction error for recovery curve simulation study on (A) in-sample patients and (B) out-of-sample-patients and (C) Friedman function simulation study on out-of-sample folds. Panels (A) and (B) depict mean RMSE relative to BART, while panel (C) depicts actual RMSE.}%
\label{fig:experiments}%
\end{figure}%
\else%

\begin{figure}[ht]
    \centering
    \begin{subfigure}{0.32\textwidth}
    \centering
    \includegraphics[width = \textwidth]{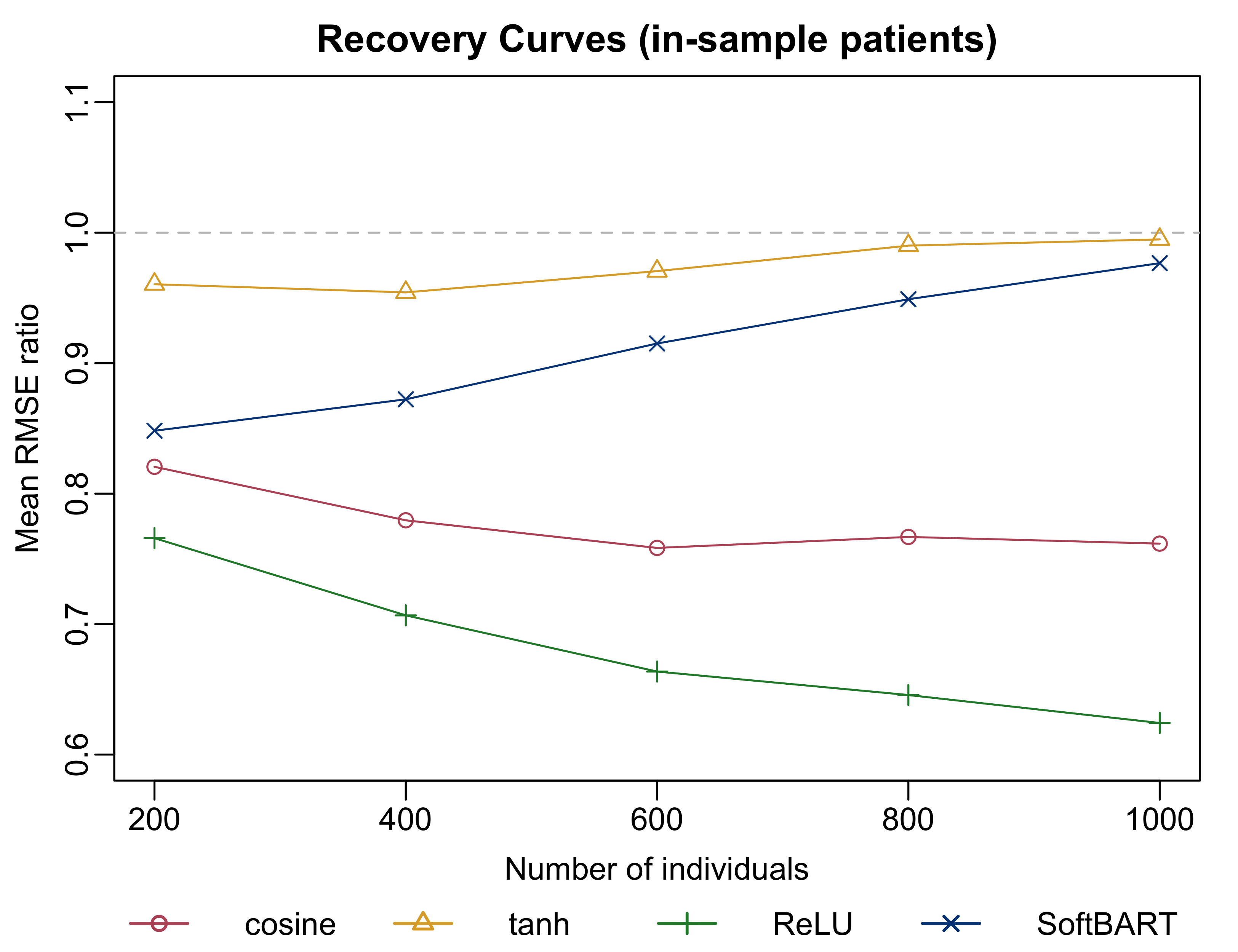}
    \caption{}
    \label{fig:rc_in_rmse}
    \end{subfigure}
    \begin{subfigure}{0.32\textwidth}
    \centering
    \includegraphics[width = \textwidth]{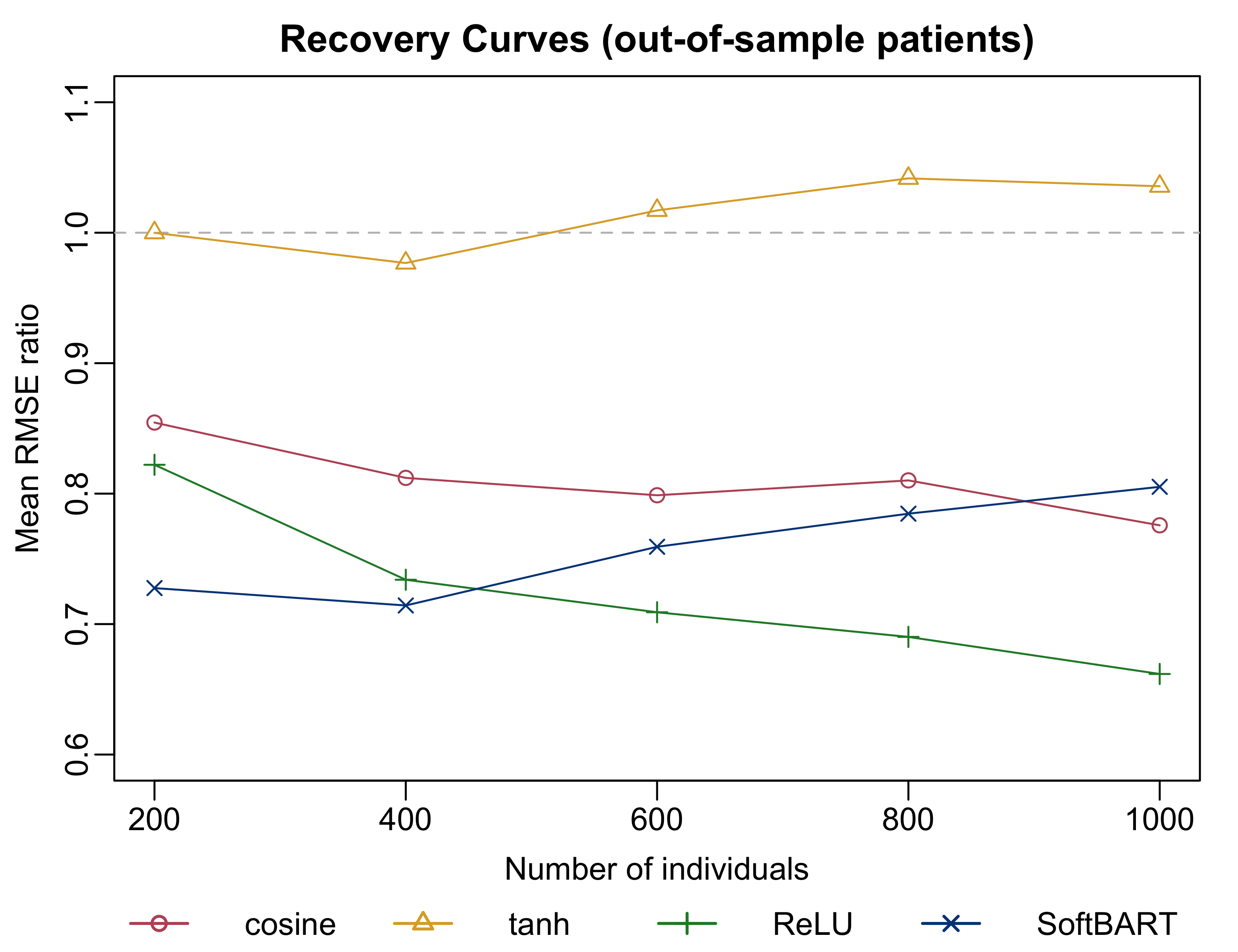}
    \caption{}
    \label{fig:rc_out_rmse}
    \end{subfigure}
    \begin{subfigure}{0.32\textwidth}
    \centering
    \includegraphics[width = \textwidth]{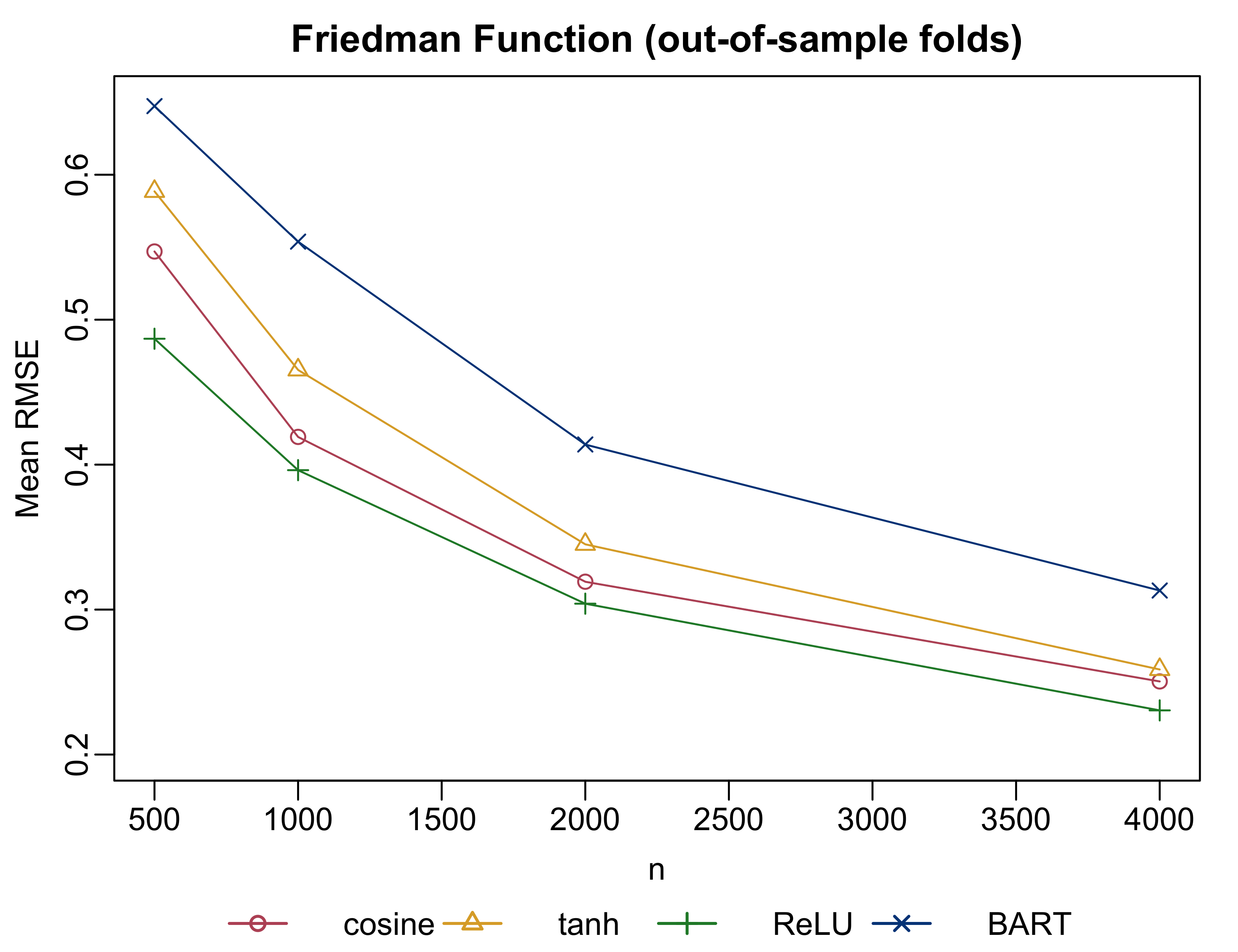}
    \caption{}
    \label{fig:friedman_lineplot}
    \end{subfigure}
    \caption{Prediction error for recovery curve simulation study on (a) in-sample patients and (b) out-of-sample-patients and (c) Friedman function simulation study on out-of-sample folds. Panels (a) and (b) depict mean RMSE relative to BART, while panel (c) depicts actual RMSE.}
    \label{fig:sim_study_results}%
\end{figure}%
\fi%
}
\switchref{\Cref{fig:rc_in_rmse,fig:rc_out_rmse}}{\Cref{fig:experiments}(A) and \Cref{fig:experiments}(B)} show how each method's in-sample patient and out-of-sample patient errors compare to the original BART model's as functions of the number of training patients $n.$
For predicting recovery curves for in-sample patients, each ridgeBART model and SoftBART outperformed the original BART model.
However, as we increased the number of training individuals $n,$ the gap between BART, SoftBART, and ridgeBART with $\tanh$ activation diminished. 
In contrast, ridgeBART with cosine and ReLU activations continued to improve relative to the original BART model, achieving 20-40\% reductions in RMSE. 
We additionally observed that ridgeBART with cosine and ReLU activation substantially out-performed BART when forecasting recovery curves for new patients and were competitive with SoftBART.

\subsection{Generalized Smoothing}
\label{sec:gen_smooth}

Although motivated by targeted smoothing, ridgeBART can also be used for generic nonparametric regression.
That is, given observations $y \sim \normaldist{f(\bx)}{\sigma^{2}},$ we can run ridgeBART with $\bz = \bx$ to approximate $f$ with piecewise-smooth functions. 
For each $n \in \{500, 1000, 2000, 4000\},$ we generated a dataset $(\bx_{1}, y_{1}), \ldots, (\bx_{n}, y_{n})$ where each $\bx_{i}$ was drawn uniformly from $[0,1]^{5}$ and $y_{i} \sim \normaldist{f(\bx_{i})}{1}$ where $f(\bx) = \sin(\pi x_1 x_2) + 20 (x_3 - 0.5)^2 + 10 x_4 + 5 x_5$ is the Friedman function frequently used to benchmark BART extensions. 

For each dataset, we compared ridgeBART's, BART's, and GP-BART's out-of-sample predictive performance using 25-fold cross-validation.
As in our targeted smoothing experiments, we attempted to simulate 10 Markov chains for 2000 iterations each for each method. 
We were only able to fit a GP-BART model to the data with $n = 500;$ running 10 GP-BART chains exceeded the 72-hour time limit set by our high-throughput cluster for $n > 500$.
In comparison, for $n = 1000,$ simulating a single ridgeBART chain took 158 seconds (cosine), 151 seconds (tanh), 77 seconds (ReLU) and simulating a single BART chain took 140 seconds; see \suppref{\Cref{sec:timing}}{Section B2 of the Supplementary Materials} for further timing comparisons.
\switchref{\Cref{fig:friedman_lineplot}}{\Cref{fig:experiments}(C)} shows how the average out-of-sample RMSE decreases as a function of $n.$
Each ridgeBART model substantially outperformed the original BART model for all values of $n.$

\subsection{Shot Charts}
\label{sec:shot_charts}

Returning to our motivating example, recall that our goal is to estimate the field goal percentage (FG\%) of every NBA player at every location on the court.
We obtained a dataset of 196,095 shots from the 2023-2024 season, including the playoffs, using the \textbf{hoopR} package \citep{hoopR} consisting of a binary indicator $y$ of whether a shot was made ($y = 1$) or missed ($y = 0$); the horizontal and vertical coordinates of the shot ($\bz$); and a vector of covariates $\bx$ about the player who attempted the shot.
In addition to the categorical player identity, $\bx$ included their position (guard, forward, or center), height, and weight.
By including these covariates into our model, we can ``borrow strength'' position and physical characteristics.

Using 20-fold cross-validation, we compared the predictive quality of ridgeBART models fit with different activation functions to three alternative models.
First, we fit two generalized additive models (GAM) to these data using the \textbf{mgcv} package \citep{mgcv}. 
The first pooled together data from all players, estimating FG\% as a function of location alone.
The second model fits separate player-specific GAMs.
Interestingly, the completely pooled model performed substantially better out-of-sample than the un-pooled, player-specific models.  
We additionally compare the ridgeBART and GAM fits to that produced by BART. 

Averaging across the folds, ridgeBART's out-of-sample log-losses were 0.640 (cosine), 0.642 (tanh), and 0.641 (ReLU), all of which are smaller than those for BART (0.643), the completely pooled GAM (0.645), and the player-specific GAM (0.687); see \suppref{\Cref{fig:shot_chart_cv2}}{Figure B4 of the Supplementary Materials} for a fold-by-fold comparison.
Unlike our synthetic data experiments, for which the ReLU activation was the best, running ridgeBART with cosine activation yielded the most accurate predictions for these data.
Simulating a single MCMC chain for 2000 iterations took ridgeBART 42.8 minutes (cosine), 42.5 minutes (tanh), 17.0 minutes (ReLU) and simulating a single chain for BART took 39.3 minutes.
{
\if\jrssb1%
\begin{figure}[ht]
\centering
\includegraphics[width = \textwidth]{figures/fig4}
\caption{Shot charts estimated for Stephen Curry (A), Ben Simmons (B), Giannis Antetokounmpo (C), and LeBron James (D) using ridgeBART with cosine activiation}
\label{fig:shot_chart}
\end{figure}
\else%
\begin{figure}[ht]
\centering
\begin{subfigure}[b]{0.23\textwidth}
\centering
\includegraphics[width = \textwidth]{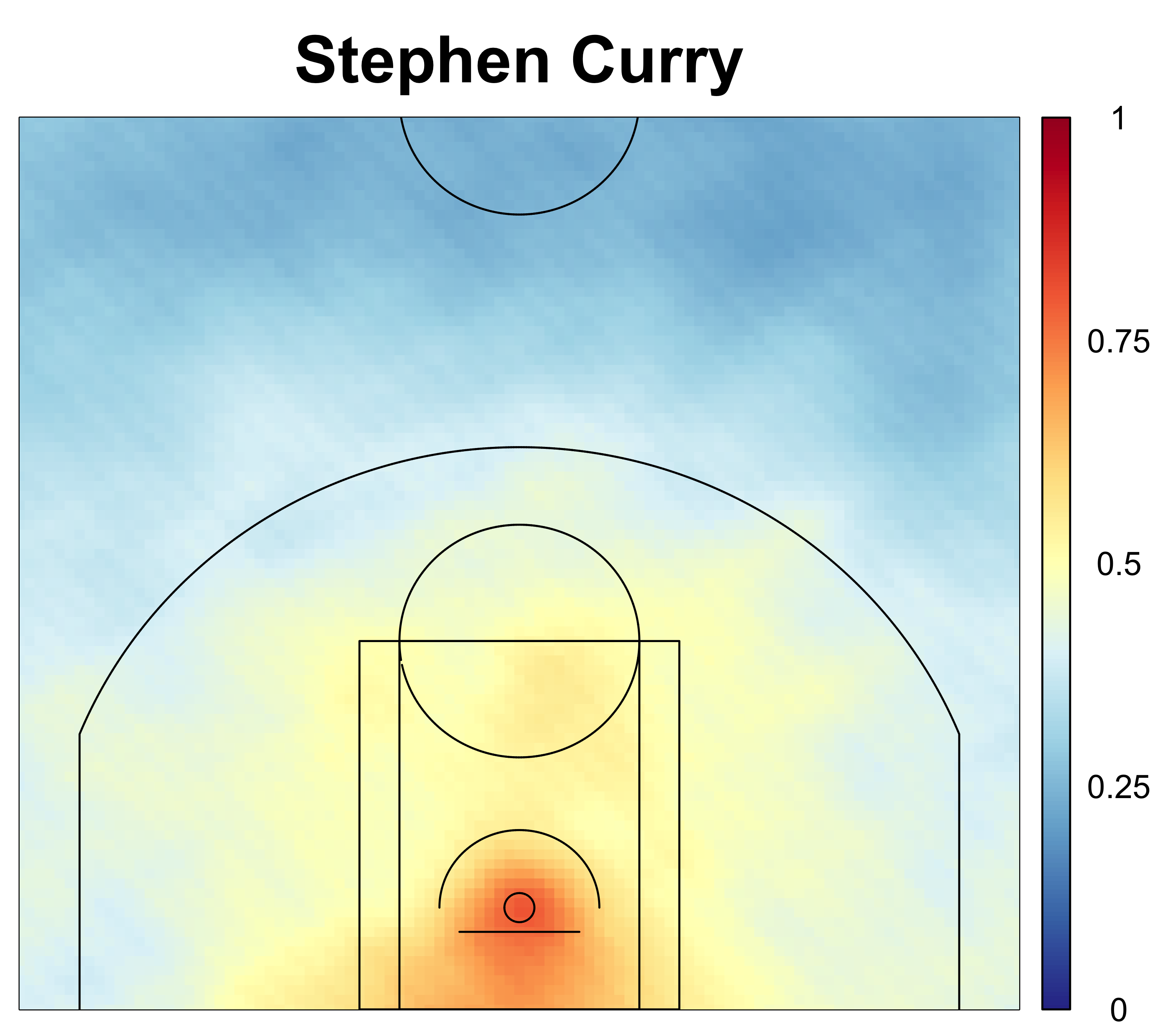}
\caption{}
\label{fig:chef}
\end{subfigure}
\begin{subfigure}[b]{0.23\textwidth}
\centering
\includegraphics[width = \textwidth]{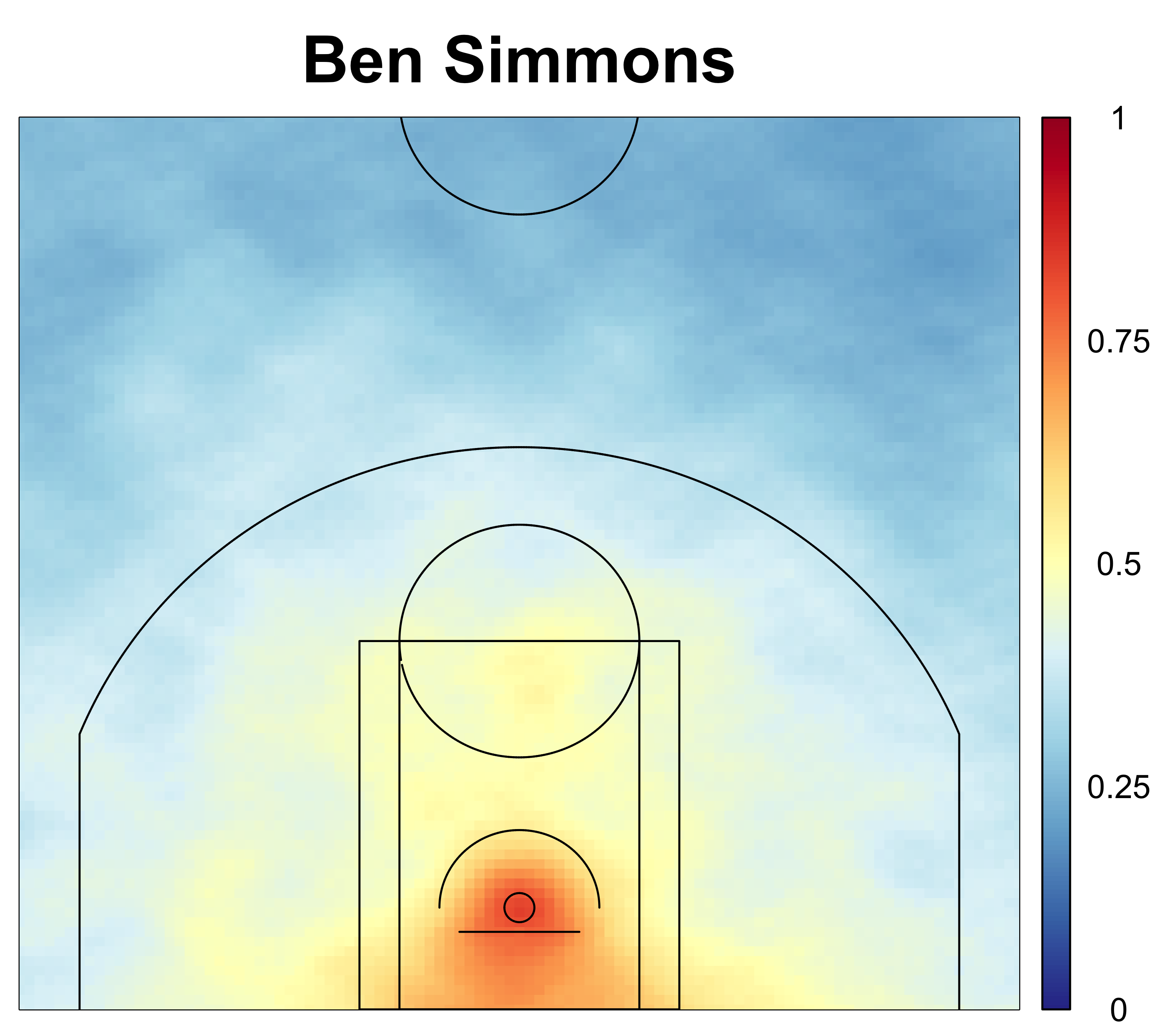}
\caption{}
\label{fig:simmons}
\end{subfigure}
\begin{subfigure}[b]{0.23\textwidth}
\centering
\includegraphics[width = \textwidth]{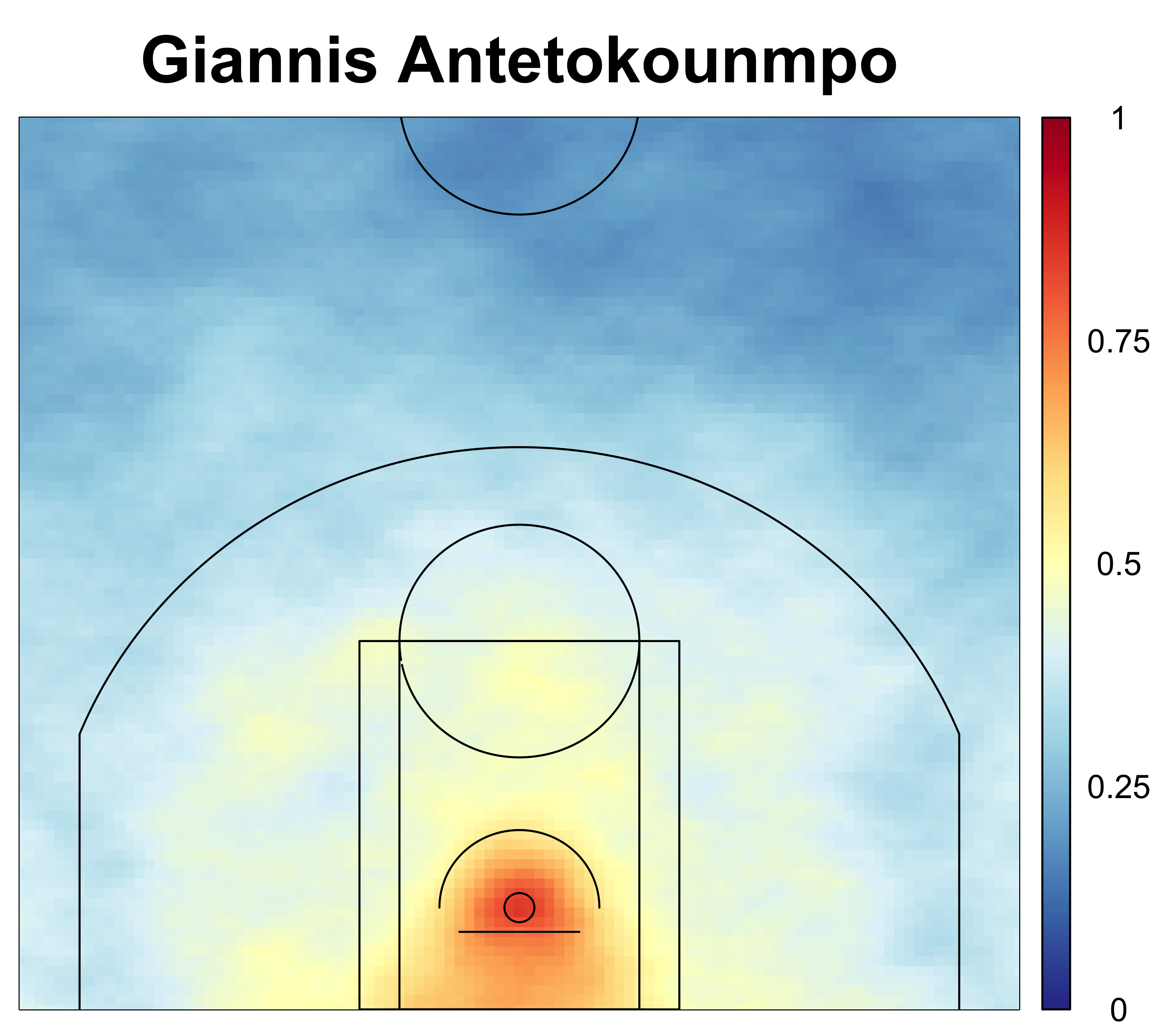}
\caption{}
\label{fig:simmons}
\end{subfigure}
\begin{subfigure}[b]{0.23\textwidth}
\centering
\includegraphics[width = \textwidth]{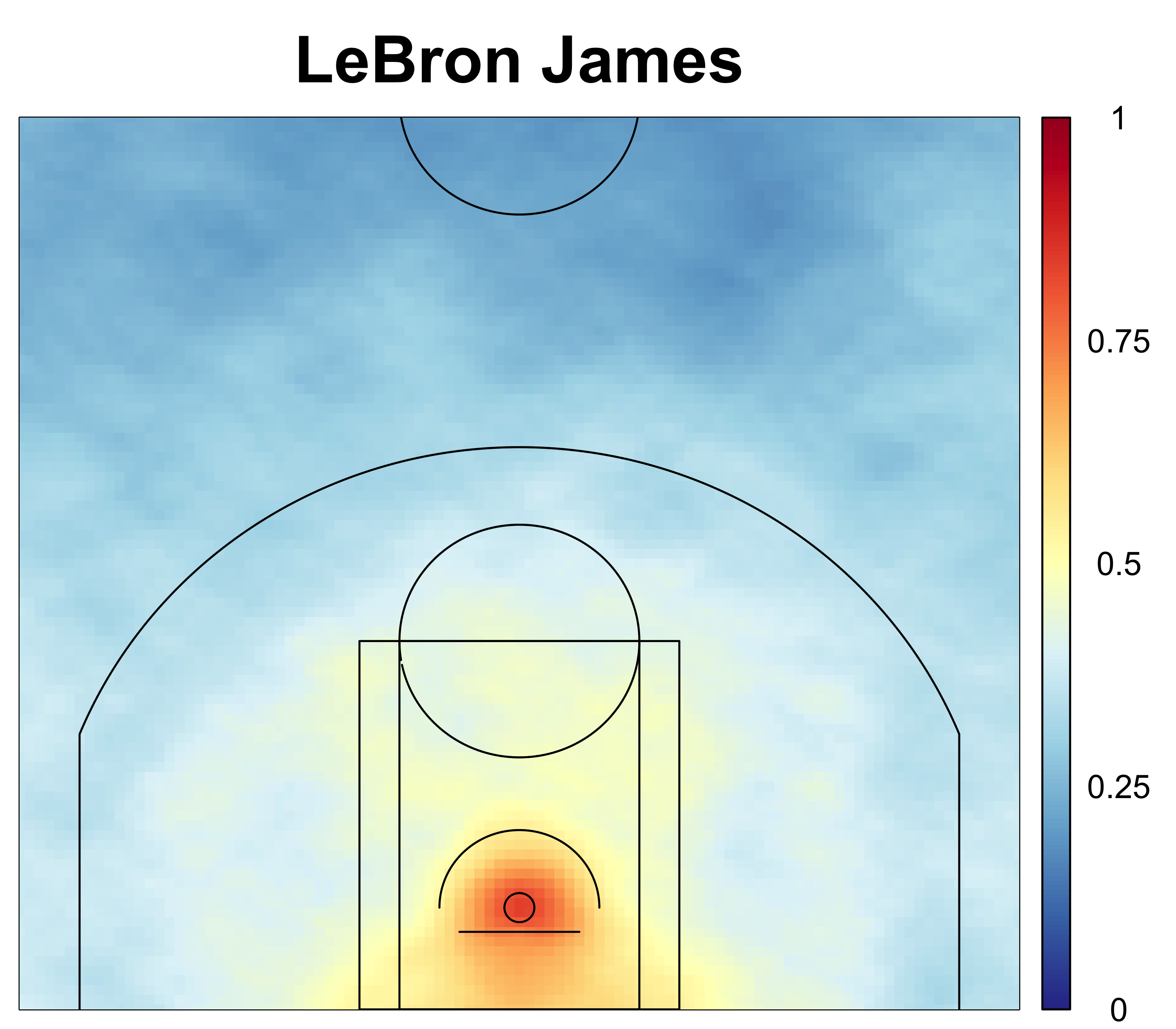}
\caption{}
\label{fig:simmons}
\end{subfigure}
\caption{Shot charts estimated for Stephen Curry (a), Ben Simmons (b), Giannis Antetokounmpo (c), and LeBron James (d) using ridgeBART with cosine activiation}
\label{fig:shot_chart}
\end{figure}%
\fi%
}
\Cref{fig:shot_chart} compares the FG\% heat maps estimated by ridgeBART with cosine activation for Stephen Curry, Ben Simmons, Giannis Antetokounmpo, and LeBron James.
These players are known to score in different ways.
Simmons and Antetokounmpo are known for being physical and scoring at close-range, while Curry is instead known for being a long-range shooter.
James, in contrast, is considered to be a good all-around scorer.
The comparisons between these players are illuminated by the ridgeBART shot charts, all of which do not have the same pathological sharp discontinuities as the BART shot charts.
Clearly, we can see that Stephen Curry is a significantly better shooter at all areas of the court compared to James, Antetokounmpo, and Simmons.
James and Antetokounmpo have very similar looking shot charts, where the main difference is that Antetokounmpo is slightly better in the immediate vicinity of the basket.
These results suggest that, despite their differences in play-style, James and Antetokounmpo are equally-skilled at shooting the basketball.

\section{Discussion}
\label{sec:discussion}
We have introduced ridgeBART, which expands the flexibility of the original BART model by replacing its constant jump outputs with linear combinations of ridge functions.
In our experiments, ridgeBART outperformed the original BART model and GP-based extensions in terms of predictive accuracy and computation time on both targeted smoothing and general regression problems.
Furthermore, we showed that, with minor modifications to the decision tree prior, the ridgeBART posterior consistently estimates functions in a piecewise anisotropic H\"older space.

There are several potential avenues for future development of ridgeBART.
First, the decision trees used by ridgeBART can be made even more flexible by incorporating oblique decision rules, which partition the continuous covariates along arbitrary hyperplanes.
Second, while ridgeBART effectively models smooth functions, it does not perform variable selection and remains sensitive to the curse of dimensionality. If the true function depends only on a subset $d \ll p$ of covariates, ridgeBART does not automatically achieve the improved convergence rate $n^{-\nicefrac{\alpha}{2 \alpha+d}}$ obtained by variable selection approaches in regression trees \citep{Rockova2019,Jeong2023}.
Compared to ensembles of piecewise constant trees, variable selection with ridgeBART is more delicate, as a covariate may exert influence through the decision rules and through the functional output in each leaf.
One potential approach to encourage variable selection would be to combine \citet{Linero2018}'s sparse Dirichlet priors on the decision rules with a spike-and-slab prior on the $\omega$'s within each leaf.

\bibliographystyle{apalike}
\bibliography{rffBART_refs}

\newpage
\appendix

\renewcommand{\thefigure}{\thesection\arabic{figure}}
\renewcommand{\thetable}{\thesection\arabic{table}}
\renewcommand{\theequation}{\thesection\arabic{equation}}
\renewcommand{\thelemma}{\thesection\arabic{lemma}}
\renewcommand{\thetheorem}{\thesection\arabic{theorem}}

\setcounter{figure}{0}
\setcounter{equation}{0}
\setcounter{table}{0}
\section{Hyperparameter sensitivity analysis}
\label{sec:hyperparameters}
Recall from \suppref{\Cref{sec:ridgeBART_prior}}{Section 3.1 of the main text} that the ridgeBART prior depends on several hyperparameters.
We performed a sensitivity analysis using the Friedman function data generating process described in \suppref{\Cref{sec:gen_smooth}}{Section 5.2 of the main text} with $n = 1000.$
We specifically assessed sensitivity to the number of trees $M$; the number of basis elements in each leaf $D$; and the hyperparameters for the scale of $\omega_\ell$, $\nu$ and $\lambda$.

\textbf{Sensitivity to $M$ and $D$}. Fixing $\nu = 3$ and $\lambda = 0.788,$ we ran ridgeBART with each combination of $M \in \{10, 50, 100\}$ and $D \in \{1, 5, 10\}.$
\Cref{fig:basis_sensitivity} shows boxplots of the out-of-sample RMSE across 25 cross-validiation folds.
We find that $M = 50$ and $D = 1$ provide the best results on average and take significantly less time to compute than similarly-performing settings.

\begin{figure}[ht]
\centering
\begin{subfigure}[b]{0.32\textwidth}
\centering
\includegraphics[width = \textwidth]{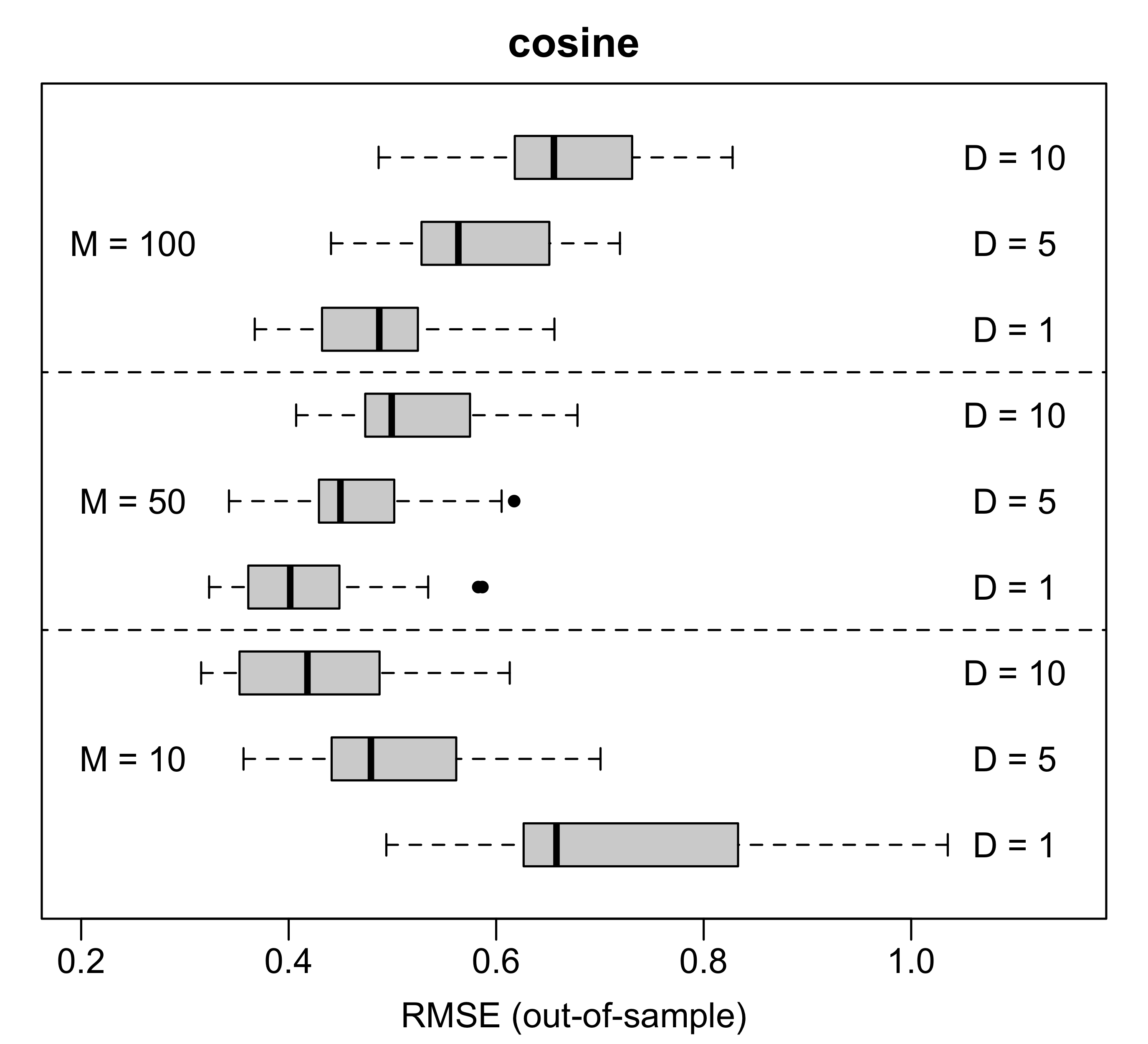}
\caption{}
\label{fig:hp_basis_cos}
\end{subfigure}
\begin{subfigure}[b]{0.32\textwidth}
\centering
\includegraphics[width = \textwidth]{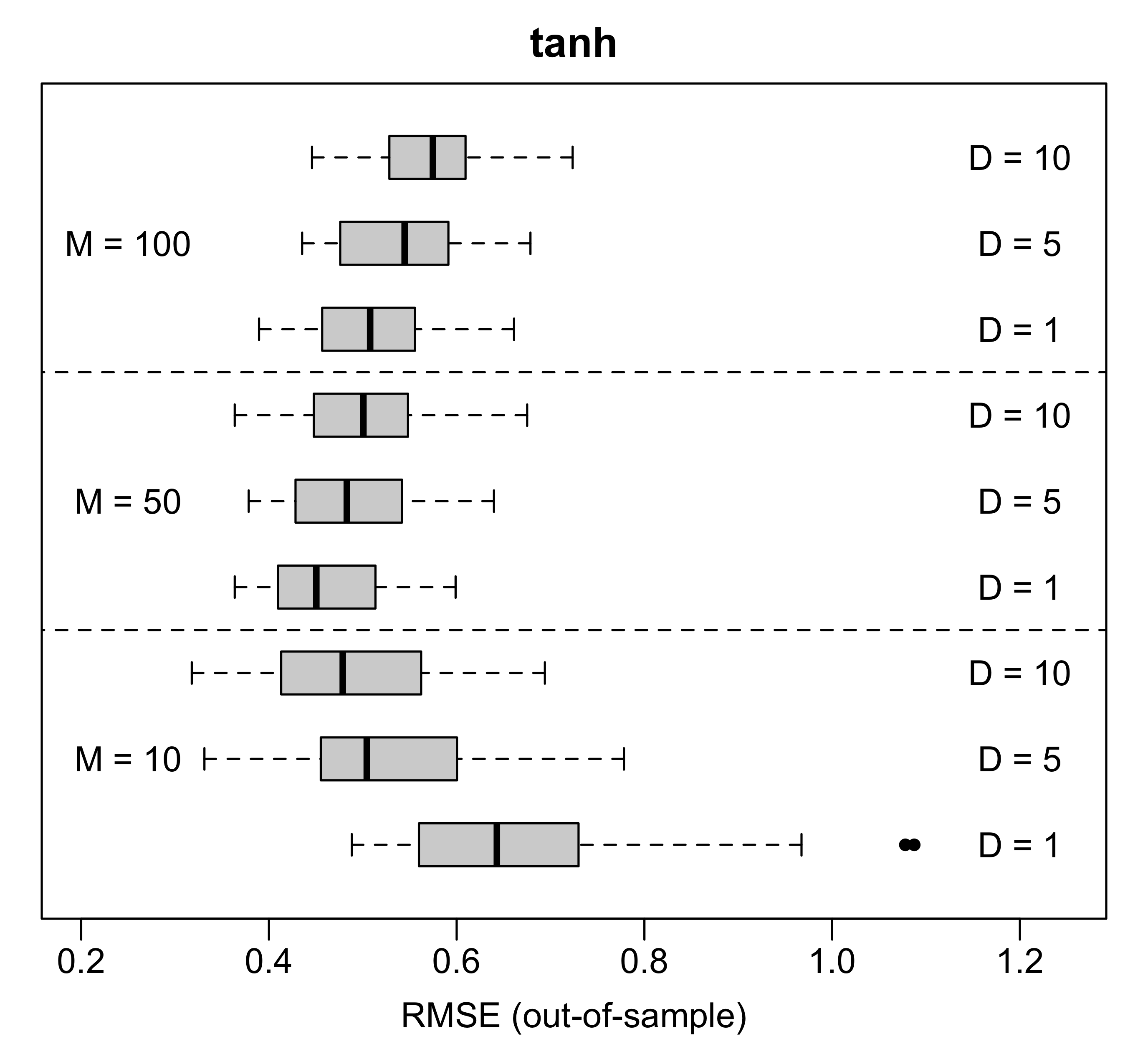}
\caption{}
\label{fig:hp_basis_tanh}
\end{subfigure}
\begin{subfigure}[b]{0.32\textwidth}
\centering
\includegraphics[width = \textwidth]{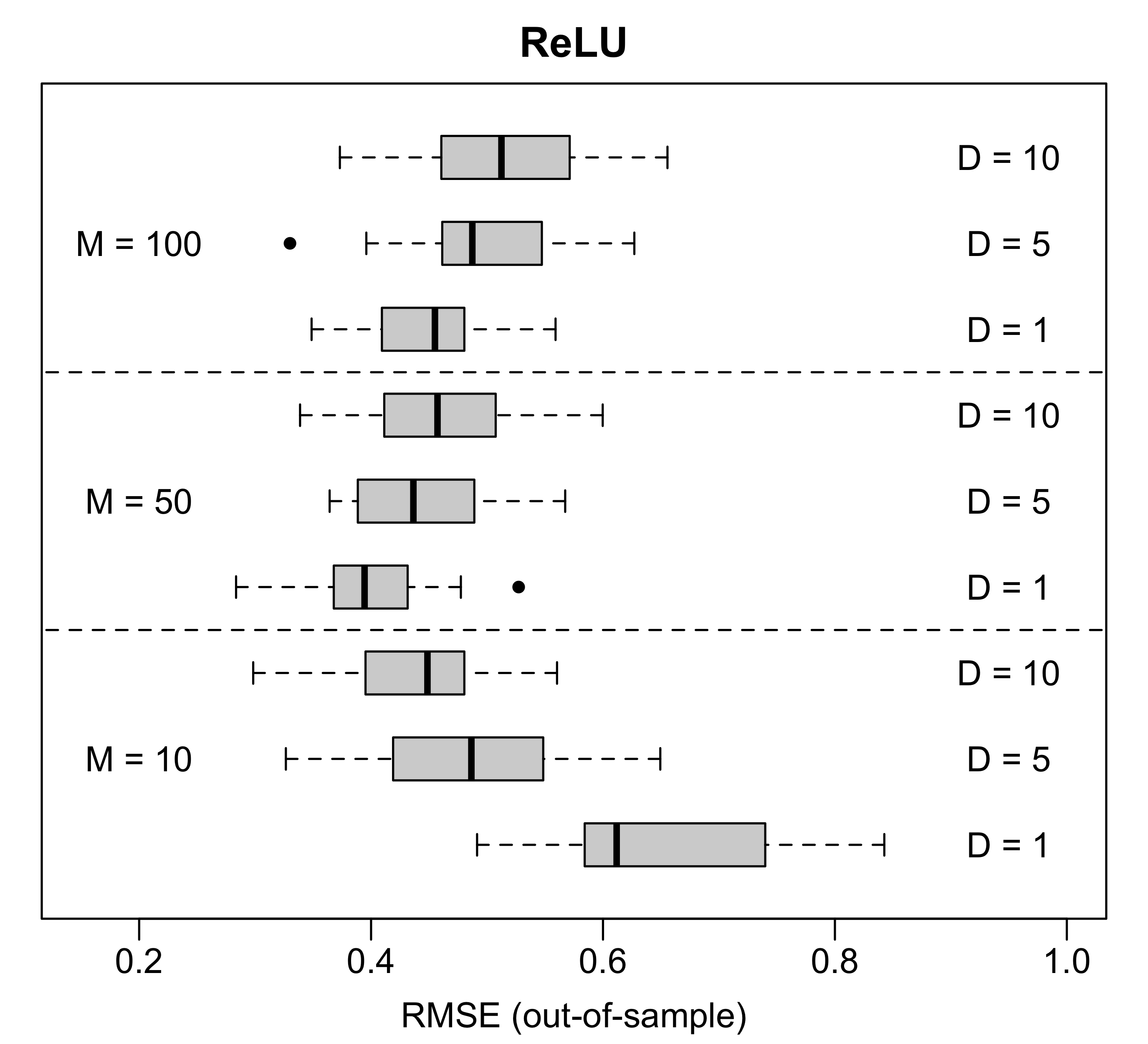}
\caption{}
\label{fig:hp_basis_relu}
\end{subfigure}
\caption{Out-of-sample RMSE across 25 train-test splits for combinations of $M$ and $D$.}
\label{fig:basis_sensitivity}
\end{figure}

%

\textbf{Sensitivity to $\nu$ and $\lambda$.} Fixing $M = 50$ and $D = 1,$ to assess sensitivity to the $\omega_\ell$ scale prior hyperparameters, we first fix $\nu = 3$ and ran ridgeBART with each combination of $\lambda$ such that $\P(\rho < q) = p$ where $p \in \{0.25, 0.5, 0.75\}$ and $q \in \{0.5, 1, 2\}.$
\Cref{fig:scale_sensitivity} shows boxplots of the RMSE on 25 out-of-sample test folds.
For these data, our recommended default values $\P(\rho < 1) = 0.5$ perform well across all activation functions. 

\begin{figure}[ht]
\centering
\begin{subfigure}[b]{0.32\textwidth}
\centering
\includegraphics[width = \textwidth]{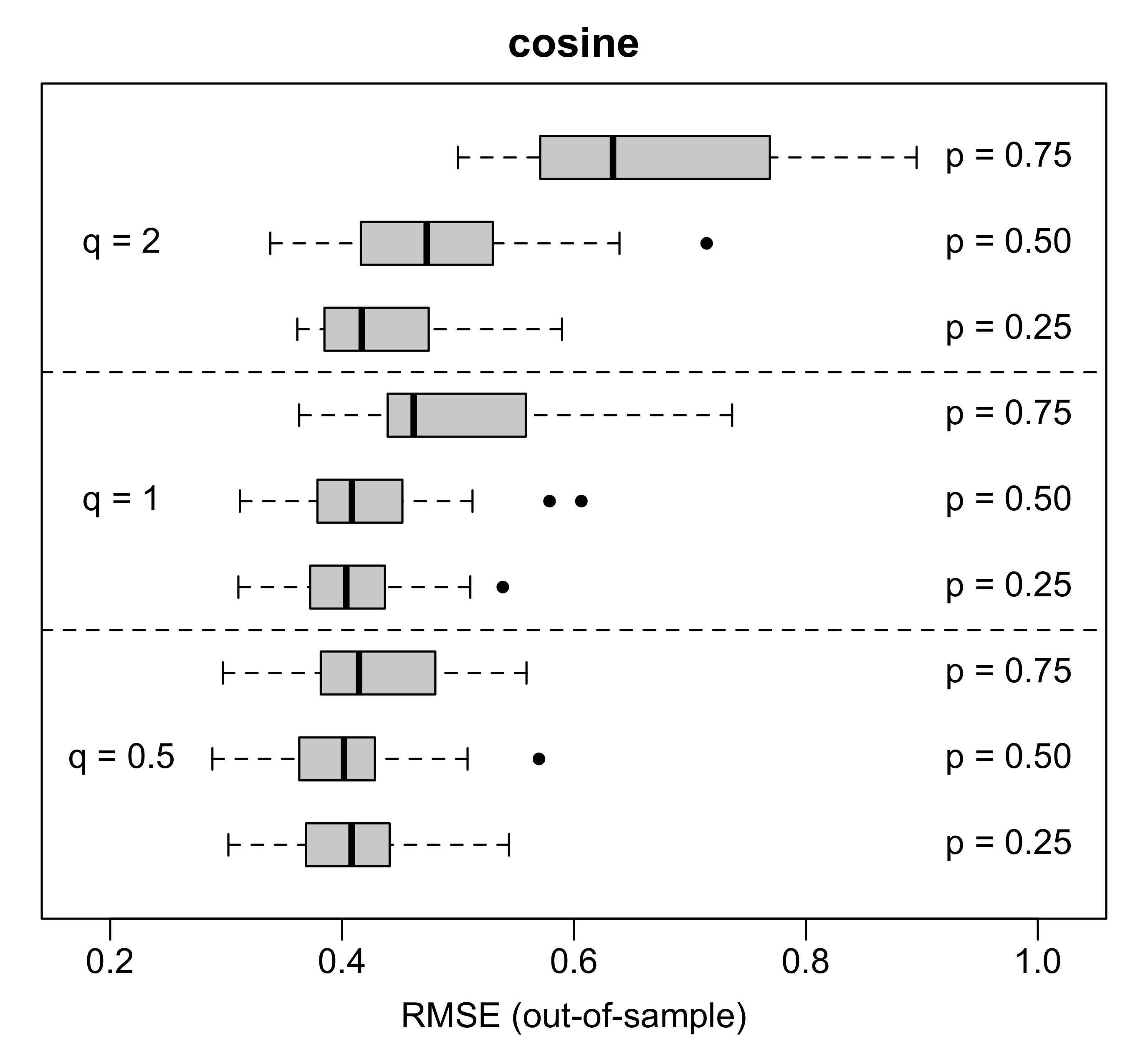}
\caption{}
\label{fig:hp_scale_cos}
\end{subfigure}
\begin{subfigure}[b]{0.32\textwidth}
\centering
\includegraphics[width = \textwidth]{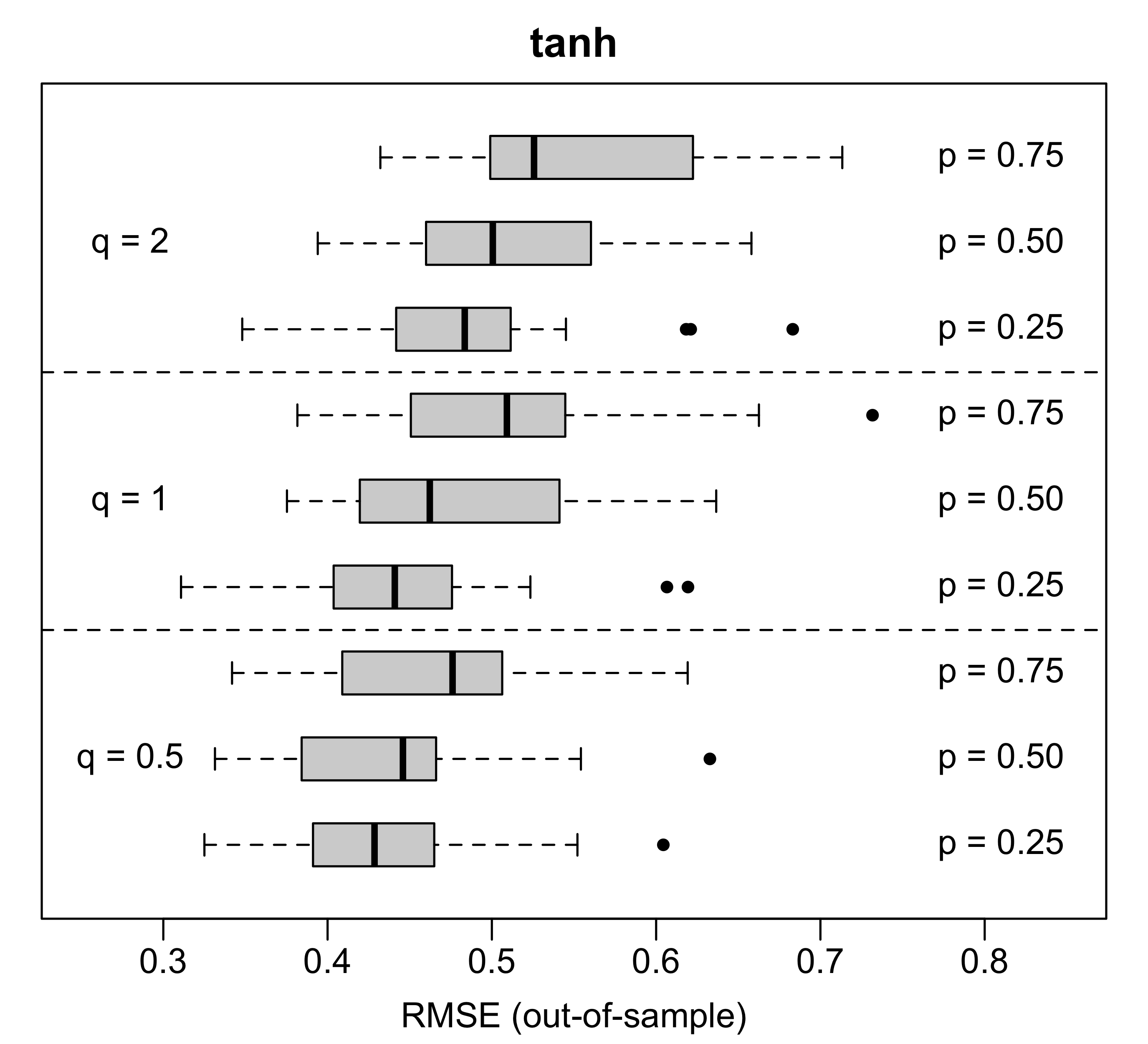}
\caption{}
\label{fig:hp_scale_tanh}
\end{subfigure}
\begin{subfigure}[b]{0.32\textwidth}
\centering
\includegraphics[width = \textwidth]{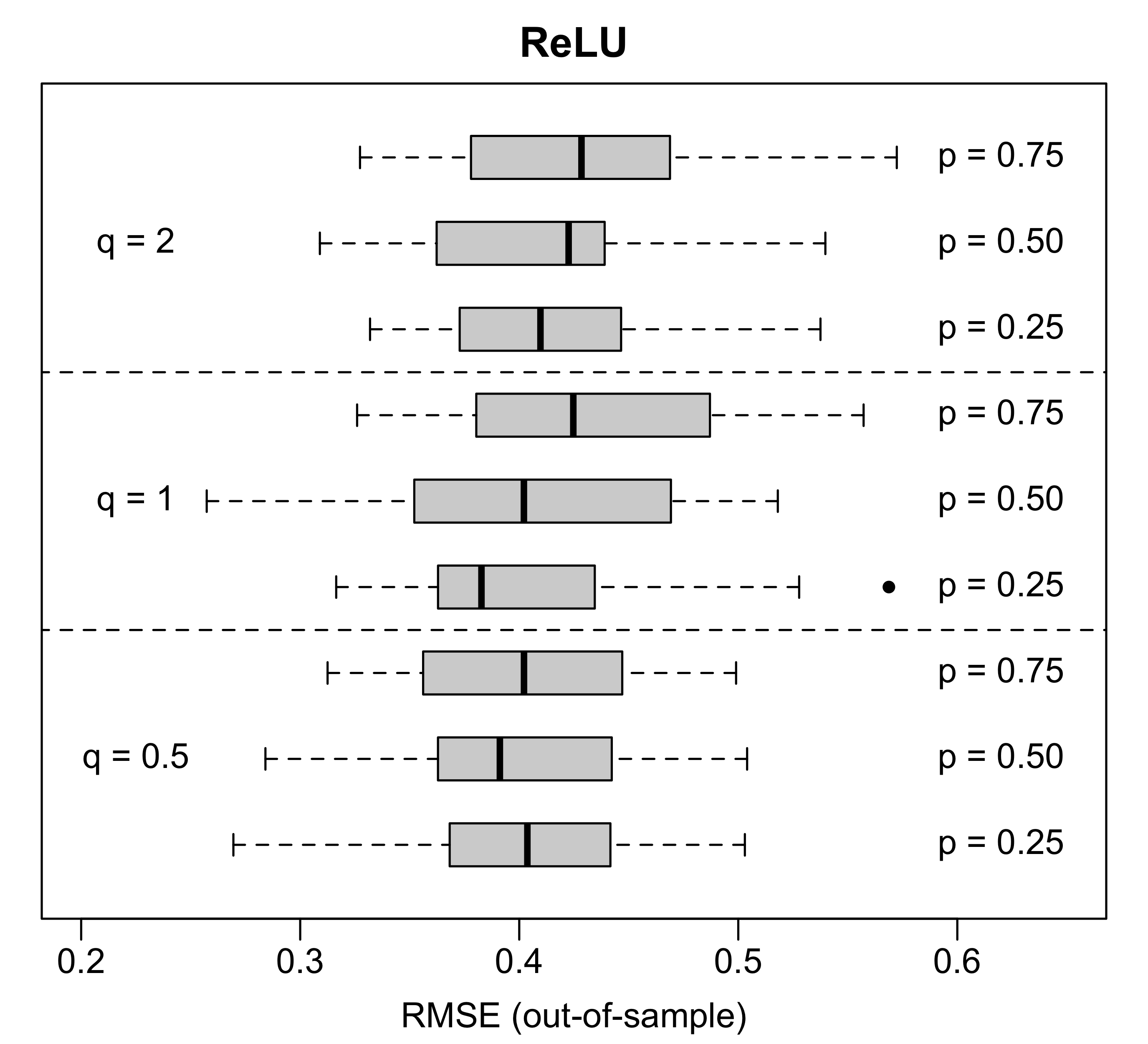}
\caption{}
\label{fig:hp_scale_relu}
\end{subfigure}
\caption{Out-of-sample RMSE on 25 train-test splits for combinations of $p$ and $q$.}
\label{fig:scale_sensitivity}
\end{figure}

\setcounter{figure}{0}
\setcounter{equation}{0}
\setcounter{table}{0}
\setcounter{theorem}{0}
\setcounter{lemma}{0}
\section{Additional experimental details and results}
\label{sec:additional_experiments}
\subsection{Recovery curve simulation}
\label{sec:recovery_curve_details}

In \suppref{\Cref{sec:ts}}{Section 5.1 of the main text}, we generated data to mimic a patients' recovery following a major injury or surgery.
For each patient $i = 1, \ldots, n,$ we generated $n_{i}$ noisy observations of the function $f(\bx_{i}, z_{it}) = (1 - A(\bx_{i}))(1 - B(\bx_{i}) e^{-z_{it} \times C(\bx_{i})}),$ where $\bx_{i}$ is a vector of $p = 6$ patient covariates and $z_{it} \in [0,24]$ is the observation time.
We drew the number of observations $n_{i} \sim 1 + \poisdist{3},$ so that each patient had at least one observation.
Conditional on $n_{i},$ we set the $z_{it}$'s to be uniform perturbations of a random subset of $n_{i}$ follow-up times in $\{1, 2, 4, 6, 8, 12, 16, 20, 24\},$ which correspond to common follow-up times after major surgery \citep{Cobian2024}. 
\Cref{fig:rc_example} shows twenty randomly chosen recovery curves and \Cref{fig:rc_obs} shows an example dataset consisting of $n = 600$ patients.

\begin{figure}[ht]
    \centering
    \begin{subfigure}{0.49\textwidth}
    \centering
    \includegraphics[width = \textwidth]{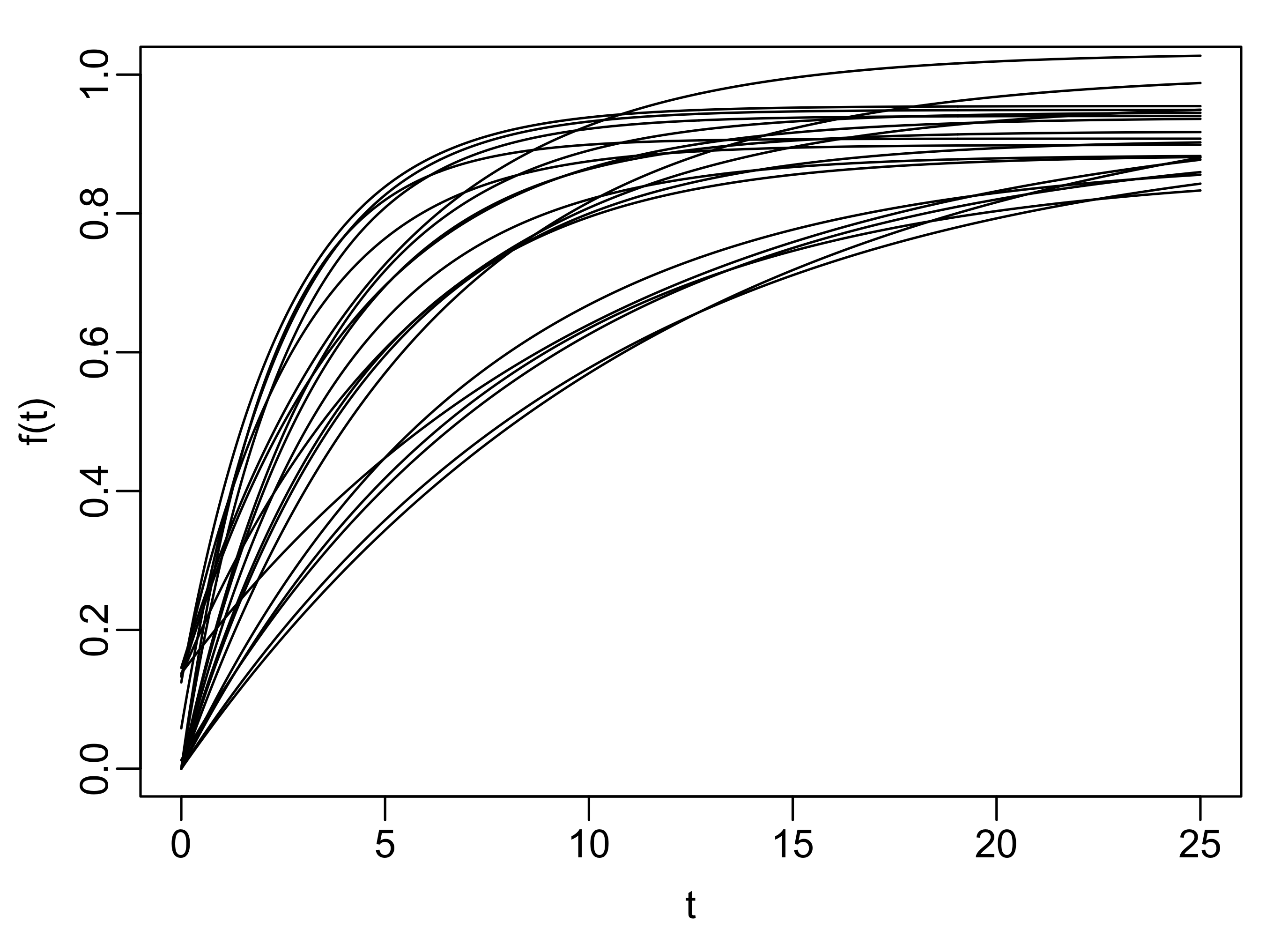}
    \caption{}
    \label{fig:rc_example}
    \end{subfigure}
    \begin{subfigure}{0.49\textwidth}
    \centering
    \includegraphics[width = \textwidth]{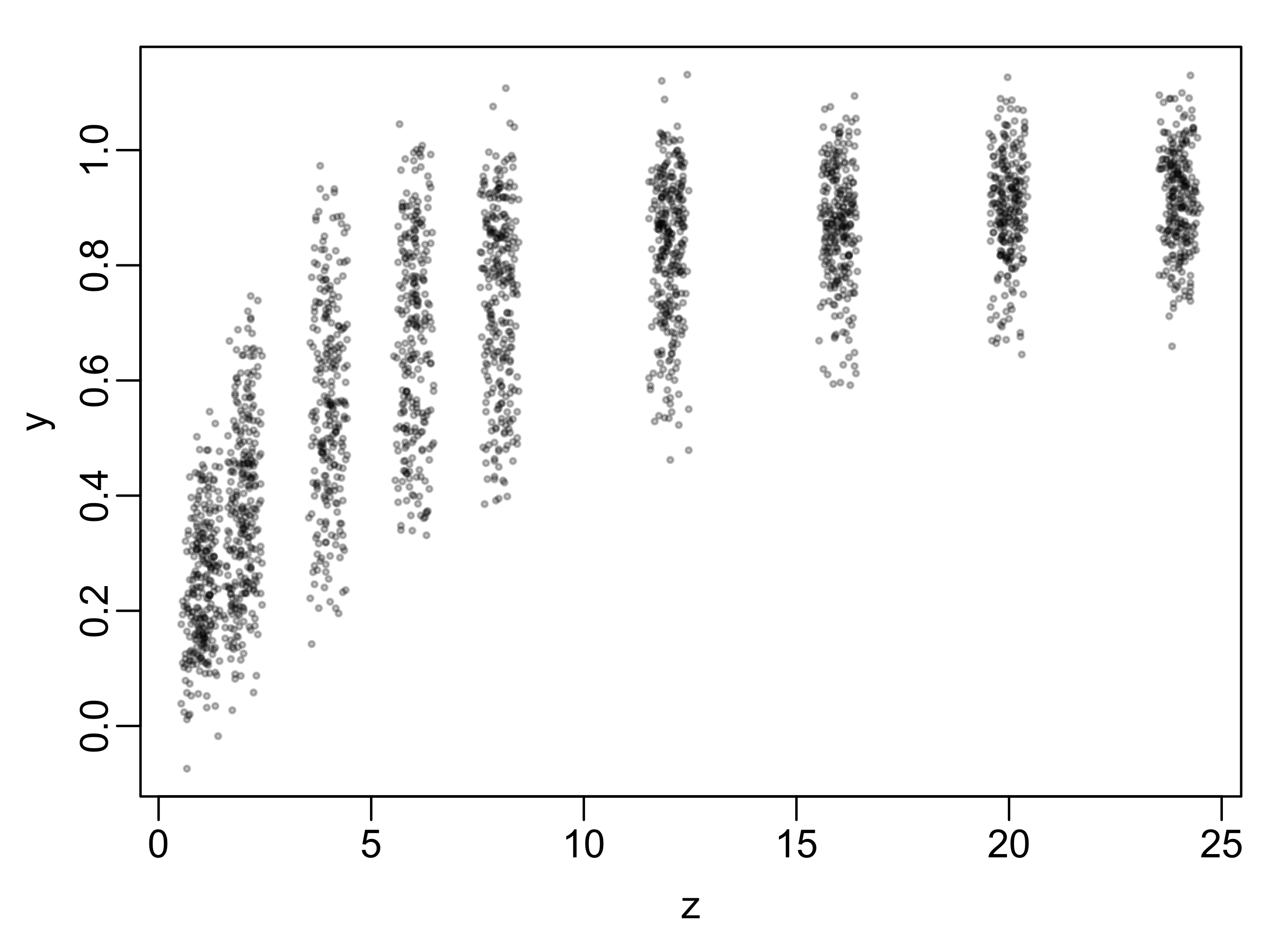}
    \caption{}
    \label{fig:rc_obs}
    \end{subfigure}
    \caption{Examples of (a) 20 randomly sampled recovery curves and (b) observations from a synthetic recovery curve dataset with 600 individuals.}
    \label{fig:rc_data}
\end{figure}

\subsection{Timing comparisons}
\label{sec:timing}
\Cref{fig:rc_time} shows the relative computation times for models fit in the recovery curve simulation study.
\Cref{fig:friedman_time} shows the number of seconds it took to fit each model in the Friedman function experiment.

\begin{figure}[ht]
    \centering
    \begin{subfigure}{0.49\textwidth}
    \centering
    \includegraphics[width = \textwidth]{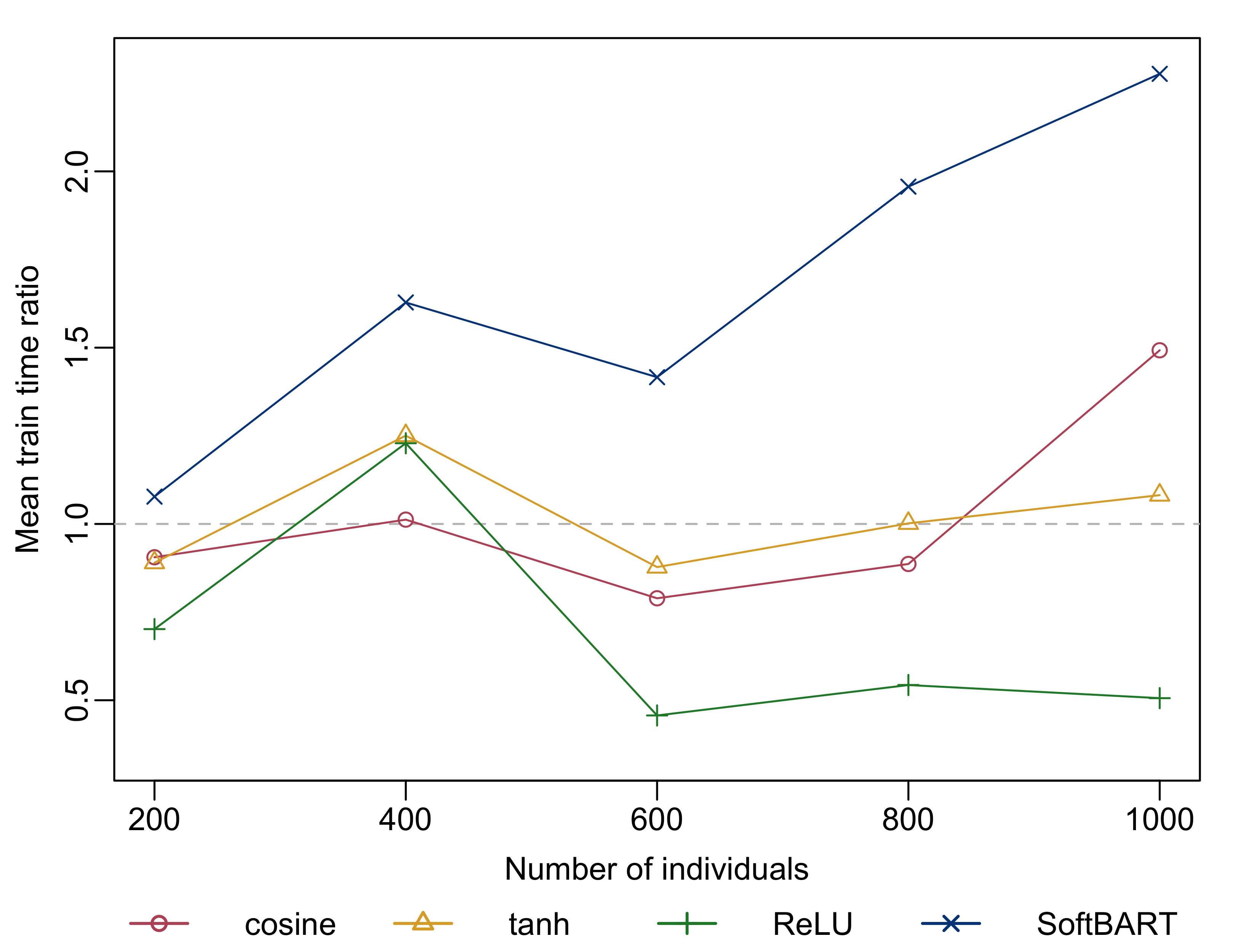}
    \caption{In-sample recovery curves}
    \label{fig:rc_time_in}
    \end{subfigure}
    \begin{subfigure}{0.49\textwidth}
    \centering
    \includegraphics[width = \textwidth]{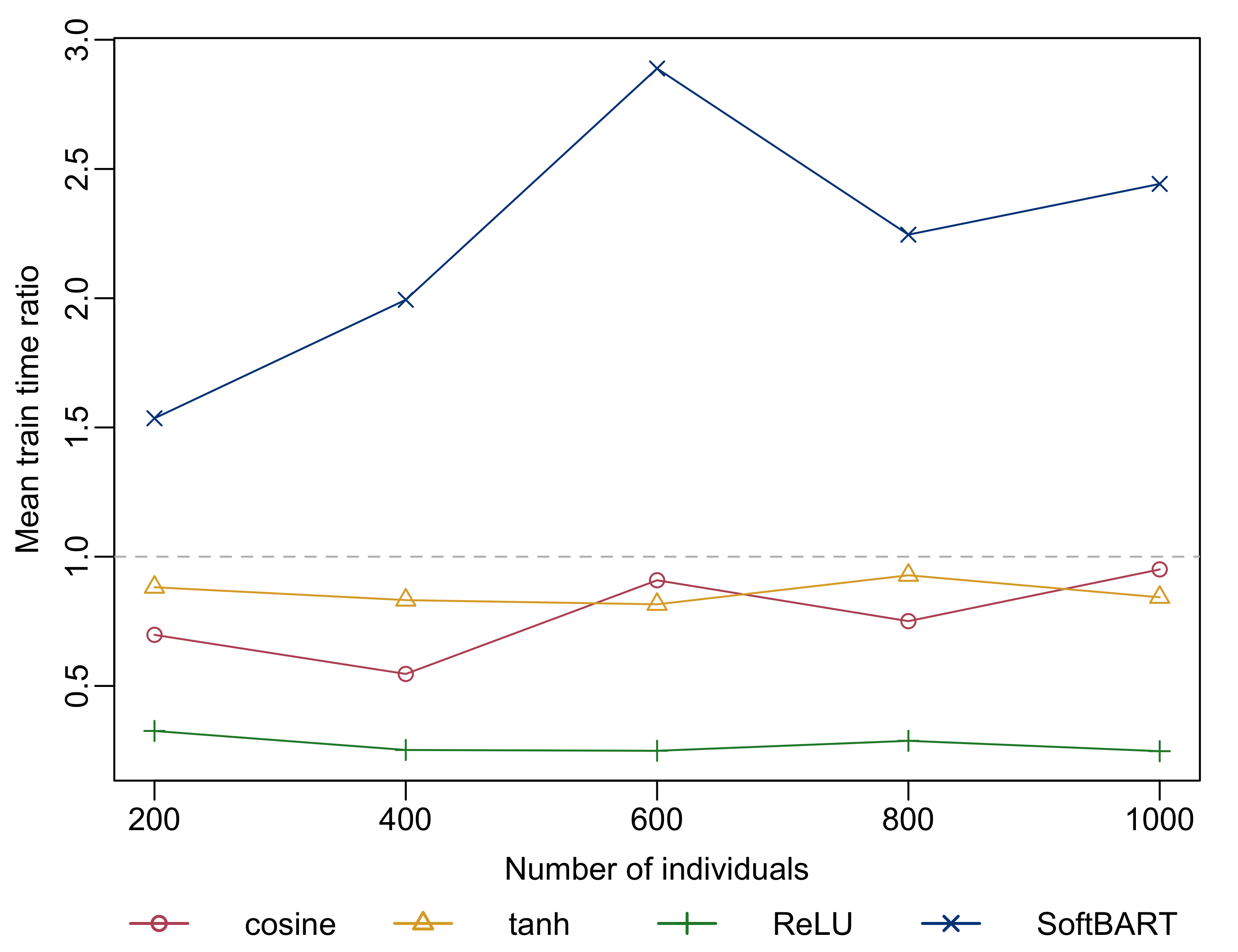}
    \caption{Out-of-sample recovery curves}
    \label{fig:rc_time_out}
    \end{subfigure}
    \caption{Average computation times for ridgeBART and SoftBART relative to BART in the recovery curve simulation study.}
    \label{fig:rc_time}
\end{figure}

\begin{figure}[ht]
    \centering
    \begin{subfigure}{0.49\textwidth}
    \centering
    \includegraphics[width = \textwidth]{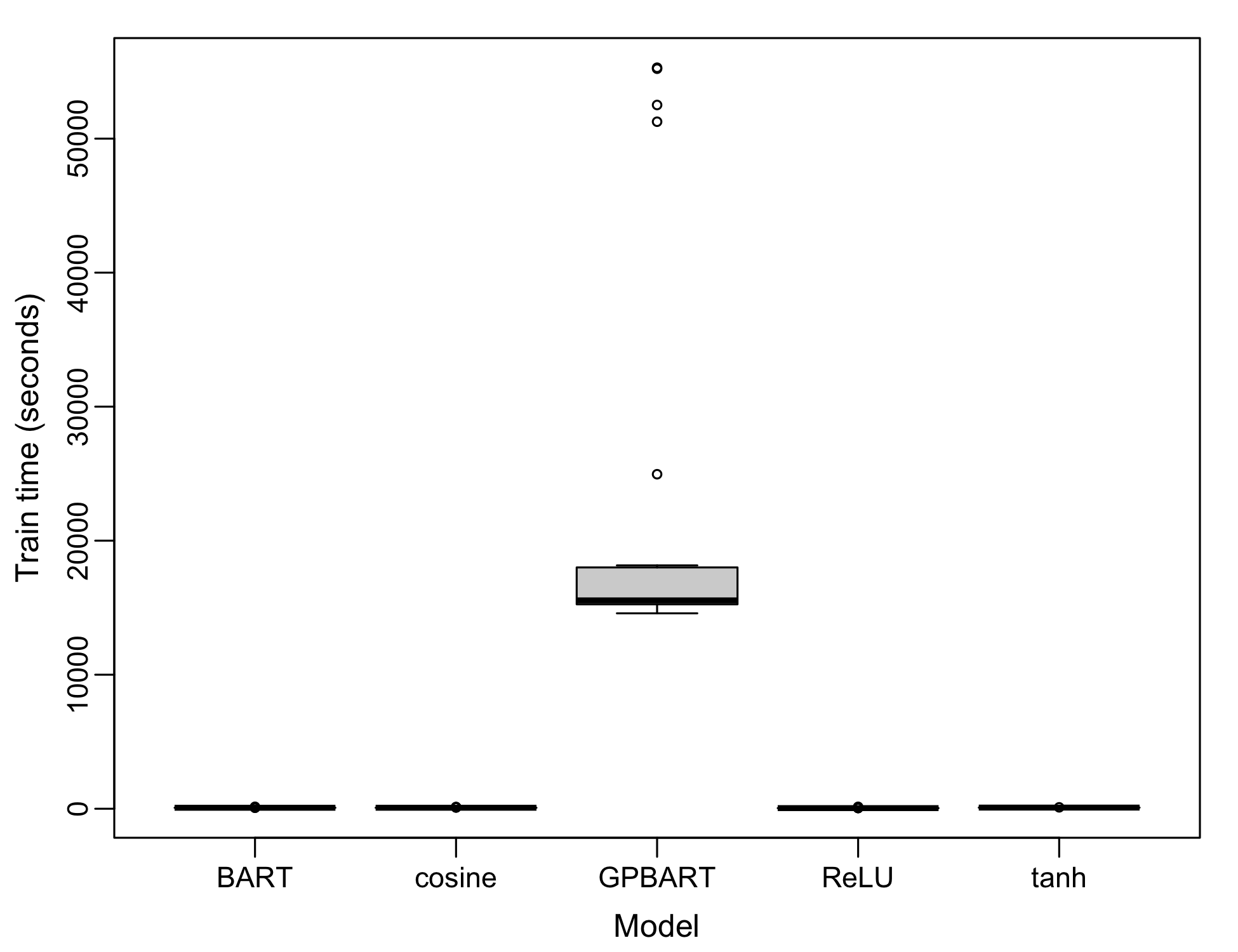}
    \caption{Computation time $n = 500$}
    \label{fig:friedman_box_time}
    \end{subfigure}
    \begin{subfigure}{0.49\textwidth}
    \centering
    \includegraphics[width = \textwidth]{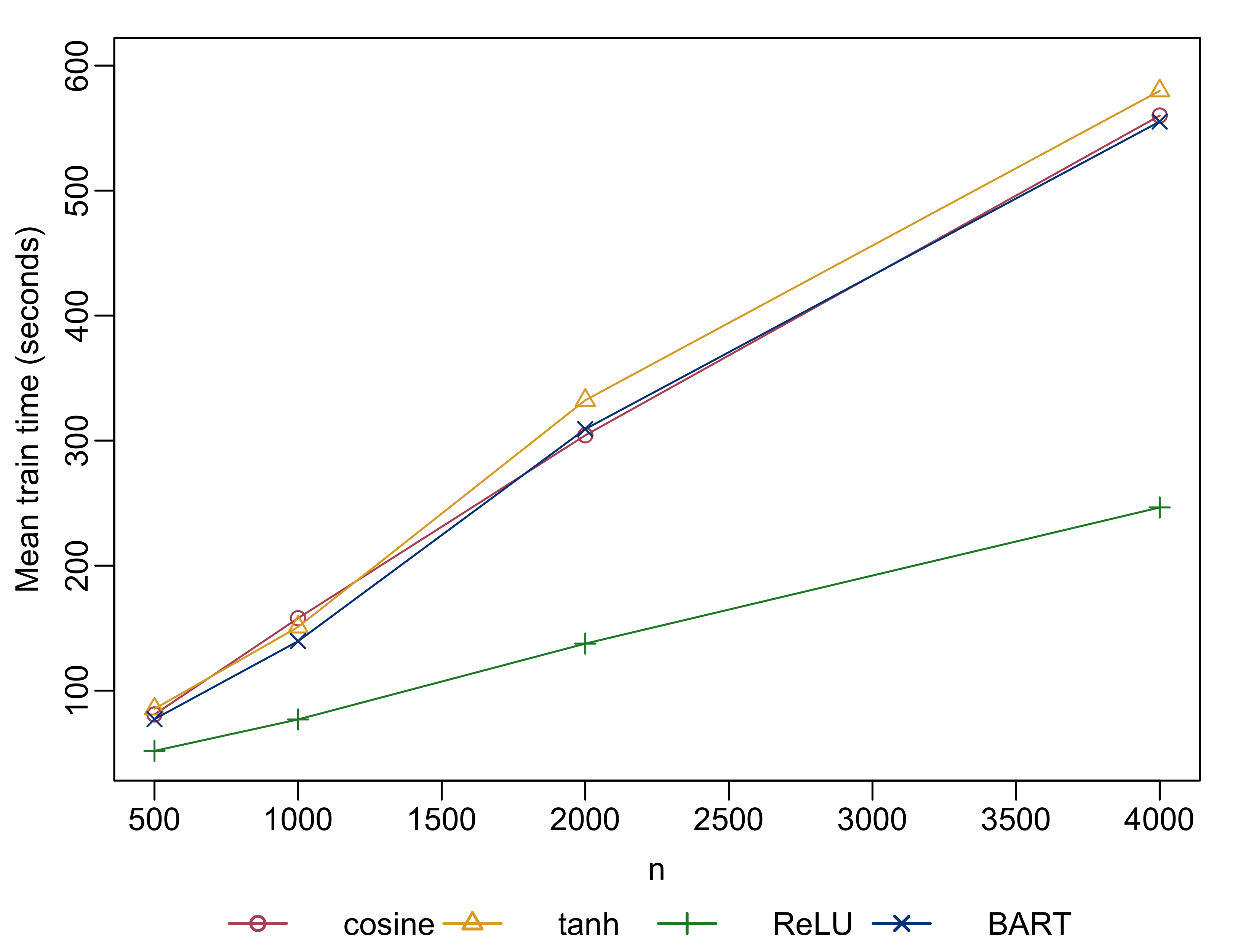}
    \caption{Computation time for multiple $n$}
    \label{fig:friedman_line_time}
    \end{subfigure}
    \caption{Time in seconds to simulate one MCMC chains of 2000 iterations each for models used in the Friedman function simulation study.}
    \label{fig:friedman_time}
\end{figure}

\subsection{NBA Shot Chart}
\label{sec:shot_chart_details}

\Cref{fig:shot_chart_cv2} compares the out-of-sample log-losses for each ridgeBART model, the completely pooled and player-specific GAMs, and BART.
We found that across folds, ridgeBART outperformed the competitors.

\begin{figure}[H]
    \centering
    \includegraphics[width = .65\textwidth]{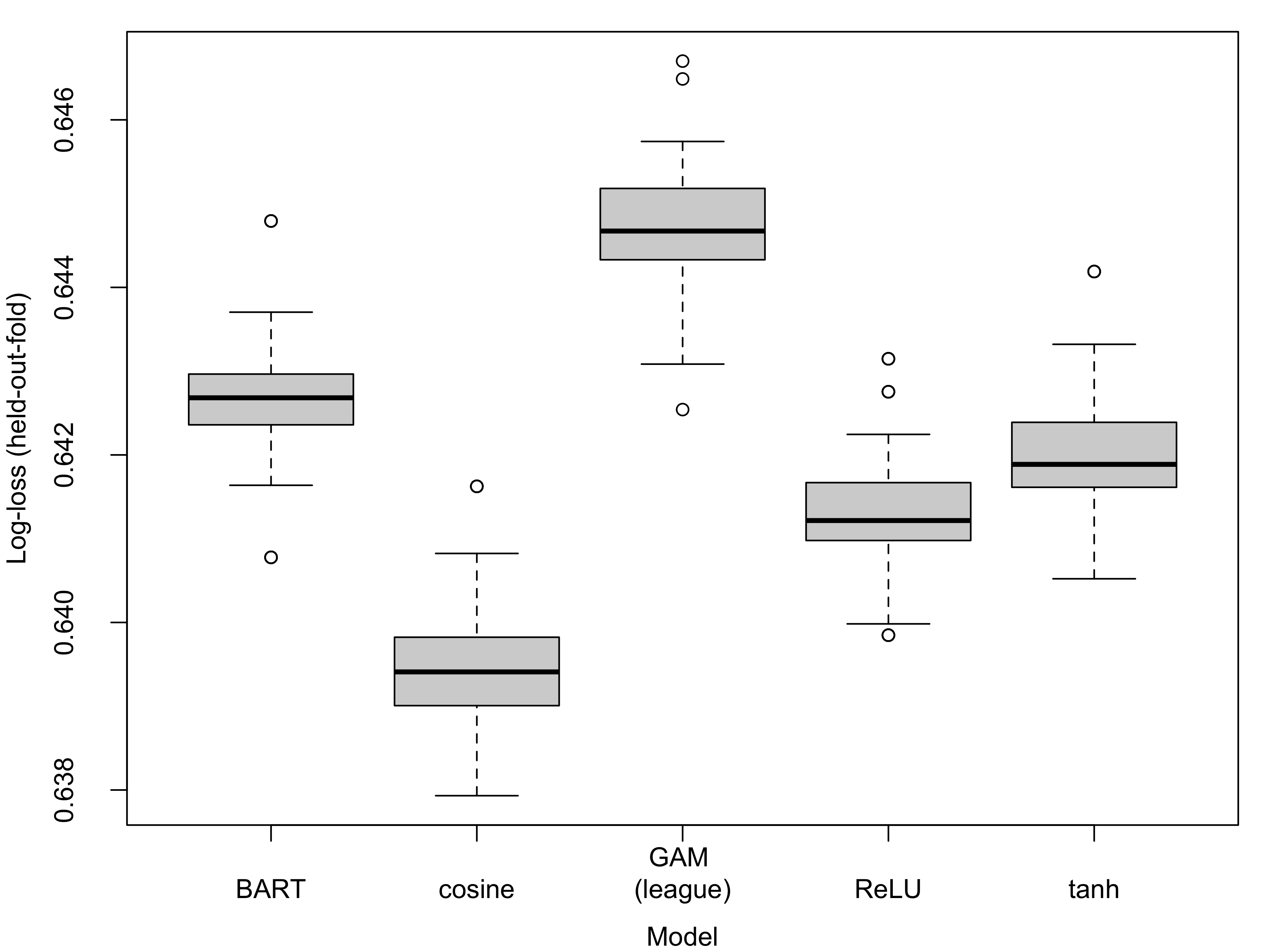}
    \caption{Performance of BART, ridgeBART, and a league-wide GAM on 20-fold cross validation of shot data from the 2023 NBA season.}
    \label{fig:shot_chart_cv2}
\end{figure}


\setcounter{figure}{0}
\setcounter{equation}{0}
\setcounter{table}{0}
\setcounter{theorem}{0}
\setcounter{lemma}{0}
\section{Proofs}
\label{sec:proofs}

\subsection{Ridge approximations anisotropic H\"older class functions}
\label{sec:globalapprox}

To rigorously prove the approximation rate of a H\"older function within each box $\Psi_r \subseteq [0,1]^p$ using ridge functions, we follow a structured approach leveraging classical approximation theory, particularly anisotropic Jackson-type inequalities (see \citet{Devore-Lorentz}) and known results on ridge function approximation. 
We thus state the local ridge approximation lemma.

\begin{lemma}\label{lem:localridge}
Let $f_0 \in \mathcal{H}_{\lambda}^{\mathcal{A},p}$ be a function in the piecewise anisotropic H\"older class. Then for each subdomain $\Psi_r \subseteq [0,1]^p$, there exists a ridge function approximation $f_{D} ^r(\bm{x})=\sum_{l=1}^{D}\beta_l ^{(r)}\varphi(\bm{\omega}_l ^{(r)\top} \bm{x}+b_l ^{(r)})$ such that
\[
\sup_{\bm{x} \in \Psi_r} |f_0(\bm{x})-f_{D} ^{r}(\bm{x})| \le C_r D ^{-\min_j \alpha_{r,j}/p}
\]
for some constant $C_r$ that depends on $\lambda$ and $\bm{\alpha}_r$.
\end{lemma}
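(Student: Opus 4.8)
The plan is to prove \Cref{lem:localridge} in two stages: first approximate $f_{0}$ restricted to the box $\Psi_{r}$ by an algebraic polynomial, and then rewrite and perturb that polynomial as a short linear combination of $\varphi$-ridge units. After an affine change of coordinates mapping $\Psi_{r}$ onto $[0,1]^{p}$ (which only rescales the H\"older coefficient by a factor depending on the side lengths of $\Psi_{r}$), I would invoke an anisotropic Jackson-type inequality for tensor-product polynomial approximation \citep[see][Ch.~7]{Devore-Lorentz}: since $f_{0}\vert_{\Psi_{r}} \in \calH^{\balpha_{r},p}_{\lambda}(\Psi_{r})$, there is a polynomial $P$ with coordinate degree at most $k$ in each variable such that $\sup_{\bx \in \Psi_{r}}\lvert f_{0}(\bx) - P(\bx)\rvert \lesssim k^{-\min_{j}\alpha_{r,j}}$ with implied constant depending only on $\lambda$ and $\balpha_{r}$, and $P$ has at most $(k+1)^{p}$ monomials. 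Applying the directional slice bounds in \Cref{def:anisotropic_holder} coordinate-by-coordinate, together with standard one-dimensional Jackson estimates and a telescoping argument, yields this; the exponent could be sharpened to $\hmean{\balpha_{r}}$ by taking unequal per-direction degrees, but the coarser $\min_{j}\alpha_{r,j}$ is all that is needed for \Cref{lem:globalridgeapprox}.

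\textbf{Converting the polynomial to ridge form.} First I would use the classical fact that every polynomial of degree at most $k$ on $\R^{p}$ is a linear combination of at most $\lesssim (k+1)^{p}$ \emph{ridge polynomials}, i.e.\ functions $\bx \mapsto (\bomega_{i}^{\top}\bx + b_{i})^{j_{i}}$ with $j_{i}\le k$ \citep[Ch.~8]{Devore-Lorentz}. Next, for each power $t\mapsto t^{j}$ with $j\le k$, restricted to the compact interval containing all arguments $\bomega_{i}^{\top}\bx+b_{i}$ for $\bx\in\Psi_{r}$, I would replace $t^{j}$ by a finite linear combination of translates-and-dilates $t\mapsto\varphi(at+c)$. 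This is exactly where Assumption (P4) enters: since $\varphi\in\calC^{k}(\R)$ is not a polynomial, the construction of \citet{Leshno1993} and \citet{Pinkus1997ApproximatingBR} furnishes such an approximation. Concretely, one selects an offset $b_{0}$ at which $\varphi^{(0)}(b_{0}),\dots,\varphi^{(k)}(b_{0})$ do not vanish and uses scaled finite divided differences of $\varphi$, which for $\varphi\in\calC^{k}$ approximate $(\cdot)^{j}\varphi^{(j)}(b_{0})$ uniformly on compacta with error controlled by the step size; shrinking the step makes the replacement error negligible relative to the stage-one error. Bookkeeping then shows the total ridge-unit count is $D_{r}\lesssim (k+1)^{p}$, so setting $k\asymp D_{r}^{1/p}$ is admissible.

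\textbf{Collating.} By the triangle inequality, $\sup_{\bx\in\Psi_{r}}\lvert f_{0}(\bx)-f_{D}^{r}(\bx)\rvert \le \sup_{\bx\in\Psi_{r}}\lvert f_{0}-P\rvert + \sup_{\bx\in\Psi_{r}}\lvert P-f_{D}^{r}\rvert \lesssim k^{-\min_{j}\alpha_{r,j}} + o(k^{-\min_{j}\alpha_{r,j}}) \lesssim D_{r}^{-\min_{j}\alpha_{r,j}/p}$, where, tracking each step, the implied constant $C_{r}$ depends only on $\lambda$, $\balpha_{r}$, $p$, and the fixed activation $\varphi$. Uniformity over $r=1,\dots,R$ is automatic since the number of boxes is fixed and each $C_{r}$ is finite; this is precisely what the smooth partition-of-unity ``gluing'' in the proof of \Cref{lem:globalridgeapprox} requires.

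\textbf{Main obstacle.} I expect the delicate part to be the second stage: simultaneously (i) keeping the ridge-unit count at $O((k+1)^{p})$ so the exponent comes out to exactly $\min_{j}\alpha_{r,j}/p$ rather than something worse, and (ii) quantifying the polynomial-to-$\varphi$ replacement error using only $\varphi\in\calC^{k}$ --- the divided-difference estimate for the top degree $j=k$ is subtle because it would naively want one more derivative of $\varphi$, so one must either treat degree $k$ by a separate limiting argument or appeal to the qualitative density theorem of \citet{Leshno1993} while still bounding the number of units. A secondary nuisance is making all constants and the admissible range of $k$ uniform across the $R$ boxes and tracking how the affine rescaling of each $\Psi_{r}$ enters $C_{r}$'s dependence on $\lambda$ and $\balpha_{r}$.
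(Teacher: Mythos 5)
Your two-stage strategy --- anisotropic Jackson polynomial approximation, then replacement of the polynomial by ridge units, then a triangle inequality with $k \asymp D^{1/p}$ --- matches the overall architecture of the paper's proof and lands on the same rate with the same dependence of $C_r$ on $\lambda$ and $\balpha_r$. The differences are in how each stage is executed. For stage one the paper constructs the polynomial explicitly as a convolution with a tensor-product Jackson kernel, deriving the rate from a modulus-of-continuity bound $|f_0(\bm{x})-f_0(\bm{v})|\le C_1\sum_j|x_j-v_j|^{\alpha_{r,j}}$ proved by a coordinate-path telescoping argument; your coordinate-wise Jackson estimate plus telescoping is an equivalent route to the same $N^{-\min_j\alpha_{r,j}}$ bound. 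The more substantive divergence is in stage two. The paper bypasses the explicit ridge decomposition entirely: it observes that the polynomial $P_N$ lies in $W^{k,\infty}(\Psi_r)$ for arbitrarily large $k$ and invokes \citet[Theorem 6.8]{Pinkus1997ApproximatingBR} as a black box to get a ridge approximation at rate $O(D^{-k/p})$, then chooses $k>\min_j\alpha_{r,j}$ so this term is dominated. You instead run the classical constructive pipeline: expand $P$ into $O((k+1)^p)$ ridge polynomials $(\bomega_i^\top\bx+b_i)^{j_i}$ via \citet[Ch.~8]{Devore-Lorentz}, then replace each univariate power $t^j$ on the relevant compact interval by scaled divided differences of $\varphi$ about a point $b_0$ where $\varphi^{(j)}(b_0)\neq 0$, using (P4) to guarantee such a $b_0$ exists and shrinking the step so the replacement error is negligible. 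Your route is more explicit and makes visible exactly where (P4) enters; the paper's invocation of the Sobolev rate is shorter but hides the same divided-difference machinery inside Pinkus's proof. The derivative-counting worry you flag for degree $j=k$ (a $k$-th divided difference nominally wants control of $\varphi^{(k+1)}$, or else a separate limiting argument) is a genuine fine point that the paper's proof also does not address explicitly --- it is absorbed into the cited theorem --- so it would need care in either write-up, but it does not invalidate the approach. Both proofs are valid and interchangeable for the downstream use in \Cref{lem:globalridgeapprox}.
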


\begin{proof}
We first establish a rigorous bound on the modulus of continuity for $f_0$ based on its H\"{o}lder properties.

\textbf{(1) Bounding the modulus of continuity.}
Let $\bm{x}, \bm{v}$ be two points in the subdomain $\Psi_r$. We wish to bound $|f_0(\bm{x}) - f_0(\bm{v})|$. We can construct a path from $\bm{v}$ to $\bm{x}$ that changes one coordinate at a time. Let $\bm{z}_0 = \bm{v}$, $\bm{z}_j = (x_1, \dots, x_j, v_{j+1}, \dots, v_p)$, and $\bm{z}_p = \bm{x}$. Then, by the triangle inequality:
\[
|f_0(\bm{x}) - f_0(\bm{v})| = |f_0(\bm{z}_p) - f_0(\bm{z}_0)| \le \sum_{j=1}^p |f_0(\bm{z}_j) - f_0(\bm{z}_{j-1})|
\]
Each term $|f_0(\bm{z}_j) - f_0(\bm{z}_{j-1})|$ involves a change only in the $j$-th coordinate. Let $m_j = \lfloor \alpha_{r,j} \rfloor$ and $\tau_j = \alpha_{r,j} - m_j$. For the univariate slice function $\tilde{f}_j(t) = f_0(x_1, \dots, x_{j-1}, t, v_{j+1}, \dots, v_p)$, Taylor's theorem with integral remainder gives:
\[
\tilde{f}_j(x_j) - \tilde{f}_j(v_j) = \sum_{k=1}^{m_j-1} \frac{\tilde{f}_j^{(k)}(v_j)}{k!}(x_j-v_j)^k + \int_{v_j}^{x_j} \frac{(x_j-t)^{m_j-1}}{(m_j-1)!} \tilde{f}_j^{(m_j)}(t) dt
\]
The H\"{o}lder condition from \suppref{\Cref{def:anisotropic_holder}}{Definition 1} states $|\tilde{f}_j^{(m_j)}(t) - \tilde{f}_j^{(m_j)}(v_j)| \le \lambda |t-v_j|^{\tau_j}$. Using this, the difference $|f_0(\bm{z}_j) - f_0(\bm{z}_{j-1})| = |\tilde{f}_j(x_j) - \tilde{f}_j(v_j)|$ can be shown to be bounded by a constant times $|x_j - v_j|^{\alpha_{r,j}}$. Summing these contributions gives the bound on the modulus of continuity:
\[
|f_0(\bm{x}) - f_0(\bm{v})| \le C_1 \sum_{j=1}^p |x_j - v_j|^{\alpha_{r,j}}
\]
for a constant $C_1$ depending on $\lambda$ and $\bm{\alpha}_r$. This is the key property we will use.
We then use this result to bound the error of approximating $f_0$ with a specially constructed multivariate polynomial, $P_N(\bm{x})$.

\textbf{(2) Polynomial approximation via a kernel operator.}
We construct a polynomial approximant for $f_0$ using a convolution with a polynomial kernel. Let $K_N(t)$ be the univariate Jackson kernel, a non-negative polynomial of degree less than $2N$ with specific properties (see \citet{jordao2014jacksonkernelstoolanalyzing} for an extensive review). We define a multivariate tensor-product kernel $\mathcal{K}_N(\bm{z}) = \prod_{j=1}^p K_N(z_j)$. We can now define a polynomial approximation operator $\mathcal{P}_N$ as:
\[
P_N(\bm{x}) = (\mathcal{P}_N f_0)(\bm{x}) = \int_{\mathbb{R}^p} f_0(\bm{y}) \mathcal{K}_N(\bm{x}-\bm{y}) d\bm{y}
\]
By a change of variables $\bm{z} = \bm{x}-\bm{y}$, this becomes:
\[
P_N(\bm{x}) = \int_{\mathbb{R}^p} f_0(\bm{x}-\bm{z}) \mathcal{K}_N(\bm{z}) d\bm{z}
\]
The Jackson kernel is constructed such that $\int K_N(t) dt = 1$, which implies $\int \mathcal{K}_N(\bm{z}) d\bm{z} = 1$. We can now bound the approximation error:
\begin{align*}
|f_0(\bm{x}) - P_N(\bm{x})| &= \left| f_0(\bm{x}) \cdot 1 - \int f_0(\bm{x}-\bm{z}) \mathcal{K}_N(\bm{z}) d\bm{z} \right| \\
&= \left| \int [f_0(\bm{x}) - f_0(\bm{x}-\bm{z})] \mathcal{K}_N(\bm{z}) d\bm{z} \right| \\
&\le \int |f_0(\bm{x}) - f_0(\bm{x}-\bm{z})| \mathcal{K}_N(\bm{z}) d\bm{z}
\end{align*}
Now, we use the modulus of continuity bound from Step 1:
\[
|f_0(\bm{x}) - P_N(\bm{x})| \le \int \left( C_1 \sum_{j=1}^p |z_j|^{\alpha_{r,j}} \right) \mathcal{K}_N(\bm{z}) d\bm{z} = C_1 \sum_{j=1}^p \int |z_j|^{\alpha_{r,j}} \mathcal{K}_N(\bm{z}) d\bm{z}
\]
Since the kernel $\mathcal{K}_N(\bm{z})$ is a product of univariate kernels, the integral separates:
\[
\int |z_j|^{\alpha_{r,j}} \mathcal{K}_N(\bm{z}) d\bm{z} = \left( \int |z_j|^{\alpha_{r,j}} K_N(z_j) dz_j \right) \prod_{i \ne j} \left( \int K_N(z_i) dz_i \right) = \int_{-\infty}^{\infty} |t|^{\alpha_{r,j}} K_N(t) dt
\]
A key property of the Jackson kernel is that for any $\alpha > 0$, $\int |t|^\alpha K_N(t) dt \le C_\alpha N^{-\alpha}$. Applying this, we get:
\[
|f_0(\bm{x}) - P_N(\bm{x})| \le C_1 \sum_{j=1}^p C_{\alpha_{r,j}} N^{-\alpha_{r,j}} \le C_2 N^{-\min_j \alpha_{r,j}}
\]
The polynomial $P_N(\bm{x})$ has degree less than $2N$ in each coordinate and thus has $O(N^p)$ coefficients. We link this to our ridge function budget $D$ by setting $D \asymp N^p$, which implies $N \asymp D^{1/p}$. Substituting this into the error bound gives:
\begin{equation} \label{eq:poly_approx_final}
\sup_{\bm{x} \in \Psi_r} |f_0(\bm{x}) - P_N(\bm{x})| \le C_2 (D^{1/p})^{-\min_j \alpha_{r,j}} = C_2 D^{-\min_j \alpha_{r,j} / p}
\end{equation}

\textbf{(3) Ridge function approximation of the polynomial.}
Now we must show that the polynomial $P_N(\bm{x})$ from the previous step can be well-approximated by our target structure, a sum of $D$ ridge functions, $f_D^r(\bm{x})$. While foundational universal approximation theorems \citep{Leshno1993,cybenko1989} guarantee the \emph{existence} of such an approximation, they do not provide the explicit rate of convergence needed for our argument.

To establish a rate, we leverage quantitative results from approximation theory for neural networks, which show that networks can approximate smooth functions with extreme efficiency. The key insight is that our intermediate function, the polynomial $P_N(\bm{x})$, is infinitely differentiable ($C^\infty$) on the compact subdomain $\Psi_r$. This high degree of smoothness allows for an approximation rate that is much faster than the rate for functions that are merely H\"older continuous.

According to established results in approximation theory (see, e.g., \citet[Theorem 6.8]{Pinkus1997ApproximatingBR}), a function residing in a Sobolev space $W^{k,\infty}(\Psi_r)$ can be approximated by a shallow network (a sum of ridge functions) with $D$ neurons with an error bounded by $O(D^{-k/p})$. Since our polynomial $P_N$ is in $W^{k,\infty}(\Psi_r)$ for any arbitrarily large integer $k$, we can select $k$ such that the approximation error is negligible compared to the primary error from the previous step.

Let's choose an integer $k$ large enough so that $k > \min_j \alpha_{r,j}$. For this $k$, the theory guarantees there exists a ridge function approximant $f_D^r(\bm{x}) = \sum_{l=1}^{D}{\beta_l ^{(r)}\varphi(\bm{\omega}_l ^{(r)\top} \bm{x}+b_l ^{(r)})}$ with $D$ neurons such that:
\begin{equation} \label{eq:ridge_approx_final}
\sup_{\bm{x} \in \Psi_r} |P_N(\bm{x}) - f_D^r(\bm{x})| \le C_3 D^{-k / p}
\end{equation}
Because we chose $k > \min_j \alpha_{r,j}$, the rate $D^{-k/p}$ is asymptotically much faster than the rate $D^{-\min_j \alpha_{r,j} / p}$ from the polynomial approximation step. 

Finally, we use the triangle inequality to combine the two error bounds from \eqref{eq:poly_approx_final} and \eqref{eq:ridge_approx_final}:

\begin{align*}
\sup_{\bm{x} \in \Psi_r} |f_0(\bm{x}) - f_D^r(\bm{x})| & \le \sup_{\bm{x} \in \Psi_r} |f_0(\bm{x}) - P_N(\bm{x})| + \sup_{\bm{x} \in \Psi_r} |P_N(\bm{x}) - f_D^r(\bm{x})| \\
& \le C_2 D^{-\min_j \alpha_{r,j} / p} + C_3 D^{-\min_j \alpha_{r,j} / p} \\
& \le C_r D^{-\min_j \alpha_{r,j} / p}
\end{align*}
where $C_r = C_2 + C_3$ is a constant depending on $\lambda$ and $\bm{\alpha}_r$. This completes the proof. 
\end{proof}

\begin{proof}[Proof of \suppref{\Cref{lem:globalridgeapprox}}{Lemma 1}]
The global ridge approximation proof proceeds in three main steps. 
First, we construct an ideal global approximant, $f_{\textnormal{ideal}}(\bm{x})$, by smoothly ``gluing together'' the local polynomial approximations from the proof of \Cref{lem:localridge}. We will bound the error between the true function $f_0$ and this ideal function. 
Second, we recognize that this ideal approximant is a single, globally smooth function. 
We can therefore apply universal approximation theorems to show that it can be well-approximated by our target structure: a single sum of ridge functions, which we call $\hat{f}(\bm{x})$. Finally, we combine the error terms and manage the complexity budget ($D$ vs. $D^\star$) to arrive at the final result. 

\textbf{1. Constructing and bounding the ideal approximant.}
Let the domain $\boldsymbol{\Psi} = [0,1]^p$ be partitioned into $R$ rectangular boxes $\{\Psi_r\}_{r=1}^R$. From the proof of \Cref{lem:localridge}, we know that for each subdomain $\Psi_r$, there exists a multivariate polynomial $P_{N,r}(\bm{x})$ of degree less than $N$ in each coordinate that approximates $f_0$ with the following error:
\[
\sup_{\bm{x} \in \Psi_r} |f_0(\bm{x}) - P_{N,r}(\bm{x})| \le C_r N^{-\min_j \alpha_{r,j}}
\]
To create a single global approximant from these local polynomials, we introduce a \textbf{smooth partition of unity}. This is a set of functions $\{\phi_r(\bm{x})\}_{r=1}^R$ such that:
(i) Each $\phi_r$ is infinitely differentiable ($C^\infty$), (ii) $\phi_r(\bm{x}) \ge 0$ for all $\bm{x}$, and $\phi_r(\bm{x})$ is zero outside a small neighborhood of $\Psi_r$. Additionally, (iii) $\sum_{r=1}^R \phi_r(\bm{x}) = 1$ for all $\bm{x} \in \boldsymbol{\Psi}$. This partition-of-unity technique is not uncommon in  approximation theory, for instance, Radial-basis-function (RBF) solvers often build local RBF interpolants on overlapping patches, then blend them with a smooth partition-of-unity to get a global interpolant \citep{BERNAL2024112842}.
Using this partition, we define our ideal global approximant as:
\[
f_{\text{ideal}}(\bm{x}) = \sum_{r=1}^R \phi_r(\bm{x}) P_{N,r}(\bm{x})
\]
We now bound the error of this ideal approximant. Since $\sum_r \phi_r(\bm{x}) = 1$, we can write $f_0(\bm{x}) = \sum_r \phi_r(\bm{x}) f_0(\bm{x})$.
\begin{align*}
|f_0(\bm{x}) - f_{\text{ideal}}(\bm{x})| &= \left| \sum_{r=1}^R \phi_r(\bm{x}) f_0(\bm{x}) - \sum_{r=1}^R \phi_r(\bm{x}) P_{N,r}(\bm{x}) \right| \\
&= \left| \sum_{r=1}^R \phi_r(\bm{x}) \left[ f_0(\bm{x}) - P_{N,r}(\bm{x}) \right] \right| \\
&\le \sum_{r=1}^R \phi_r(\bm{x}) \left| f_0(\bm{x}) - P_{N,r}(\bm{x}) \right|
\end{align*}
Taking the supremum over all $\bm{x}$:
\begin{align*}
\lVert f_0 - f_{\text{ideal}}\rVert_{\infty} &\le \sup_{\bm{x}} \sum_{r=1}^R \phi_r(\bm{x}) \max_{s=1,\dots,R} \left( \sup_{\bm{y} \in \Psi_s} |f_0(\bm{y}) - P_{N,s}(\bm{y})| \right) \\
&= \left(\sup_{\bm{x}} \sum_{r=1}^R \phi_r(\bm{x})\right) \left( \max_{s=1,\dots,R} \lVert f_0 - P_{N,s}\rVert_{\infty, \Psi_s} \right) \\
&\le \max_{r=1,\dots,R} \left( C_r N^{-\min_j \alpha_{r,j}} \right) \\
&\le C' N^{-\min_{r,j} \alpha_{r,j}}
\end{align*}

\textbf{2. Approximating the ideal function with ridge functions.}
The function $f_{\text{ideal}}(\bm{x})$ is a sum of products of $C^\infty$ functions ($\phi_r$) and polynomials ($P_{N,r}$), so it is itself a single, globally smooth ($C^\infty$) function. Crucially, it does not have the target structure of a simple sum of ridge functions.

Our goal is to find a function $\hat{f}(\bm{x}) = \sum_{d=1}^{D^\star} \beta_d \varphi(\bm{\omega}_d^\top\bm{x} + b_d)$ that approximates $f_{\text{ideal}}(\bm{x})$. Again observing that $f_{\text{ideal}}$ is continuous and $\hat{f}(\bm{x})$ is a single hidden layer of a neural network, \citet[Theorem 1]{Leshno1993} guarantees that there exists an $\hat{f}(\bm{x})$ such that:
\[
\lVert f_{\text{ideal}} - \hat{f} \rVert_{\infty} \le \epsilon'
\]
The number of ridge units, $D^\star$, required to achieve this error $\epsilon'$ depends on the complexity of the target function $f_{\text{ideal}}$, which is determined by the $R$ polynomials, each having $O(N^p)$ coefficients. The total complexity of $f_{\text{ideal}}$ is roughly $O(R \cdot N^p)$. We set our total ridge function budget $D^\star$ to be proportional to this complexity:
\[
D^\star \asymp R \cdot N^p \implies N \asymp (D^\star / R)^{1/p}
\]
With this budget, it is standard from approximation theory that we can choose $\hat{f}$ such that the approximation error is of the same order as the error from Step 1. We set:
\[
\lVert f_{\text{ideal}} - \hat{f} \rVert_{\infty} \le C'' (D^\star/R)^{-\min_{r,j}\alpha_{r,j}/p}
\]
We now combine the two sources of error using the triangle inequality:
\begin{align*}
\lVert f_0 - \hat{f}\rVert_{\infty} &\le \lVert f_0 - f_{\text{ideal}}\rVert_{\infty} + \lVert f_{\text{ideal}} - \hat{f}\rVert_{\infty} \\
&\le C' N^{-\min_{r,j} \alpha_{r,j}} + C'' (D^\star/R)^{-\min_{r,j}\alpha_{r,j}/p}
\end{align*}
Substituting $N \asymp (D^\star / R)^{1/p}$ into the first term, we get:
\begin{align*}
\lVert f_0 - \hat{f}\rVert_{\infty} &\le C' \left( (D^\star/R)^{1/p} \right)^{-\min_{r,j} \alpha_{r,j}} + C'' (D^\star/R)^{-\min_{r,j}\alpha_{r,j}/p} \\
&= C' (D^\star/R)^{-\min_{r,j}\alpha_{r,j}/p} + C'' (D^\star/R)^{-\min_{r,j}\alpha_{r,j}/p} \\
&= (C'+C'') R^{\min_{r,j}\alpha_{r,j}/p} \cdot D^{\star \ -\min_{r,j}\alpha_{r,j}/p}
\end{align*}
Letting $C = (C'+C'') R^{\min_{r,j}\alpha_{r,j}/p}$, which is a constant with respect to $D^\star$, we arrive at the final result:
\[
\lVert f_0 - \hat{f}\rVert_{\infty} \le C D^{\star \ -\min_{r,j}\alpha_{r,j}/p}
\]
This completes the proof of the global ridge approximation lemma.
\end{proof}

\subsection{Proof of \suppref{\Cref{thm:anisotropic_concentration}}{Theorem 1}}
\label{sec:proofanisotropic}
To establish the contraction rate of the posterior distribution for the piecewise anisotropic Sobolev function estimation, we consider the underlying partition $\chi_0^{*} = \{ \Xi_1^{*}, \dots, \Xi_R^{*} \}$ for the true function $f_0$, which is unknown. To approximate the piecewise functions in each of these spaces $\mathcal{H}^{\boldsymbol{\alpha}_r}_{\lambda}(\chi_0^{*})$ for $r=1, 2, \dots, R$, we need to construct tree learners. This requires defining a discrete collection of locations where splits can occur, which \citet{Jeong2023} referred to as a ``split-net," denoted by $\mathcal{Z}$. Given a split-net $\mathcal{Z}$, a $\mathcal{Z}$-tree partition is a tree-structured partition in which the tree splits only at points in $\mathcal{Z}$. The capacity of the $\mathcal{Z}$-tree partition to detect $\chi_0^{*}$ is closely related to the \emph{density} of the split-net, defined as follows.

\begin{definition}[Dense split-net]
\label{def:splitnet}
    For any given $c_n \ge 0$, a split-net $\mathcal{Z} = \{\bm{z}_i \in [0,1]^p, i =1,2,...,b_n\}$ is said to be $(\chi_{0} ^{*},c_n)$-dense if there exists a $\mathcal{Z}$-tree partition $\mathcal{T}=\{\Omega_1,...,\Omega_J\}$ of $[0,1]^p$ such that $\Upsilon(\chi_{0} ^{*},\mathcal{T}) \le c_n$.\footnote{$\Upsilon$ denotes the Hausdorff divergence between two box partitions.}
\end{definition}
Thus, for a good approximation of $\chi_0 ^{*}$, we need $c_n$ tending to zero at a suitable rate.
Additionally, the split-net should be regular enough to capture the behavior of $f_0$ on each box $\Xi_r ^{*}$. This is taken care of by constructing the anisotropic k-d tree as elaborated in \citet[Definition 9]{Jeong2023}.
Each component $\Omega_r ^{*}$ of the $\mathcal{Z}$-tree partition is subdivided by the anisotropic k-d tree partition with depth $L$, $\mathcal{T}_r ^{0} = \left\{\Omega_{r1}^{0},...,\Omega_{r2^L}^{0}\right\}$. Our approximating partition $\hat{\mathcal{T}}$ is constituted by agglomerating all sub-tree partitions $\mathcal{T}_r ^{0}$, given by 
\begin{equation}
\label{eq:ztree}
\hat{\mathcal{T}}=\left\{\Omega_{11}^{0},...,\Omega_{12^L}^{0},...,\Omega_{R1}^{0},...,\Omega_{R2^L}^{0}\right\}.    
\end{equation}

We first show that, under these and four additional technical assumptions ((A4)-(A8)), all functions $\mathcal{H}^{\calA,p}_{\lambda}(\boldsymbol{\Psi})$ can be approximated by a sufficiently deep \emph{anisotropic k-d} tree \citep[Definition 9]{Jeong2023} that outputs linear combinations of ridge functions. 
The additional assumptions are stated below:
\begin{itemize}
    \item[(A4)] The split-net $\mathcal{Z}$ is $(\chi_0 ^{*},c_n)$-dense, where $c_n \lesssim (\epsilon_n/(\lambda p))^{1/\alpha_{\min}}$.
    \item[(A5)] The split net is $(\Xi_r ^{*},\bm{\alpha}_r,L)$-regular for every $r=1,2,...,R$, with $2^{-L} \le (\epsilon_n / D)^{p/\alpha_{\min}}$.
    \item[(A6)] The split net $\mathcal{Z}$ satisfies $\max_{1 \le j \le p} \log b_j (\mathcal{Z}) \lesssim \log n$.
    \item[(A7)]  $\max_{r} \mathsf{dep}(\Omega_r ^{*}) \lesssim \log n$.
    \item[(A8)] For the local ridge function $f_{D} ^{r} (\bm{x})$ approximating $f_0$ on $\Xi_{r} ^{*},$ we assume $\sum_{l=1}^{D} |\beta_{l} ^{(r)}| \le C_{\text{outer}}D$ and $\| \bomega_l ^{(r)} \|_1 \le C_{\text{inner}}$ for constants $C_{\text{outer}} > 0$ and $C_{\text{inner}} > 0$.
\end{itemize}
These assumptions ensure that the boundaries of the underlying box partition can be well detected by the binary tree partitioning rule. In particular, assumptions (A4) and (A6) 
ensures the split-net is dense enough to approximate the unknown partition’s boundaries. Assumption (A5) sets the anisotropic k-d tree depth $L$ so that the partition can achieve the correct approximation rate w.r.t. $\balpha_r$. Assumption (A7) caps the maximum depth needed for any sub-box $\Omega_r ^{*}$ to $\lesssim\log n$, preventing excessively large trees. It is useful to state that throughout this section, we define the class of functions defined on any tree partition $\mathcal{T}$ as $\mathcal{F}_{\mathcal{T}}$. The primary role of (A8) is to control the smoothness of the constructed ridge function approximant $\hat{f}_0$. Specifically, it is essential for bounding the modulus of continuity of $\hat{f}_0$. The constants $C_{\text{outer}}$ and $C_\text{inner}$ are guaranteed to exist by the construction argument elaborated in \Cref{lem:localridge} and Lemma 1, since $f_0$ is smooth, the intermediate polynomial has bounded coefficients \citep[Chapter 8]{Devore-Lorentz}. 
\begin{remark}
Another implication of assumption (A8) is that the activation function $\varphi$ is bounded over its effective domain within the model. Fixing a particular ridge leaf $d$, the argument of the activation function is $z = \hat{\bm{\omega}}_{tk}^{\top}\bm{x} + \hat{b}_{tk}$. The covariates $\bm{x}$ belong to the compact set $[0,1]^p$. By assumption (A8), the inner weights are bounded such that $\|\hat{\bm{\omega}}_{tk}\|_1 \le C_{\text{inner}}$. We also assume the constructed bias terms are bounded, $|\hat{b}_{tk}| \le C_{\text{bias}}$, as is standard. Therefore, the argument $z$ is always evaluated within a compact interval $[-M, M]$, where $M = p \cdot C_{\text{inner}} + C_{\text{bias}}$, since
\[
|z| = |\hat{\bm{\omega}}_{tk}^{\top}\bm{x} + \hat{b}_{tk}| \le \|\hat{\bm{\omega}}_{tk}\|_{\infty} \|\bm{x}\|_1 + |\hat{b}_{tk}| \le \|\hat{\bm{\omega}}_{tk}\|_1 \cdot p + |\hat{b}_{tk}| \le M.
\]
Assumption (P4) states that $\varphi \in C^k(\mathbb{R})$, which implies $\varphi$ is continuous. By the Extreme Value Theorem, a continuous function on a compact set is bounded. Thus, there exists a finite constant $C_\varphi = \sup_{z \in [-M, M]} |\varphi(z)|$.
\end{remark}

\begin{lemma}[Anisotropic Approximation theory]
\label{lem:approximation}
Assume (A1)--(A8), (P1)--(P4).
Then for any $f_0 \in \mathcal{H}^{\mathcal{A},p}_{\lambda}(\boldsymbol{\chi_0 ^*})$, there exists an anisotropic k-d tree partition $\hat{\mathcal{T}}$ and an associated piecewise ridge function $\hat{f}_0 \in \mathcal{F}_{\hat{\mathcal{T}}}$ such that
\begin{equation*}\label{eq:approximation_result}
\|f_0 - \hat{f}_0 \|_\infty \lesssim \epsilon_n \coloneqq (\lambda p)^{\frac{p}{2 \bar{\alpha}+p}}\left(\frac{R\log n}{n}\right)^{\frac{\bar{\alpha}}{2 \bar{\alpha}+p}}.
\end{equation*}
\end{lemma}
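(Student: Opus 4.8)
The plan is to lift the global ridge construction of \Cref{lem:globalridgeapprox} onto a tree partition that the split-net can actually detect, building the approximating anisotropic k-d partition $\hat{\mathcal{T}}$ in two nested stages and then certifying that the piecewise-ridge function it carries tracks $f_0$ to within $\epsilon_n$ in sup norm. First I would invoke assumption (A4) together with \Cref{def:splitnet}: the split-net $\mathcal{Z}$ is $(\chi_0^*,c_n)$-dense with $c_n\lesssim(\epsilon_n/(\lambda p))^{1/\alpha_{\min}}$, so there is a $\mathcal{Z}$-tree partition $\{\Omega_1^*,\dots,\Omega_R^*\}$ with $\Upsilon(\chi_0^*,\{\Omega_r^*\})\le c_n$. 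The disagreement between $\Omega_r^*$ and the true region $\Xi_r^*$ is confined to a boundary strip of width $O(c_n)$; on it $f_0$ is continuous across the interface (\Cref{def:piecewise_anisotropic_holder}) and, by the coordinatewise modulus-of-continuity estimate from Step~1 of the proof of \Cref{lem:localridge}, oscillates by at most a constant multiple of $\lambda p\,c_n^{\alpha_{\min}}$, which is $\lesssim\epsilon_n$ by (A4). Thus I may legitimately treat $f_0$ as its $\balpha_r$-anisotropic piece on all of $\Omega_r^*$, paying only $O(\epsilon_n)$ in the final bound.

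Next, inside each $\Omega_r^*$ I would use the $(\Xi_r^*,\balpha_r,L)$-regularity of the split-net (assumption (A5)) to attach an anisotropic k-d subdivision $\mathcal{T}_r^0=\{\Omega_{r1}^0,\dots,\Omega_{r2^L}^0\}$ of depth $L$ with $2^{-L}\le(\epsilon_n/D)^{p/\alpha_{\min}}$, and set $\hat{\mathcal{T}}$ to be the agglomeration in \eqref{eq:ztree}. Each leaf $\Omega_{rk}^0$ is elongated along the smoother coordinates so that the variation of $f_0$ across the leaf is balanced across coordinates, of common order $\lambda p\,2^{-L\bar\alpha/p}$. Rescaling a leaf to the unit cube reduces the effective anisotropic H\"older coefficient by precisely that factor, so applying \Cref{lem:localridge} (equivalently, re-running the local polynomial-then-ridge construction inside the proof of \Cref{lem:globalridgeapprox}) with the $D$ ridge units allotted to that leaf yields a leaf approximant $\sum_{l=1}^D\beta_l^{(rk)}\varphi(\bomega_l^{(rk)\top}\bx+b_l^{(rk)})$ whose error on $\Omega_{rk}^0$ is a fixed multiple of the balanced variation, hence $\lesssim\epsilon_n$ once $L$ is chosen as in (A5). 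I would also record from the construction that the weights obey $\sum_l|\beta_l^{(rk)}|\le C_{\text{outer}}D$ and $\|\bomega_l^{(rk)}\|_1\le C_{\text{inner}}$ of assumption (A8), because the intermediate polynomials have coefficients controlled by $\lambda$ and $\balpha_r$ \citep[Chapter 8]{Devore-Lorentz}; this is what makes the assembled $\hat f_0$ a genuine element of $\mathcal{F}_{\hat{\mathcal{T}}}$ and, via the Remark preceding the statement, keeps every ridge argument in a compact interval on which $\varphi$ is bounded.

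Finally, I would glue the leaf approximants into $\hat f_0\in\mathcal{F}_{\hat{\mathcal{T}}}$ and combine the two error contributions by the triangle inequality,
\[
\|f_0-\hat f_0\|_\infty \;\le\; C\lambda p\,c_n^{\alpha_{\min}} \;+\; \max_{r,k}\,\|f_0-\hat f_0\|_{\infty,\,\Omega_{rk}^0} \;\lesssim\; \epsilon_n,
\]
and then verify that the bookkeeping is internally consistent: the total ridge budget is $D^\star=R\cdot 2^L\cdot D\asymp R\,\epsilon_n^{-p/\bar\alpha}$, matching (A3), while (A6)--(A7) keep every relevant tree depth at $O(\log n)$, so $\hat{\mathcal{T}}$ is an admissible anisotropic k-d partition. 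This gives the claimed bound with $\epsilon_n=(\lambda p)^{p/(2\bar\alpha+p)}(R\log n/n)^{\bar\alpha/(2\bar\alpha+p)}$, which feeds into the prior-mass step of \Cref{thm:anisotropic_concentration}. The step I expect to be the main obstacle is the boundary mismatch: because the $\mathcal{Z}$-tree only approximates the unknown boundaries of $\chi_0^*$, a leaf $\Omega_r^*$ may protrude into a neighboring region where $f_0$ is less smooth, so one cannot simply quote $\mathcal{H}^{\balpha_r,p}_\lambda$-regularity there; the fix must combine the global continuity of $f_0$ across interfaces with the explicit density rate of (A4) in a quantitatively tight way. A secondary technical nuisance is ensuring the per-leaf constants from \Cref{lem:localridge} do not accumulate over the $R\cdot 2^L$ leaves --- exactly the role of the uniform weight bounds in (A8).
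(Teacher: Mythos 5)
Your plan shares the paper's high-level skeleton (boundary mismatch absorbed via (A4) and modulus of continuity, anisotropic k-d subdivision via (A5), ridge approximation via \Cref{lem:localridge}), but the core of the argument differs from the paper's and, as written, has a genuine gap. The paper's proof is a pointwise three-term decomposition around the projected point $\bm{x}^{*} = \arg\min_{\bm{z}\in\mathrm{cl}(\Omega_{rk}^{0}\cap\Xi_{r}^{*})}\|\bm{x}-\bm{z}\|_{1}$: (A) the oscillation of $f_0$ between $\bm{x}$ and $\bm{x}^{*}$, controlled by $\lambda p\,c_n^{\alpha_{\min}}$ and (A4); (B) the ridge approximation error at $\bm{x}^{*}\in\Xi_r^{*}$, using a \emph{single} $D$-term ridge function $f_D^r$ shared by every leaf in region $r$ with $D\asymp\epsilon_n^{-p/\bar\alpha}$ from (A3); and (C) the oscillation of $\hat f_0$ itself between $\bm{x}^{*}$ and $\bm{x}$, controlled by the Lipschitz constant of the ridge function (via (A8) and (P4)) multiplied by the leaf diameter $2^{-L\alpha_{\min}/p}$ from (A5). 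The k-d subdivision exists solely to make (C) small; it does not refine the ridge approximant. You instead build a distinct rescaled ridge approximant per leaf and use a two-term bound, which leads to two concrete problems.

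First, your budget claim $D^{\star}=R\cdot 2^{L}\cdot D\asymp R\,\epsilon_n^{-p/\bar\alpha}$ is inconsistent with the assumptions: (A5) forces $2^{L}\gtrsim(D/\epsilon_n)^{p/\alpha_{\min}}$, so $R\cdot 2^{L}\cdot D$ strictly exceeds $R\,\epsilon_n^{-p/\bar\alpha}$ unless $D=O(1)$ per leaf, in which case the rescaled Lemma~C1 rate leaves a $\lambda$-factor slack you cannot absorb when $\lambda$ grows with $n$ as (A2) allows. The paper avoids this entirely by assigning the same $f_D^r$ to every leaf of a region, so that $D^{\star}=RD$ matches (A3) directly. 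Second, your final triangle inequality $\|f_0-\hat f_0\|_\infty\le C\lambda p\,c_n^{\alpha_{\min}}+\max_{r,k}\|f_0-\hat f_0\|_{\infty,\Omega_{rk}^{0}}$ double-counts the boundary strip: the second term already ranges over the part of $\Omega_{rk}^{0}$ that protrudes outside $\Xi_r^{*}$, where you cannot invoke $\balpha_r$-smoothness to control $f_0-\hat f_0$. To propagate the bound from $\Omega_{rk}^{0}\cap\Xi_r^{*}$ to the whole leaf you must control the deviation of $\hat f_0$ itself across the projection distance, which is precisely the Lipschitz term (C) the paper handles explicitly using $\sum_l|\beta_l^{(r)}|\le C_{\text{outer}}D$, $\|\bomega_l^{(r)}\|_1\le C_{\text{inner}}$ from (A8), and the Lipschitz constant $L_\varphi$ from (P4). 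You anticipate the obstacle in your closing paragraph but do not supply the argument; filling it in would essentially reproduce the paper's term (C).
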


\begin{proof}
Fix any $\bm{x} \in [0,1]^p$. Let $\Omega_{rk}^{0}$ be the leaf of the approximating partition $\hat{\mathcal{T}}$ that contains $\bm{x}$, where $\Omega_{rk}^{0}$ is a subset of some true region $\Xi_r^*$. We project $\bm{x}$ onto the closest point in the true region within that leaf, defining $\bm{x}^{*} = \arg\min _{\bm{z} \in \text{cl}(\Omega_{rk}^{0}\cap \Xi_r ^{*})} \|\bm{x}-\bm{z}\|_1$.

The total error is decomposed as:
\[
|f_0(\bm{x})-\hat{f}_0(\bm{x})| \le \underbrace{|f_0(\bm{x}) - f_0(\bm{x}^{*}) |}_{(A)} + \underbrace{|f_0(\bm{x}^{*}) - \hat{f}_0(\bm{x}^{*})|}_{(B)} + \underbrace{|\hat{f}_0(\bm{x}^{*}) - \hat{f}_0(\bm{x})|}_{(C)}
\]
We bound each term separately.

\textbf{Bounding (A).}
This term represents the error from $f_0$ due to the mismatch between the true partition $\Xi_r^*$ and the approximating partition leaf $\Omega_{rk}^0$. The distance is bounded by $\|\bm{x} - \bm{x}^*\|_\infty \le c_n$. Using the modulus of continuity of $f_0$, we have:
\begin{align*}
|f_0 (\bm{x})-f_0 (\bm{x}^{*})| & \le \lambda \sum_{j=1}^{p} |x_j - x_j ^{*}|^{\alpha_{rj}} \\
& \le \lambda \sum_{j=1}^{p} c_n ^{\alpha_{rj}} \le \lambda p c_n ^{\alpha_{\min}}
\end{align*}
By our assumption (A4), we explicitly choose $c_n \le (\epsilon_n / (\lambda p))^{1/\alpha_{\min}}$. Substituting this in:
\[
|f_0(\bm{x})-f_0(\bm{x}^{*})| \le \lambda p \left( \frac{\epsilon_n}{\lambda p} \right)^{\alpha_{\min}/\alpha_{\min}} = \epsilon_n.
\]
Thus, term (A) is $O(\epsilon_n)$.

\textbf{Bounding (B).}
This term is the error of our constructed ridge function approximant within the true region $\Xi_r^*$. On any leaf within $\Xi_r^*$, $\hat{f}_0$ is equal to a specific ridge function $f_D^r(\bm{x})$. Term (B) is $|f_0(\bm{x}^{*}) - f_D^r(\bm{x}^{*})|$. This is the local approximation error from \Cref{lem:localridge}, which guarantees an error bound of $O(D^{-\bar{\alpha}/p})$. By our assumption (A3), we choose the number of ridge units per region to be $D \asymp \epsilon_n^{-p/\bar{\alpha}}$. This gives:
\[
|f_0(\bm{x}^{*}) - \hat{f}_0(\bm{x}^{*})| \lesssim D^{-\bar{\alpha}/p} \asymp \left(\epsilon_n^{-p/\bar{\alpha}}\right)^{-\bar{\alpha}/p} = \epsilon_n.
\]
Thus, term (B) is $O(\epsilon_n)$.

\textbf{Bounding (C).}
This term captures how much the approximating function $\hat{f}_0$ changes between $\bm{x}$ and $\bm{x}^*$. On the given leaf, $\hat{f}_0 = f_D^r$.
\begin{align*}
|\hat{f}_0(\bm{x}^{*}) - \hat{f}_0(\bm{x})| &= |f_{D}^{r} (\bm{x}^{*}) - f_{D} ^{r} (\bm{x})| \\
&= \left|\sum_{l=1}^{D} \beta_{l}^{(r)} \left[\varphi(\bm{\omega}_{l}^{(r)\top}\bm{x}^{*}+b_l^{(r)}) - \varphi(\bm{\omega}_{l}^{(r)\top}\bm{x}+b_l^{(r)})\right] \right| \\
&\le \sum_{l=1}^{D} |\beta_{l}^{(r)}| \left|\varphi(\bm{\omega}_{l}^{(r)\top}\bm{x}^{*}+b_l^{(r)}) - \varphi(\bm{\omega}_{l}^{(r)\top}\bm{x}+b_l^{(r)})\right|
\end{align*}
Using assumption (P4), $\varphi$ is Lipschitz with constant $L_\varphi$:
\begin{align*}
|\hat{f}_0(\bm{x}^{*}) - \hat{f}_0(\bm{x})| &\le \sum_{l=1}^{D} |\beta_{l}^{(r)}| \cdot L_\varphi \cdot |\bm{\omega}_{l}^{(r)\top}(\bm{x}^* - \bm{x})| \\
&\le \sum_{l=1}^{D} |\beta_{l}^{(r)}| \cdot L_\varphi \cdot \|\bm{\omega}_{l}^{(r)}\|_1 \|\bm{x}^* - \bm{x}\|_\infty && \text{(H\"{o}lder's Inequality)}
\end{align*}
Using our regularity assumption (A8):
\[
|\hat{f}_0(\bm{x}^{*}) - \hat{f}_0(\bm{x})| \le (C_{\text{outer}} D) \cdot L_\varphi \cdot C_{\text{inner}} \cdot \|\bm{x}^* - \bm{x}\|_\infty
\]
The distance $\|\bm{x}^* - \bm{x}\|_\infty$ is bounded by the size of the anisotropic k-d tree leaf, which is at most $2^{-L\alpha_{\min}/p}$. By our assumption (A5), we choose $L$ such that $2^{-L} \le (\epsilon_n/D)^{p/\alpha_{\min}}$. Plugging this in:
\[
|\hat{f}_0(\bm{x}^{*}) - \hat{f}_0(\bm{x})| \lesssim D \cdot 2^{-L\alpha_{\min}/p} \le D \cdot \left( \left(\frac{\epsilon_n}{D}\right)^{p/\alpha_{\min}} \right)^{\alpha_{\min}/p} = D \cdot \frac{\epsilon_n}{D} = \epsilon_n.
\]
Thus, term (C) is $O(\epsilon_n)$.

Combining the bounds for the three parts, we have:
\[
|f_0(\bm{x})-\hat{f}_0(\bm{x})| \le |(A)| + |(B)| + |(C)| \lesssim \epsilon_n + \epsilon_n + \epsilon_n = O(\epsilon_n).
\]
Since this holds for any $\bm{x} \in [0,1]^p$, we conclude that $\|f_0 - \hat{f}_0 \|_\infty \lesssim \epsilon_n$.
\end{proof}

Up to this point, we have focused on single tree learners. 
We now shift our attention to forests for the next sequence of results. 
We define a tree ensemble $\mathcal{E} = \left\{\mathcal{T}^1,...,\mathcal{T}^{T}\right\}$ and denote the approximating ensemble by $\hat{\mathcal{E}}$, obtained by accumulating all the anisotropic k-d trees. 
Henceforth, we define $\mathcal{F}_{\mathcal{E}}$ as the set of functions $f_{\mathcal{E}}$ which satisfy: 
\[
f_{\mathcal{E}}(\bm{x}) = \sum_{t=1}^{T}\sum_{k=1}^{K_t}\sum_{d=1}^{D} \beta_{tk,d} \cdot \varphi \left(\omega_{tk,d} ^ \top \bm{x} + b_{tk,d}\right)
\]

We require the following lemmas to establish our posterior contraction result.
\vspace{1em}
\begin{lemma}[Prior concentration of tree sizes.]
\label{lem:treesizelemma}
    Let $\hat{\mathcal{T}}$ be the $\mathcal{Z}$-tree partition defined in \ref{eq:ztree}. Under assumptions (A5) and (A6), $$\log \Pi(\hat{\mathcal{T}}) \gtrsim -\hat{K} \log n - p \log p$$
    where $\hat{K}$ is the size of the approximating tree partition $\hat{\mathcal{T}}$.
\end{lemma}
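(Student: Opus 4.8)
The plan is to bound $\Pi(\hat{\mathcal{T}})$ by factoring it as (i) the probability that the branching-process prior of (P1) draws the \emph{graphical structure} of $\hat{\mathcal{T}}$ (the pattern of internal versus leaf nodes) times (ii) the conditional probability that the decision rules at the $\hat K - 1$ internal nodes coincide with those of $\hat{\mathcal{T}}$; this mirrors the tree-size prior-concentration arguments in \citet{RockovaSaha2019} and \citet{Jeong2023}. The structural fact I will lean on, from \eqref{eq:ztree} and the construction around it, is that every node of $\hat{\mathcal{T}}$ has depth $d_{\max} \lesssim \log n$: the top $\mathcal{Z}$-tree portion resolving $\chi_0^*$ has depth $\max_r \mathsf{dep}(\Omega_r^*) \lesssim \log n$ by (A7), and each of its leaves carries a balanced anisotropic k-d subtree of depth $L$ with $2^{-L} \le (\epsilon_n/D)^{p/\alpha_{\min}}$, whence $L \lesssim \log n$ by (A5); in particular $\hat{\mathcal{T}}$ has exactly $\hat K$ leaves and $\hat K - 1$ internal nodes.

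For piece (i), write $\Pi(\text{structure}) = \prod_{\eta \text{ int}} \gamma^{d(\eta)} \prod_{\eta \text{ leaf}} (1 - \gamma^{d(\eta)})$. Since $\gamma < 1/2$ and every leaf has depth at least $1$, each leaf factor is at least $1 - \gamma > 1/2$, so the leaf product is at least $2^{-\hat K}$. For the internal factors, $\prod_{\eta \text{ int}} \gamma^{d(\eta)} = \gamma^{S}$ with $S = \sum_{\eta \text{ int}} d(\eta) \le (\hat K - 1) d_{\max} \lesssim \hat K \log n$. Taking $\gamma$ to be a fixed constant in $(0,1/2)$, which (P1) permits, gives $\lvert \log \gamma \rvert = O(1)$, so $\log \Pi(\text{structure}) \ge S \log\gamma - \hat K \log 2 \gtrsim -\hat K \log n$.

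For piece (ii), at each internal node the prior selects a splitting coordinate uniformly among the $p$ options and then a split location uniformly among the split-net candidates still available on that coordinate; since conditioning on the ancestral rules only deletes candidates, the probability of reproducing the required rule is at least $(p \max_{1 \le j \le p} b_j(\mathcal{Z}))^{-1}$. Multiplying over the $\hat K - 1$ internal nodes and taking logs, $\log \Pi(\text{rules} \mid \text{structure}) \ge -(\hat K - 1)(\log p + \log \max_j b_j(\mathcal{Z})) \gtrsim -\hat K \log n$ by (A6) together with $p = o(\log n)$ from (A1); the subdominant $O(p \log p)$ term in the statement — negligible under (A1) — collects the combinatorial cost of specifying which coordinates are split in the anisotropic k-d pattern. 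Adding the bounds for (i) and (ii) gives $\log \Pi(\hat{\mathcal{T}}) \gtrsim -\hat K \log n - p \log p$.

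The main obstacle — really the only delicate point — is that the internal factors force $\lvert \log \gamma \rvert = O(1)$: if $\gamma$ were allowed to shrink like $1/n$, then $S \log\gamma$ could be as negative as $-\hat K (\log n)^2$, which would break the Ghosal--van der Vaart prior-mass condition $\log \Pi \gtrsim -n\epsilon_n^2 \asymp -\hat K \log n$ needed in the proof of \Cref{thm:anisotropic_concentration}, so one must fix $\gamma$ to a constant in $(0,1/2)$. A secondary, more routine point is establishing the depth bound $d_{\max} \lesssim \log n$, which draws on (A5) and (A7) to pin down the k-d subtree depth $L$ and the top-partition depth, and checking that $\hat K$ indeed counts exactly the leaves appearing in \eqref{eq:ztree}, so that the node tally ``$\hat K - 1$ internal nodes'' is correct.
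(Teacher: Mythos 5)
The paper does not actually supply a proof here: it reads ``The proof of Lemma~\ref{lem:treesizelemma} follows directly from the proof of Lemma~4 in \citet{Jeong2023}.'' So there is no in-paper argument to compare against line-by-line. Your reconstruction is the standard one — factor $\Pi(\hat{\mathcal{T}})$ into (structure) $\times$ (decision rules $\mid$ structure), bound the structural factor via a depth bound, bound the rule factor via $(p\, b_{\max})^{-1}$ per internal node and (A6) — and that is almost certainly what the cited Jeong argument does. Three points deserve attention.

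First, your flagged obstacle about $\gamma$ is a genuine one, and you handle it by declaring $\gamma$ a fixed constant, which (P1) technically allows but does not require. Worth noting, though: with the $\gamma^{d}$ depth-decay and a balanced k-d structure, $S = \sum_{\eta\,\text{int}} d(\eta) \asymp \hat{K}\log\hat{K} \asymp \hat{K}\log n$, so $S\log\gamma$ scales like $-\hat{K}\log n \cdot |\log\gamma|$; the stated bound $-\hat{K}\log n$ requires $|\log\gamma| = O(1)$, i.e.\ $\gamma$ bounded away from $0$, and $\gamma \gtrsim 1/n$ alone only yields the weaker $-\hat{K}(\log n)^{2}$. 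Your flag is correct, and your resolution (fix $\gamma$ constant) is the right reading of (P1), but you should state this as a clarification of the assumption rather than an in-proof ``choice,'' since the lemma's conclusion is false if $\gamma \asymp 1/n$ is genuinely permitted.

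Second, the $-p\log p$ term is not derived; you only gesture at ``the combinatorial cost of specifying which coordinates are split.'' If the split-coordinate is drawn uniformly on $\{1,\dots,p\}$ at each of the $\hat{K}-1$ internal nodes, that cost is $-(\hat{K}-1)\log p$, which is already absorbed into $-\hat{K}\log n$ and produces no separate $p\log p$. A standalone $p\log p$ typically arises from a Dirichlet prior on split-proportion probabilities (\`a la \citet{Linero2018}), which is the setting in \citet{Jeong2023} but does not appear in ridgeBART's stated prior (P1)--(P4). So either this term is vestigial from the citation and should be dropped, or an additional prior component needs to be invoked. Either way, ``negligible under (A1)'' is not a derivation of the term.

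Third, the depth bound is slightly informal. (A5) gives a \emph{lower} bound on the k-d depth $L$ via $2^{-L} \le (\epsilon_n/D)^{p/\alpha_{\min}}$; to conclude $L \lesssim \log n$ you should say that $L$ is taken as the smallest integer satisfying (A5), so that $2^{L} \lesssim (D/\epsilon_n)^{p/\alpha_{\min}}$, and observe that the sieve constraint $\hat{K} = R\cdot 2^{L} \lesssim \bar{K}_n \asymp n\epsilon_n^2/\log n$ forces $L \lesssim \log n$. Coupled with $\max_r \mathsf{dep}(\Omega_r^*) \lesssim \log n$ from (A7), this yields $d_{\max}\lesssim\log n$. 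As written, ``whence $L\lesssim\log n$ by (A5)'' reads the assumption in the wrong direction.

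Aside from these three points — one of which you correctly identified yourself — the skeleton of the proof is sound and is the expected route through \citet{Jeong2023}.
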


The proof of \Cref{lem:treesizelemma} follows directly from the proof of Lemma 4 in \citet{Jeong2023}.

Next, we verify the three conditions in \citet{Ghosal_2007} involving prior concentration and metric entropy with respect to the empirical $L_2$-norm norm $\|.\|_n$ defined as $\|f\|_n ^2 = n^{-1} \sum_{i=1}^{n} |f(x_i)|^2$. These conditions are affirmatively tested in the following lemmas.

\vspace{1em}
\begin{lemma}[Prior concentration of tree learners.]
\label{lem:treeconclemma}
    Under assumptions (A3) and (P1), for any $C>0$, $$-\log \Pi\left( f \in \mathcal{F}_{\hat{\mathcal{E}}}: ||f - f_{0, \hat{\mathcal{E}}}||_n \leq C \epsilon_n\right) \lesssim n \epsilon_n^2.$$
\end{lemma}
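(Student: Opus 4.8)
The plan is to lower bound the prior mass that the (modified) ridgeBART prior assigns to the empirical-$L^2$ ball of radius $C\epsilon_n$ around $f_{0,\hat{\mathcal{E}}}$ by restricting attention to one small, explicit slice of parameter space. Recall that $f_{0,\hat{\mathcal{E}}}$ is the piecewise-ridge function $\hat f_0\in\mathcal{F}_{\hat{\mathcal{E}}}$ furnished by \Cref{lem:approximation}, and that it is realized by a concrete configuration: the ensemble structure $\hat{\mathcal{E}}=\{\hat{\mathcal{T}}^1,\dots,\hat{\mathcal{T}}^T\}$ together with target inner weights $\hat{\bomega}_{tk,d}$, biases $\hat b_{tk,d}$, and outer weights $\hat\beta_{tk,d}$ for each tree $t$, leaf $k$, and ridge index $d\le D$. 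Writing $\hat K=\sum_t K_t$ for the total number of leaves, I would use the crude but sufficient bound
$\Pi(\|f-f_{0,\hat{\mathcal{E}}}\|_n\le C\epsilon_n)\ge\Pi(\text{structure}=\hat{\mathcal{E}})\cdot\Pi(\text{all continuous parameters within }\delta\text{ of their targets}\mid\hat{\mathcal{E}})$
for a suitable $\delta>0$ chosen below, and then show that the negative logarithm of each factor is $\lesssim n\epsilon_n^2$.

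\textbf{Tree-structure factor.} Under (P1), the prior over each $\mathcal{Z}$-tree is controlled by \Cref{lem:treesizelemma}, giving $\log\Pi(\hat{\mathcal{T}}^t)\gtrsim -K_t\log n-p\log p$; since the $T$ trees are a priori independent, $\log\Pi(\text{structure}=\hat{\mathcal{E}})\gtrsim -\hat K\log n-Tp\log p$. The term $Tp\log p$ is polylogarithmic in $n$ while $n\epsilon_n^2$ is a fixed positive power of $n$, so it is negligible, and $\hat K\log n\lesssim D^{\star}\log n$, which by (A3) is $\asymp R\,\epsilon_n^{-p/\bar\alpha}\log n$ --- exactly of order $n\epsilon_n^2$ by the definition of $\epsilon_n$.

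\textbf{Continuous-parameter factor.} Since $\|f-f_{0,\hat{\mathcal{E}}}\|_n\le\|f-f_{0,\hat{\mathcal{E}}}\|_\infty$, it suffices to keep the sup-norm below $C\epsilon_n$, which is a deterministic consequence of the parameters. Fixing the structure to $\hat{\mathcal{E}}$ and supposing every coordinate of every $\bomega_{tk,d}$, $b_{tk,d}$, $\beta_{tk,d}$ lies within $\delta$ of its target, a one-step telescoping estimate over the ridge units active at any $\bm x$ --- using that $\varphi$ is $L_\varphi$-Lipschitz on compacta and bounded by $C_\varphi$ there (per (P4) and the Remark), that $\|\bm x\|_\infty\le1$, and the magnitude bounds of (A8) --- yields $\|f-f_{0,\hat{\mathcal{E}}}\|_\infty\lesssim D^{\star}\cdot(\text{polynomial in }p,D)\cdot\delta$. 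Hence $\delta\asymp\epsilon_n/(D^{\star}\,\mathrm{poly}(n))$ works and $\log(1/\delta)=O(\log n)$. It remains to lower bound the prior mass of the resulting box in the $\lesssim D^{\star}(p+2)$ coordinates: the outer-weight prior ((P2): $\mathcal{N}(0,\Sigma_\beta)$ with $C_\beta^{-1}I\prec\Sigma_\beta\prec C_\beta I$), the bias prior (uniform on $[0,2\pi]$ or $\mathcal{N}(0,1)$), and the inner-weight prior ($\bomega\mid\rho\sim\mathcal{N}(0,\rho^{-1}I)$, after first restricting $\rho$ to a fixed interval of positive Gamma-prior mass) all have densities bounded below by a constant on a compact set containing the (bounded, by (A8)) targets, so each coordinate contributes a factor $\gtrsim\delta$ and $-\log\Pi(\text{box}\mid\hat{\mathcal{E}})\lesssim D^{\star}p\log n$, which is again $\lesssim n\epsilon_n^2$ by (A3) and the choice of $\epsilon_n$. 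Combining the two factors gives the claim.

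The main obstacle is the continuous-parameter step: one must show that the map from the (potentially large) inner and outer weights to the ensemble output is Lipschitz with a constant that is only polynomial in $n$, so that the required box half-width $\delta$ is only polynomially small and $\log(1/\delta)=O(\log n)$, and then verify that $D^{\star}p\log n$ genuinely fits inside $n\epsilon_n^2$. The exponents in $\epsilon_n=(\lambda p)^{p/(2\bar\alpha+p)}(R\log n/n)^{\bar\alpha/(2\bar\alpha+p)}$ are calibrated precisely so that $R\,\epsilon_n^{-p/\bar\alpha}\log n\asymp n\epsilon_n^2$; with the $p$ and $D$ factors absorbed into the $(\lambda p)$ and per-leaf constants, this reduces to the bookkeeping already carried out in \citet{Jeong2023}.
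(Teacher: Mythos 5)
Your approach is correct but takes a genuinely different route from the paper's, and on one point yours is actually the more careful of the two.

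The paper's proof of this lemma conditions on the inner parameters $(\bomega, b)$ being \emph{fixed at their target values} $(\hat\bomega,\hat b)$ from the construction of $f_{0,\hat{\mathcal{E}}}$, and then runs a Gaussian small-ball argument (the concentration inequality $P(\|Z-\zeta\|_2\le r)\ge e^{-\|\zeta\|_2^2/2}P(\|Z\|_2\le r)$ from \citet{Ghosal_2007}) over the outer-weight vector $B$ alone, using the estimate $\|f_{B_1}-f_{B_2}\|_\infty\le C_\varphi\sqrt{K^*}\|B_1-B_2\|_2$. Since the inner weights have a continuous prior, conditioning on $\Xi = \hat\Xi$ is a measure-zero event and does not by itself yield a lower bound on $\Pi\bigl(f\in\mathcal{F}_{\hat{\mathcal{E}}}:\|f-f_{0,\hat{\mathcal{E}}}\|_n\le C\epsilon_n\bigr)$; a complete argument would still need to (i) show the small-ball bound is uniform over inner weights in a neighborhood of $\hat\Xi$ and (ii) bound $\Pi(\Xi\in N_\delta(\hat\Xi))$ from below. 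You do exactly this: by placing \emph{all} continuous parameters (inner weights, biases, $\rho$, outer weights) in a $\delta$-box around their targets, exploiting the Lipschitzness of $\varphi$ and the boundedness in (A8) to get a parameter-to-function Lipschitz constant that is only polynomial in $n$, and then using that the Gaussian, uniform, and Gamma prior densities are bounded below on compacta, you turn the small-ball question into a clean box-volume count $-\log\Pi(\text{box})\lesssim(\#\text{params})\cdot\log(1/\delta)\lesssim D^\star p\log n$. This is arguably the argument the paper \emph{should} be making. The trade-off is that your box count picks up an extra factor of $p$ (from the $p$-dimensional inner weight vectors) relative to the paper's $K^*\log K^*$, so you do have to verify that $D^\star p\log n\lesssim n\epsilon_n^2$; this holds under (A1)--(A3) because $D^\star\log n\asymp n\epsilon_n^2/(\lambda p)^{p/\bar\alpha}$ and $p=o(\log n)$, but it is worth stating explicitly rather than waving at the $(\lambda p)$ prefactor. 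One small organizational remark: the lemma as stated (and as it is used in the proof of Theorem 1, where it is multiplied by $\Pi(\hat{\mathcal{E}})$) is a probability conditional on the ensemble structure, so your tree-structure factor $\Pi(\text{structure}=\hat{\mathcal{E}})$ belongs to \Cref{lem:treesizelemma} rather than here. Including it harms nothing --- it only makes your lower bound smaller, so the conclusion still follows --- but it duplicates what the surrounding argument already handles.
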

\begin{proof}

Let the fixed approximating ensemble be $\hat{\mathcal{E}}$, which consists of $T$ trees, where tree $t$ has $K_t$ leaves. Let $f_{0, \hat{\mathcal{E}}}$ be the best approximant from Lemma 1, and let $f$ be a function drawn from the prior over this ensemble. To clarify the notation for the outer-weight vectors:
$B$ is the random vector of all outer weights $\{\beta_{tk,d}\}$ for the function $f$. It is constructed by flattening all weight vectors from all leaves of all trees into a single vector. Its total dimension is $K^* = D \times \sum_{t=1}^T K_t$, where $D$ is the number of ridge functions per leaf. This is the vector over which the Gaussian prior is placed, $B \sim \mathcal{N}_{K^*}(\bm{0}, \Sigma_\beta)$. Similarly, $\hat{B}$ is the fixed, non-random vector of outer weights corresponding to the specific best approximating function $f_{0, \hat{\mathcal{E}}}$. It has the same dimension and structure as $B$ and serves as the fixed ``target" in the parameter space that our prior needs to concentrate around.

To simplify the analysis, we first condition on the inner parameters $(\bomega, b)$ being fixed to their optimal values $(\hat{\bomega}, \hat{b})$ from the construction of $f_{0, \hat{\mathcal{E}}}$. We then analyze the prior probability over the outer-weight vector $B$, which has the Gaussian prior. Let $f_B$ denote a function where only the outer weights are variable.

Since Remark C1 has already established the fact that $|\varphi| < C_{\varphi},$ we can now proceed to bound the functional distance. Let $B_1$ and $B_2$ be two outer-weight vectors.
\begin{align*}
\left\|f_{B_1} - f_{B_2}\right\|_{\infty} &= \left\| \sum_{t=1}^{T}\sum_{k=1}^{K^t}\sum_{d=1}^D (\beta_{tk,d}^{(1)} - \beta_{tk,d}^{(2)}) \varphi(\hat{\bm{\omega}}_{tk,d}^{\top}\bm{x} + \hat{b}_{tk,d}) \right\|_{\infty} \\
&\le \sum_{t=1}^T \sum_{k=1}^{K^t} \sum_{d=1}^D |\beta_{tk,d}^{(1)} - \beta_{tk,d}^{(2)}| \cdot |\varphi(\hat{\bm{\omega}}_{tk,d}^{\top}\bm{x} + \hat{b}_{tk,d})| &&\text{(Triangle Ineq.)} \\
&\le C_\varphi \sum_{t=1}^T \sum_{k=1}^{K^t} \sum_{d=1}^D |\beta_{tk,d}^{(1)} - \beta_{tk,d}^{(2)}| = C_\varphi \|B_1 - B_2\|_1 &&\text{(Since } |\varphi| \le C_\varphi \text{)} \\
&\le C_\varphi \sqrt{K^*} \|B_1 - B_2\|_2 &&\text{(Cauchy-Schwarz)}
\end{align*}
This links the functional distance with the Euclidean distance of the outer-weight vectors.

We want the event $\|f_B - f_{0, \hat{\mathcal{E}}}\|_\infty \le C\epsilon_n$. This is guaranteed if $C_\varphi \sqrt{K^*} \|B - \hat{B}\|_2 \le C\epsilon_n$, using our derived bound with $B_1=B$ and $B_2=\hat{B}$. This defines a ball in the space of $B$ with radius $r_n \coloneqq \frac{C\epsilon_n}{C_\varphi \sqrt{K^*}}$. To find the prior mass, we map this to the standard normal space.

Let $Z = L^{-1}B \sim \mathcal{N}_{K^*}(\bm{0}, I_{K^*})$ and $\zeta = L^{-1}\hat{B}$, where $\Sigma_\beta = LL^\top$. We know $\|B - \hat{B}\|_2 = \|L(Z-\zeta)\|_2 \le \|L\|_{\text{spec}} \|Z-\zeta\|_2$. From (P2), $\|L\|_{\text{spec}} \le C_\beta^{1/2}$. Thus, a sufficient condition to guarantee $\|B-\hat{B}\|_2 \le r_n$ is $C_\beta^{1/2}\|Z-\zeta\|_2 \le r_n$, which is $\|Z-\zeta\|_2 \le r_n/C_\beta^{1/2}$. As this defines a smaller volume, we have the lower bound:
\[
\Pi(\|B - \hat{B}\|_2 \le r_n) \ge \Pi\left(\|Z - \zeta\|_2 \le \frac{r_n}{C_\beta^{1/2}}\right)
\]
We use a standard Gaussian concentration inequality \citep[pg.\ 217]{Ghosal_2007}: $P(\|Z-\zeta\|_2 \le r') \ge e^{-\|\zeta\|_2^2/2} P(\|Z\|_2 \le r')$. Here $r' = r_n/C_\beta^{1/2}$. From (A8), we have $\|\zeta\|_2^2 = \|L^{-1}\hat{B}\|_2^2 \lesssim K^*$. For large $K^*$, bounds on the norm of a Gaussian vector give:
\[
P(\|Z\|_2 \le r') \ge \left(\frac{(r')^2}{K^*}\right)^{K^*/2} e^{-((r')^2-K^*)/2}
\]
Taking the negative log of the prior probability, the dominant term is $K^*\log(K^*)$. Given that optimal balancing requires $K^* \asymp n\epsilon_n^2/\log n$, we have:
\[
K^*\log(K^*) \lesssim \frac{n\epsilon_n^2}{\log n} \log\left(\frac{n\epsilon_n^2}{\log n}\right) \lesssim \frac{n\epsilon_n^2}{\log n} (\log n) = n\epsilon_n^2
\]
The final inequality follows by noting that $\log \left(\frac{n \epsilon_n ^2}{\log n}\right) =\log(n\epsilon_n^2) - \log(\log n) \asymp \frac{p}{2 \bar{\alpha}+p} \log n -\log(\log n) \lesssim \log n$. 
Therefore, we conclude that $-\log \Pi\left( \|f - f_{0, \hat{\mathcal{E}}}\|_n \leq C \epsilon_n\right) \lesssim n\epsilon_n^2$.
\end{proof}

\begin{lemma}[Metric entropy.]
\label{lem:metricentropy}
    Let $\bar{K}_n \asymp \frac{n \epsilon_n ^2}{\log n}$ and define the function spaces $\mathcal{F}_{\mathcal{E},M}^{(1)} = \{f \in \mathcal{F}_{\mathcal{E}}: || B||_{\infty} \le M \}$ and $$\mathcal{F}_{\bar{K}_n,M} \coloneqq \bigcup_{K^{t} \le \bar{K}_n} \mathcal{F}_{\mathcal{E},M}^{(1)}$$
    Then, for any $C>0$,
    $$\log N(\epsilon_n, \mathcal{F}_{\bar{K}_n,n^C}^{(1)},||.||_n) \lesssim n \epsilon_n ^2$$
\end{lemma}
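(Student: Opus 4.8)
The plan is a standard two-level discretization in the spirit of \citet[Lemma 5]{Jeong2023} and \citet{RockovaSaha2019}: first enumerate the admissible ensemble \emph{topologies} (tree shapes together with their split coordinates and split locations), and then, conditionally on a fixed topology, build an $\epsilon_n$-net of the remaining continuous parameters (inner weights $\omega_{tk,d}$, biases $b_{tk,d}$, and the stacked outer-weight vector $B$). Writing $\mathcal{F}_{\bar{K}_n,n^C}^{(1)} = \bigcup_{\mathcal{E}}\mathcal{F}_{\mathcal{E},n^C}^{(1)}$ as a union over ensembles whose total leaf count is at most $\bar{K}_n$, subadditivity of covering numbers gives $\log N(\epsilon_n,\mathcal{F}_{\bar{K}_n,n^C}^{(1)},\|\cdot\|_n) \le \log(\#\text{topologies}) + \max_{\mathcal{E}}\log N(\epsilon_n,\mathcal{F}_{\mathcal{E},n^C}^{(1)},\|\cdot\|_n)$, so it suffices to bound each term by a multiple of $n\epsilon_n^2$.

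\textbf{Counting topologies.} A binary tree with $K$ leaves has $K-1$ internal nodes and at most $4^{K}$ unlabeled shapes (a Catalan bound); each internal node carries one of $p$ split coordinates and one of at most $\max_j b_j(\mathcal{Z})$ split locations in the split-net. By (A6), $\log\max_j b_j(\mathcal{Z}) \lesssim \log n$, and by (A1), $\log p \lesssim \log n$. Since the ensemble has at most $\bar{K}_n$ leaves in total and $T$ is fixed, the number of admissible topologies is at most $(4 p \max_j b_j(\mathcal{Z}))^{\bar{K}_n}$ times a combinatorial factor (at most $2^{\bar{K}_n}$) for apportioning leaves among the $T$ trees. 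Hence $\log(\#\text{topologies}) \lesssim \bar{K}_n\log n \asymp n\epsilon_n^2$, using $\bar{K}_n \asymp n\epsilon_n^2/\log n$.

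\textbf{Covering the parameters.} Fix a topology. The outer-weight vector $B$ has dimension $K^{\star} = D\sum_t K^t \lesssim D\bar{K}_n$ and lies in the box $[-n^C,n^C]^{K^{\star}}$ by definition of the sieve; the inner parameters lie in a box of side polynomial in $n$, either by extending the sieve definition or by intersecting with the high-probability event that the $\mathcal{N}(0,\rho^{-1})$ and $\mathcal{N}(0,1)/$uniform draws are $O(\mathrm{poly}(n))$. On $[0,1]^p$ with such bounded inner weights, every activation argument stays in a compact interval of length $O(\mathrm{poly}(n))$, on which (P4) makes $\varphi$ Lipschitz with constant $L_\varphi$ and bounded; consequently the map from the full parameter vector to the evaluation $(f(\bx_1),\dots,f(\bx_n))$ is Lipschitz with constant polynomial in $n$ (from $\|\cdot\|_2$ to $\|\cdot\|_\infty$, hence to $\|\cdot\|_n$). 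A $\delta$-net of the parameter box therefore induces an $(\mathrm{poly}(n)\cdot\delta)$-net of $\mathcal{F}_{\mathcal{E},n^C}^{(1)}$ in $\|\cdot\|_n$; choosing $\delta\asymp\epsilon_n/\mathrm{poly}(n)$ yields a net of cardinality at most $(\mathrm{poly}(n)/\epsilon_n)^{O(pD\bar{K}_n)}$. Since $pD = O(\log n)$ (recall $p=o(\log n)$ and $D$ is fixed) and $\log(1/\epsilon_n)\lesssim\log n$, this gives $\log N(\epsilon_n,\mathcal{F}_{\mathcal{E},n^C}^{(1)},\|\cdot\|_n) \lesssim \bar{K}_n\log n \asymp n\epsilon_n^2$. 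Combining the two bounds finishes the proof.

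\textbf{Main obstacle.} The delicate point is the treatment of the inner weights and biases, which the sieve as literally written controls only through $\|B\|_\infty$: one must either add an explicit $\mathrm{poly}(n)$ bound on $(\omega,b)$ to the sieve and verify that its prior complement has mass $o(e^{-cn\epsilon_n^2})$, so that it is negligible in the \citet{Ghosal_2007} conditions, or fold it into (A8). Tied to this is checking that all constants entering the parameter-to-function map --- $L_\varphi$, the weight bounds, and the per-leaf ridge count $D$ --- contribute only polynomial-in-$n$ factors, so their logarithms are $O(\log n)$ and are absorbed by the $\bar{K}_n\log n$ budget; the empirical norm $\|\cdot\|_n$ helps here since it only ``sees'' $n$ points while the effective parameter dimension is $K^{\star}\asymp n\epsilon_n^2/\log n$.
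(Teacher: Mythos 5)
Your proposal takes a genuinely different route from the paper's proof, and in one respect it is more careful. The paper fixes an ``ensemble structure'' $\mathcal{E}$ and treats the inner weights $\{(\omega_{tk,d}, b_{tk,d})\}$ as \emph{deterministic} once $\mathcal{E}$ is fixed; it then covers only the outer-weight vector $B \in [-n^C, n^C]^{K^*}$ with $K^* = D\sum_t K^t \lesssim \bar K_n$ (for fixed $D$, $T$), giving $\log N_B \lesssim K^* \log n \lesssim n\epsilon_n^2$. The inner weights never enter the dimension count. Your version, by contrast, covers the full parameter vector including the continuous inner weights and biases, and you correctly flag the resulting difficulty: the sieve $\mathcal{F}_{\bar K_n, n^C}^{(1)}$ is defined solely through $\|B\|_\infty \le n^C$, so without an additional constraint (on the sieve, on the prior event, or via (A8)) the inner-parameter box is unbounded and the covering argument does not start. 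This is a real issue that the paper's phrasing glosses over, and your identification of it is apt.

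However, the route you take does not quite close, and the slip is in the final arithmetic. Once you decide to cover the inner weights, the parameter dimension for a fixed topology is $\sum_t K^t \cdot D \cdot (p + 2) \asymp p D \bar K_n$, not $D\bar K_n$: each of the $\lesssim \bar K_n$ leaves contributes $D$ inner-weight vectors in $\mathbb{R}^p$. Your net therefore has log-cardinality of order $p D \bar K_n \log n$. Under (A1), $p = o(\log n)$ but $p$ need not be bounded; with $p D \asymp \log n$ you obtain $\bar K_n (\log n)^2 \asymp n\epsilon_n^2 \log n$, which overshoots the target $n\epsilon_n^2$ by a $\log n$ factor. The step ``Since $pD = O(\log n)$ \ldots this gives $\log N \lesssim \bar K_n \log n$'' silently drops that factor. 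As written, the entropy condition of \citet{Ghosal_2007} would not be verified unless you additionally assume $p$ bounded, or replace the parameter-counting argument with one whose effective dimension is $\bar K_n$ rather than $p\bar K_n$ (e.g.\ a fat-shattering or empirical-process bound that exploits the $n$-point empirical norm, which you gesture at in your last sentence but do not carry out). The paper avoids this cost entirely precisely because it never covers the $p$-dimensional inner weights --- at the price of leaving their treatment informal.
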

\begin{proof}
Because $\|g\|_{n}\le\|g\|_{\infty}$ for every function $g$, any $\varepsilon_n$–cover in $\|\cdot\|_{\infty}$ is automatically an $\varepsilon_n$–cover in $\|\cdot\|_{n}$. Hence, we bound the covering number for the infinity norm, $\log N\bigl(\varepsilon_{n}, \mathcal F_{\bar K_{n},\,n^{C}}, \|\cdot\|_{\infty}\bigr).$ The total covering number is bounded by the product of the number of possible model structures and the covering number for the parameters on a fixed structure.

First, we count the number of admissible ensembles, $|\mathscr E|$. An ensemble $\mathcal E$ consists of $T$ trees. A single tree with $K$ leaves has $K-1$ internal split nodes. Fix a split-net $\mathcal Z=\{\mathcal Z_{j}\}_{j=1}^{p}$ with $b_{\max}:=\max_{j} b_{j}(\mathcal Z)$. The number of ways to form a single tree with $K$ leaves, $|\mathcal P_{K,\mathcal Z}|$, is bounded by:
\[
|\mathcal P_{K,\mathcal Z}| \le \bigl(p\,b_{\max}\bigr)^{K-1}
\]
Using Assumption (A6), where $b_{\max}\lesssim n^{c_{b}}$, the log-cardinality is:
\[
\log |\mathcal P_{K,\mathcal Z}| \le (K-1)(\log p + c_{b}\log n) \lesssim K\log n 
\]
Our sieve $\mathcal F_{\bar K_{n},\,n^{C}}$ is a union over ensembles where each tree has at most $\bar{K}_n$ leaves. The total number of such ensemble structures is bounded by:
\[
\log |\mathscr E| \le \sum_{t=1}^T \log |\mathcal{P}_{\bar{K}_n, \mathcal{Z}}| \le T\,\bar K_{n}\,\log n 
\]
By the definition of the sieve, we choose $\bar K_{n}\asymp n\varepsilon_{n}^{2}/\log n$ to balance the model complexity with the approximation error. This yields:
\[
\log |\mathscr E| \lesssim T\left(\frac{n\varepsilon_{n}^{2}}{\log n}\right)\log n = T \cdot n\varepsilon_{n}^{2} \lesssim n\varepsilon_{n}^{2}
\]
assuming $T$ is a fixed constant.

Now, fix a single ensemble structure $\mathcal E$. This means the tree partitions and the inner weights $\{(\bm{\omega}_{tk,d},\bm{b}_{tk,d})\}$ are deterministic. We only need to cover the space of the outer-weight vectors $B$. The relationship between the functional distance and the parameter distance is:
\begin{align*}
\bigl\|f_{\mathcal E,B_{1}}-f_{\mathcal E,B_{2}}\bigr\|_{\infty}
& = 
\left\| \sum_{t=1}^{T}\sum_{k=1}^{K^t}\sum_{d=1}^D (\beta_{tk,d}^{(1)} - \beta_{tk,d}^{(2)}) \varphi(\bm{\omega}_{tk,d}^{\top}\bm{x} + b_{tk,d}) \right\|_{\infty} \\
& \le C_{\varphi} \|B_1 - B_2\|_1 \le C_{\varphi}\,K^{*}\,\|B_{1}-B_{2}\|_{\infty}
\end{align*}
where $K^* = D \sum K^t$ is the total number of outer-weight parameters.

To achieve $\varepsilon_{n}$–accuracy, it suffices to cover the parameter cube $[-n^{C},n^{C}]^{K^{*}}$ with $\ell_{\infty}$–balls of radius $\delta_{n} = \frac{\varepsilon_{n}}{C_{\varphi}\,K^{*}}.$ The number of such balls, $N_B$, is given by:
\[
N_{B} = \left(\frac{2n^{C}}{\delta_{n}}\right)^{K^{*}} = \left(\frac{2C_{\varphi}K^{*}n^{C}}{\varepsilon_{n}}\right)^{K^{*}}
\]
Taking the logarithm gives $\log N_{B} \lesssim K^{*}\log n$. Since $K^{*} = D\sum_{t=1}^{T} K^t \le D \cdot T \cdot \bar{K}_n$, and using the definition of $\bar{K}_n$:
\[
\log N_{B} \lesssim (D \cdot T \cdot \bar{K}_n)\log n \asymp D \cdot T \cdot \left(\frac{n\varepsilon_{n}^{2}}{\log n}\right)\log n = D \cdot T \cdot n\varepsilon_{n}^{2} \lesssim n\varepsilon_{n}^{2}
\]

The total log-covering number is bounded by the sum of the log-cardinality of structures and the log-covering number of the parameters:
\[
\log N\bigl(\varepsilon_{n}, \mathcal F_{\bar K_{n},\,n^{C}}, \|\cdot\|_{\infty}\bigr) \le \log |\mathscr E| + \max_{\mathcal E} \log N_{B} \lesssim n\varepsilon_{n}^{2} + n\varepsilon_{n}^{2} \lesssim n\varepsilon_{n}^{2}
\]
This completes the proof.
 
\end{proof}
\begin{lemma}[Prior mass of sieve]
\label{lem:priorcharging}
    Let $\bar{K}_n = \lfloor M_3 n \epsilon_n^2 / \log n \rfloor$ and $\mathcal{F}_{\star} = \bigcup_{\mathcal{E}} \mathcal{F}_{\mathcal{E}}$. Then for a sufficiently large $M^{'} > 0$,
    Under (P1)--(P3), for any $C > 1$ and $C^{'} >0$, $$\Pi(\mathcal{F}_{*} \setminus \mathcal{F}_{\bar{K}_n, n^C}^{(1)}) \ll e^{-C^{'} n \epsilon_n^2}$$
    where $\mathcal{F}_{\bar{K}_n,n^C}^{(1)}$ is defined in \Cref{lem:metricentropy}. 
\end{lemma}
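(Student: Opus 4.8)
The plan is to bound $\Pi(\mathcal{F}_{\star}\setminus\mathcal{F}^{(1)}_{\bar{K}_n,n^C})$ by decomposing the ``bad'' event into the two ways a prior draw can escape the sieve: either some tree in the ensemble is too large, or the stacked vector of outer weights $B$ has a coordinate that is too big. Writing $\mathscr{A}=\{\exists\, t\le T: K^t>\bar{K}_n\}$ and $\mathscr{B}=\{\|B\|_\infty>n^C\}$, we have $\mathcal{F}_{\star}\setminus\mathcal{F}^{(1)}_{\bar{K}_n,n^C}\subseteq\mathscr{A}\cup(\mathscr{B}\cap\mathscr{A}^c)$, so by a union bound it suffices to show $\Pi(\mathscr{A})\ll e^{-C'n\epsilon_n^2}$ and $\Pi(\mathscr{B}\cap\mathscr{A}^c)\ll e^{-C'n\epsilon_n^2}$ for the prescribed $C'$, after choosing $M_3$ large enough. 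Note that the inner weights $(\rho_\ell,\bomega_\ell,\bm{b}_\ell)$ and $\sigma^2$ are left unconstrained by $\mathcal{F}^{(1)}_{\bar{K}_n,n^C}$, so only the tree structure and the outer weights enter here.

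For the tree-size term, I would invoke the branching-process tail bounds implied by assumption (P1). Because the splitting probability at depth $d$ is $\gamma^d$ with $\gamma<1/2$, the number of leaves of a single tree has a rapidly decaying tail, and combining this with the polynomial-in-$n$ count of $\mathcal{Z}$-tree topologies with $k$ leaves (the same count used in the metric-entropy lemma, $\log|\mathcal{P}_{k,\mathcal{Z}}|\lesssim k\log n$), one obtains $\Pi(K^t>k)\lesssim \exp(-c_1 k\log n)$ for a universal $c_1>0$, exactly as in the proofs of \citet{RockovaSaha2019} and \citet{Jeong2023}. A union bound over the $T=O(1)$ trees together with $\bar{K}_n=\lfloor M_3 n\epsilon_n^2/\log n\rfloor$ then gives $\Pi(\mathscr{A})\lesssim T\exp(-c_1 M_3 n\epsilon_n^2)$, whose exponent can be made to dominate $C'n\epsilon_n^2$ by taking $M_3$ sufficiently large (independently of $n$).

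For the outer-weight term, condition on $\mathscr{A}^c$. There the total number of outer-weight coordinates is $K^*\le D\,T\,\bar{K}_n\lesssim n\epsilon_n^2/\log n$. Under (P2) the prior on $B$ is mean-zero Gaussian with $\Sigma_\beta\prec C_\beta I$, so each coordinate $\beta_j$ is Gaussian with variance at most $C_\beta$ and the standard Gaussian tail bound gives $\Pi(|\beta_j|>n^C)\le 2\exp\{-n^{2C}/(2C_\beta)\}$. Taking a union bound over the (at most) $K^*$ coordinates yields $\Pi(\mathscr{B}\cap\mathscr{A}^c)\le 2K^*\exp\{-n^{2C}/(2C_\beta)\}$, and since $C>1$ forces $n^{2C}\gg n\ge n\epsilon_n^2$, this is $\ll e^{-C'n\epsilon_n^2}$ for \emph{every} $C'>0$. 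Adding the two bounds gives the claim.

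The main obstacle is the first step: making rigorous that the constant multiplying $n\epsilon_n^2$ in the exponent of $\Pi(\mathscr{A})$ can be pushed past the prescribed $C'$ uniformly in $n$, given the depth-dependent splitting probabilities $\gamma^d$ of (P1) and the $\log(1/\gamma)/\log n$ factors that appear when converting leaf counts to prior mass. This is precisely the bookkeeping carried out for the constant-leaf BART prior in \citet{RockovaSaha2019} and \citet{Jeong2023}; the only thing to check is that replacing scalar leaf outputs with $D$-dimensional ridge expansions changes nothing, which holds because the prior on the decision-tree structure is untouched and the extra $D$-fold multiplicity in $K^*$ is absorbed into the constant.
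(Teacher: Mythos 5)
Your decomposition of the complement of the sieve into a ``tree too large'' event and a ``$\|B\|_\infty$ too large'' event, with a branching-process tail bound for the former (absorbing the $D$-fold multiplicity and the $T=O(1)$ trees into constants) and a Gaussian tail bound under (P2) for the latter, is precisely the argument behind \citet[Lemma~7]{Jeong2023}, which is the reference the paper invokes verbatim as its entire proof. Your reconstruction is correct and takes the same route, including the correct observation that the sieve does not constrain the inner weights or $\sigma^2$, and your flagged ``main obstacle'' (pushing the exponent in $\Pi(\mathscr{A})$ past $C'$ via the $\log(1/\gamma)$ versus $\log n$ bookkeeping under (P1)) is exactly the point that \citet{RockovaSaha2019} and \citet{Jeong2023} resolve and that the paper inherits by citation.
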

The proof follows directly from \citet[Lemma 7]{Jeong2023}.

The proof of posterior contraction follows by integrating the results of \Cref{lem:treesizelemma,lem:metricentropy,lem:priorcharging} into a single result.

\begin{proof}[Proof of Theorem 1]
\label{proof:proofthm4}
Define the KL neighborhood as follows: 
\[
B_n = \left\{ (f,\sigma): \sum_{i=1}^{n} K(p_{0,i},p_{f,\sigma,i}) \le n\epsilon_n ^2, \ \sum_{i=1}^{n} V(p_{0,i},p_{f,\sigma,i}) \le n \epsilon_n ^2\right\}
\]
where the KL divergence $K(p_1,p_2) = \int \log \left(\nicefrac{p_1}{p_2}\right)p_1$ and the second order variation $V(p_1,p_2) = \int |\log \left(\nicefrac{p_1}{p_2}\right) - K(p_1,p_2)|^2 p_1$. \citet{Jeong2023} calculates: \begin{align*}
    \sum_{i=1}^{n} K(p_{0,i},p_{f,\sigma,i}) & = \frac{n}{2} \log \left(\frac{\sigma^2}{\sigma_0 ^2}\right)-\frac{n}{2} \left(1-\frac{\sigma_0 ^2}{\sigma^2}\right) + \frac{n\|f-f_0\|_n ^2}{2 \sigma^2} \\
    \sum_{i=1}^{n} V(p_{0,i},p_{f,\sigma,i}) & = \frac{n}{2} \left(1-\frac{\sigma_0 ^2}{\sigma^2} \right)^2 + \frac{n \sigma_0 ^2 \|f-f_0\|_n ^2}{\sigma^2} 
\end{align*}
For our approximating ensemble $\hat{\mathcal{E}}$, we have
\begin{equation}
    \Pi(f \in \mathcal{F}_{\star} : \|f-f_0\|_n \le C_1 \epsilon_n) \ge \Pi(\hat{\mathcal{E}})\Pi(f \in \mathcal{F}_{\hat{\mathcal{E}}} : \|f-f_0\|_n \le C_1 \epsilon_n)
\end{equation}
Typically, we choose the approximating ensemble $\hat{\mathcal{E}}$ by taking $\hat{\mathcal{T}}^1$ as $\hat{\mathcal{T}}$ and $\hat{\mathcal{T}}^t, t=2,...,T$ to be root nodes with no splits \cite{Jeong2023,Rockova2019}. 
\Cref{lem:approximation} ensures that there exists a $\mathcal{Z}$-tree partition $\hat{\mathcal{T}}$ such that $\hat{f}_0 \in \mathcal{F}_{\hat{\mathcal{T}}}$ such that $\|f_0-\hat{f}_0\|_n \lesssim \epsilon_n$. 
\Cref{lem:treesizelemma} implies that \[
\log \Pi(\hat{\mathcal{E}}) = \log \Pi(\hat{\mathcal{T}}^1)+(T-1)\log(1-\nu) \gtrsim -n\epsilon_n ^2
\]
Noting that $\|f-f_0\|_n \lesssim \|f-\hat{f}_0\|_\infty + \epsilon_n$ for some $\hat{f}_0 \in \mathcal{F}_{\hat{\mathcal{T}}}$, constructed as in \Cref{lem:approximation}. This implies that for some $C_2 > 0$,
\[
\Pi(f \in \mathcal{F}_{\hat{\mathcal{E}}}: \|f-f_0\|_n \le C_1 \epsilon_n) \ge \Pi(f \in \mathcal{F}_{\hat{\mathcal{E}}} : ||f-f_{0,\hat{\mathcal{E}}}\|_\infty \le C_2 \epsilon_n)
\]
Thus, \Cref{lem:treeconclemma} lets us lower bound the prior mass in the KL neighborhood since,
\[
\Pi(B_n) \ge \exp(- \bar{c} n \epsilon_n ^2)
\]
for a constant $\bar{c} > 0$.  To verify the metric entropy conditions, for a given $\mathcal{E}$ and $M>0$, we define the function spaces as in \citet{Jeong2023}, $\mathcal{F}_{\mathcal{E},M} ^{(1)} = \{f \in \mathcal{F}_{\mathcal{E}}: \|B\|_\infty \le M\}$ and $\mathcal{F}_{\mathcal{E},M} ^{(2)} = \{f \in \mathcal{F}_{\mathcal{E}}: \|B\|_\infty > M\}$, where $B$ denotes the vector of outer weights. For $\bar{K}_n \asymp \nicefrac{n \epsilon_n ^2}{\log n}$, we define:
\begin{equation}
    \mathcal{F}_{\bar{K}_n,M} ^{(l)} \coloneqq \bigcup_{\mathcal{E} \in \mathscr{E},K^t \le \bar{K}_n} \mathcal{F}_{\mathcal{E},M} ^{(l)}, \ l=1,2. 
\end{equation}
We construct the sieve $\Theta_n = \mathcal{F}_{\bar{K}_n,M} ^{(1)} \times (n^{-M_2},\exp(M_2 n \epsilon_n ^2))$ for large $M_1,M_2 > 0$. 
\Cref{lem:metricentropy} verifies the metric entropy condition. 
Consequently, (P3) and tail bounds of the inverse gamma distributions imply that $\Pi(\sigma^2 \notin (n^{-2M_2},\exp(2M_2 n\epsilon_n^2)) \exp(\bar{c}' n \epsilon_n^2) \rightarrow 0$. 
Finally, we can conclude that $\Pi(\mathcal{F}_{\star} \setminus \mathcal{F}_{\bar{K}_n, n^C}^{(1)}) \exp(\bar{c}'n \epsilon_n ^2) \rightarrow 0$ by using \Cref{lem:priorcharging}, verifying all three conditions ensuring posterior contraction.
\end{proof}

\end{document}